\definecolor{myurlcolor}{rgb}{0,0,0.7}
\definecolor{myrefcolor}{rgb}{0.8,0,0}
\definecolor{purple}{RGB}{128,0,128}
\definecolor{ultramarine}{RGB}{63, 0, 255}
\definecolor{medblue}{RGB}{0, 0, 100}
\definecolor{googleblue}{RGB}{34, 0, 204}
\definecolor{panblue}{RGB}{0,24,150}
\definecolor{carmine}{RGB}{150, 0, 24}
\definecolor{gray}{RGB}{150, 150, 150}
\newcommand{\rob}{\color{black}}
\newcommand{\blk}{\color{black}}
\newcommand{\ket}[1]{\left| #1 \right>}
\newcommand{\bra}[1]{\left< #1 \right|}
\newcommand{\beq}{\begin{equation}}
\newcommand{\eeq}{\end{equation}}
\newcommand{\LOSR}[0]{\ifmmode\textup{\upshape LOSR}\else{\textup{\upshape LOSR}}\fi}
\newcommand{\LO}[0]{\ifmmode\textup{\upshape LO}\else{\textup{\upshape LO}}\fi}
\newcommand{\LOCC}[0]{\ifmmode\textup{\upshape LOCC}\else{\textup{\upshape LOCC}}\fi}
\newcommand{\LU}[0]{\ifmmode\textup{\upshape LU}\else{\textup{\upshape LU}}\fi}
\theoremstyle{plain}
\newtheorem{theo}{Theorem}
\newtheorem{prop}[theo]{Proposition}
\newtheorem{lem}[theo]{Lemma}
\newtheorem{cor}[theo]{Corollary}
\theoremstyle{definition}
\newtheorem{defn}{Definition}
\newcommand{\nimplies}{\centernot{\implies}}
\begin{document}
\title{Understanding the interplay of entanglement and nonlocality: 
motivating and developing a new branch of entanglement theory}

\author{David Schmid}
\affiliation{Perimeter Institute for Theoretical Physics, 31 Caroline Street North, Waterloo, Ontario Canada N2L 2Y5}
\affiliation{Institute for Quantum Computing and Department of Physics and Astronomy, University of Waterloo, Waterloo, Ontario N2L 3G1, Canada}
\affiliation{International Centre for Theory of Quantum Technologies, University of Gda\'nsk, 80-308 Gda\'nsk, Poland}
\author{Thomas C. Fraser} 
\affiliation{Perimeter Institute for Theoretical Physics, 31 Caroline Street North, Waterloo, Ontario Canada N2L 2Y5}
\affiliation{Institute for Quantum Computing and Department of Physics and Astronomy, University of Waterloo, Waterloo, Ontario N2L 3G1, Canada}
\author{Ravi Kunjwal}
\affiliation{Centre for Quantum Information and Communication, Ecole polytechnique de Bruxelles,
	CP 165, Universit\'e libre de Bruxelles, 1050 Brussels, Belgium}
\author{Ana Bel\'en Sainz}
\affiliation{International Centre for Theory of Quantum Technologies, University of Gda\'nsk, 80-308 Gda\'nsk, Poland}
\author{ Elie Wolfe}
\affiliation{Perimeter Institute for Theoretical Physics, 31 Caroline Street North, Waterloo, Ontario Canada N2L 2Y5}
\author{Robert W. Spekkens}
\affiliation{Perimeter Institute for Theoretical Physics, 31 Caroline Street North, Waterloo, Ontario Canada N2L 2Y5}

\maketitle
\begin{abstract}
A standard approach to quantifying resources is to determine which operations on the resources are freely available, and to deduce the partial order over resources that is induced by the relation of convertibility under the free operations. If the resource of interest is the nonclassicality of the correlations embodied in a quantum state, i.e., {\em entanglement}, then the common assumption is that the appropriate choice of free operations is Local Operations and Classical Communication (LOCC). We here advocate for the study of a different choice of free operations, namely, Local Operations and Shared Randomness (LOSR), and demonstrate its utility in understanding the interplay between the entanglement of states and the nonlocality of the correlations in Bell experiments. Specifically, we show that the LOSR paradigm  (i) provides a resolution of the {\em anomalies of nonlocality}, wherein partially entangled states exhibit more nonlocality than maximally  entangled states, (ii) entails new notions of genuine multipartite entanglement and nonlocality that are free of the pathological features of the conventional notions, and (iii) makes possible a resource-theoretic account of the self-testing of entangled states which generalizes and simplifies prior results. Along the way, we derive some fundamental results concerning the necessary and sufficient conditions for convertibility between pure entangled states under LOSR and highlight some of their consequences, such as the impossibility of catalysis for bipartite pure states. The resource-theoretic perspective also clarifies why it is neither surprising nor problematic that there are mixed entangled states which do not violate any Bell inequality. Our results motivate the study of LOSR-entanglement as a new branch of entanglement theory.
\end{abstract}
\tableofcontents 

\section{Introduction}\label{sec:introduction}

The term ``entangled'' was first used only for pure states of a composite system, and meant simply that the state was not a tensor product of states of the components~\cite{Schrodinger1935}.  Thus, for pure states, entanglement is synonymous with correlation.  When the quantum information community turned its attention to mixed states, the term ``entangled'' obtained a broader meaning, aimed at capturing the {\em nonclassicality} of correlations.  Specifically, a quantum state was taken to exhibit nonclassical correlations if it could not be expressed as a mixture of product states~\cite{werner1989quantum}, in which case it was called {\em nonseparable}. Shortly thereafter, it was realized that entangled states (both pure and mixed) could be used to implement useful information-processing tasks, and they began to be studied as a resource. Because the tasks being considered at the time mainly concerned the resourcefulness of entangled states in circumstances wherein the separated parties had access to classical communication channels (for instance, their use in simulating quantum channels via the teleportation protocol~\cite{teleportation}, and in enhancing communication via the dense coding protocol~\cite{densecoding}),
it was natural to define the interconvertibility preorder of entangled states relative to {\em Local Operations and Classical Communication} (LOCC)~\cite{BBPS96}.
This choice was consistent with the previous definition of the boundary between entangled and unentangled states, since the states one can prepare freely by $\LOCC$ are precisely the separable states.

However, $\LOCC$ is not the only choice of free operations that could have been used to formalize the notion of entanglement as a resource.  Consider the set of {\em Local Operations and Shared Randomness} (LOSR), wherein the parties have access to a common source of classical randomness, but no classical channels among them.  If one chooses LOSR as the set of free operations, one also reproduces the standard definition of entangled states as nonseparable states, since the free states relative to $\LOSR$ are also the separable ones.
 The ordering induced over entangled states by LOSR, however, is different from the one induced by LOCC, even in the case of pure states, as we will show. Consequently, quantification of entanglement relative to LOSR leads to quite different results than one obtains by quantifying it relative to LOCC.

To distinguish these two notions of entanglement, we will henceforth use the terms {\em LOCC-entanglement} and {\em LOSR-entanglement}.

In this article, we advocate for the development of the theory of LOSR-entanglement. 
 We motivate  its study by  demonstrating how much light it sheds on  the interplay of entanglement and nonlocality.  Specifically, we argue that for one of the most natural ways of conceptualizing a Bell scenario,
it is LOSR-entanglement that is the relevant resource of entanglement, rather than LOCC-entanglement.  We describe many ways in which conceptual puzzles regarding the interplay of entanglement and nonlocality are resolved in this approach.   
 The notion of LOSR-entanglement was originally proposed by Buscemi~\cite{Buscemi2012LOSR}, also in the context of Bell scenarios, but no further work has been done to date on characterizing it. We hope that the arguments provided herein for its importance  will motivate researchers to turn their attention to it.

The term `box' will here be used as jargon for a  multipartite process with only classical inputs and classical outputs   which can be realized by a common source 
    (either classical or quantum) which is shared among the parties and subjected to local measurements.  In other words, a box has the structure of a Bell experiment.\footnote{Such processes are termed `common-cause boxes' in Ref.~\cite{wolfe2020quantifying}.  Note that we are here only interested in boxes that are quantumly realizable, rather than the strictly larger set of boxes that are realizable in the framework of Generalized Probabilistic Theories~\cite{barrettGPT,hardy01}.
    Note also that the term `box' is sometimes used in a manner that does not presume that the internal causal structure is that of local measurements on a common source. This is done, for instance, by authors who would prefer to make no assumptions about a box's inner workings and to rely instead on assumptions about the spatio-temporal relations among its inputs and outputs.  We discuss this alternative approach in Sec.~\ref{NetworkStructure}.
}
Formally, a box is represented by the conditional probability distribution over its classical outputs given its classical inputs. Boxes can be divided into those whose correlational properties are classical and those for which they exhibit nonclassicality, where the division is based on whether the corresponding conditional probability distribution satisfies  all the Bell inequalities or not.  These two classes are conventionally termed ``local'' and ``nonlocal''.~\footnote{\label{footnote:nolocality}Although we are following a standard convention in referring to such nonclassicality of boxes as 
  ``nonlocality'', we note that this is merely for the sake of making our article easier to read. 
The conventional terminology is actually a potential source of confusion insofar as it suggests  a commitment to a view that many (including the present authors) do not endorse, namely, that the correct explanation of Bell inequality violations
involves superluminal causes.  See Sec.~2.3.1 of Ref.~\cite{wolfe2020quantifying} for more discussion of this issue.
 Note, furthermore, that the adjective `nonlocal' has sometimes been used to delineate those quantum states that can be used to violate a Bell inequality in the Bell scenario. As we argue in Appendix~\ref{Werner},
  however, being nonlocal in this sense should not be considered a necessary condition for the correlation properties of a quantum state to be judged nonclassical. 
 In any case, in this article, we will use the term ``nonlocal'' {\em solely} as a descriptor of boxes, where we will take it to signify nonclassicality of the correlations that the box describes.  }

Many authors have argued that entanglement and nonlocality are simply {\em different kinds of resources}.  Indeed, this is a standard response to some of the puzzling features of their interplay.  
 In this article, however, we take a different point of view. 
We argue that entanglement and nonlocality quantify the same notion of resourcefulness
 for the processes to which they apply, namely, the  {\em nonclassicality} of the correlational properties of those processes.  Entanglement refers to the nonclassicality of the correlational properties of {\em quantum states}, while nonlocality refers to the nonclassicality of the correlational properties of {\em boxes}.

Furthermore, we argue that whether given correlational properties (of a state or of a box) should be deemed nonclassical depends on the network connecting the parties. 

Prior results that appeared puzzling are seen, in retrospect, to be a result of mixing together notions of nonclassicality related to different networks. 
 We demonstrate that the choice of network structure that fits best with pre-existing ideas regarding the interplay of the nonclassicality of correlations of states (entanglement) and the nonclassicality of correlations of boxes (nonlocality) is the network where the parties merely share a common source.  In such a network, the relevant notion of entanglement is LOSR-entanglement.

We now summarize the rest of the article.   

In Section~\ref{sec:unifiedRT}, we explain why a resource theory that  encompasses {\em both} entangled states and nonlocal boxes---as different {\em types} of resources of nonclassicality of correlations---must be based on a type-independent constraint defining the free operations, which is then particularized to conversion relations among specific types, such as conversions from states and boxes. 
  (Note that one has {\em no choice} but to work within such a  mixed-type resource theory,  because conversions from entangled states to nonlocal boxes are {\em precisely} the focus of any study of the interplay of entanglement and nonlocality in Bell scenarios.)   
We explain why these free operations must {\em include} all of LOSR if the objective is to quantify the nonclassicality of the correlations.

The three following sections of the article explain how a reconsideration of Bell scenarios in terms of LOSR-entanglement (rather than LOCC-entanglement) resolves some problems and clarifies many issues from the Bell literature.

In Section~\ref{sec:anomalies}, we consider various conceptual puzzles
 surrounding 
  {\em anomalies of nonlocality}~\cite{Methot2006anomaly,AnomalyExtra2005Scarani,AnomalyExtra2008Brunner,vidick2011more,AnomalyExtra2011Junge,Acin2012randomnessvsnonlocality,AnomalyExtra2014Tan,AnomalyExtra2015Augusiak,AnomalyExtra2015Fonseca,AnomalyExtra2016Bowles,AnomalyExtra2017Kabel,AnomalyExtra2018Curchod,AnomalyExtra2018Bamps,AnomalyExtra2018Chitambar,AnomalyExtra2018Lipinska,AnomalyExtra2018Barasinski},
that is, situations wherein features of nonlocal boxes
 are found to be realizable by a partially entangled state but {\em not} by a maximally entangled state.
The lesson that has until now been drawn from these  anomalies is that, in spite of prior intuitions to the contrary, there are measures of the nonlocal yield of a state (i.e., the nonlocality of boxes that can be obtained from the state) that are not monotonically related to the state's entanglement.
We show, however, that there is a more productive conclusion to be drawn, namely, that the counterintuitive features of the anomalies are best understood to be a consequence of implicitly evaluating state to box conversions relative to LOSR but state to state conversions (and thus entanglement) relative to LOCC.  
The interplay of entanglement and nonlocality becomes intuitive if one instead takes the appropriate notion of entanglement to be the one based on LOSR.
One's prior intuitions are in fact vindicated when one proceeds in this fashion.  For instance, we show that
 every measure of the nonlocal yield of a given state {\em is} a valid measure of the state's LOSR-entanglement.

In Section~\ref{sec:multipartite}, we show that by focussing on LOSR-entanglement rather than LOCC-entanglement,
 one can resolve an analogous (but not previously articulated) anomaly concerning the interconversion between genuine 3-way entangled states and genuine 3-way nonlocal boxes. The resolution highlights the fact that the notion of genuine multipartite entanglement changes when entanglement is judged relative to LOSR rather than LOCC. Furthermore, our notion of genuine multipartite entanglement does not have a pathological property that the traditional notion exhibits, namely, the failure of the closure under tensor products of the states which are {\em not} genuinely multipartite entangled~\cite{navascues2020genuine,contreras2021,MultipartiteProblem2020Luo}.
 
In Section~\ref{sec:selftesting}, we demonstrate that well-known results concerning {\em self-testing} of entangled states~\cite{mayers1998quantum,mayers2003self,vsupic2020self,Scarani2019}  can be better understood in terms of the interplay of the nonlocality of boxes and the LOSR-entanglement of states.
While our resource-theoretic approach to self-testing coincides with the usual approach for pure states and convexly extremal boxes, we show that it provides a corrective to the standard definition for the case of mixed states and convexly nonextremal boxes. In doing so, we show that both of these appear in nontrivial instances of self-testing, despite previous claims to the contrary. We also show that the clarity of our principled approach to self-testing makes it easy to resolve ambiguous cases (e.g., chiral states), and to extend self-testing to novel scenarios and even novel types of resources and novel resource theories.

In Section~\ref{sec:RTLOSR}, we derive some general results about the preorder of pure entangled states under LOSR. These results will justify some of the critical steps in our arguments, so we will be referencing forward to them throughout the text.

Section~\ref{NetworkStructure} 
explains
 why the nonclassicality of correlations, and thus entanglement, is only defined {\em relative to} a network structure among the parties. We also note that one can study the interplay of entanglement and nonlocality in a network structure incorporating classical communication between the parties by restricting attention to  boxes that have space-like separated wings.  We contrast the nature of this interplay with the one observed for the network with common sources, and we highlight what needs to be done to properly formalize such an approach.

In Section~\ref{Werner}, we discuss the consequences of our approach for what is often taken to be a surprising aspect of the interplay of entanglement and nonlocality, namely, the fact that there are mixed entangled states that cannot violate any Bell inequality. 

 Finally, in Section~\ref{sec:Discussion}, we provide a discussion of the results and future work.

\section{Nonclassicality of correlations for states and boxes}\label{sec:unifiedRT}

Understanding the interplay between the entanglement of states and the nonlocality of boxes means understanding whether particular types and measures of entanglement of states are required to realize particular types and measures of nonlocality of boxes.
 In order to do so, one must articulate precisely what operations are assumed to be freely available in converting states to boxes.
  But in addition to this, one must specify what operations are freely available in achieving conversions among boxes,
  because the convertibility relations among boxes
    determine measures of nonlocality (via order-preserving functions, i.e., monotones), and one must also specify what operations can be used to achieve conversions
   among states, because the convertibility relations among states 
   determine measures of entanglement.  Consequently, there are three choices of free operations of interest---those governing box-to-box conversions, those governing state-to-box conversions, and those governing state-to-state conversions.
 
The free operations governing each of these type-specific varieties of conversion cannot be stipulated arbitrarily.  {\em They must be understood as being induced by some type-independent constraint that is then particularized to these cases.}  It has been argued elsewhere~\cite{coecke2016mathematical,marvian2016quantify} that a given choice of the set of free operations in a resource theory is physically interesting (as opposed to being of mere mathematical interest) only if it is motivated by some restriction on physical or experimental capabilities.\footnote{In the framework for resource theories set up in Ref.~\cite{coecke2016mathematical}, the nature of the physical restriction is presumed to have some structural properties, such as the free operations being closed under parallel and serial composition.}  
Insofar as a preparation of a resource of a given type is also a kind of conversion relation, namely, from the trivial type  (no systems) to the type of the resource, the boundary between free and nonfree for every different type of resource also cannot be stipulated arbitrarily but is induced by the type-independent constraint that is then particularized.

Historically, the question of whether a given entangled state can generate a given nonlocal box has 
been interpreted as the question of whether there exists some set of quantum measurements on each wing that can be implemented on the entangled state to yield the conditional probability distribution of outcomes given settings which is associated to the nonlocal box.  (For instance, this is the case in discussions of self-testing of states by boxes, as we note in Sec.~\ref{sec:selftesting}.)  In other words, whether a given state-to-box conversion relation holds or not is traditionally evaluated relative to Local Operations (LO).
  The reason, presumably, is that state-to-box conversions have heretofore been conceptualized as analogues of a Bell experiment, wherein the choice of local measurement at one wing has traditionally been presumed to be independent of the choice at any other wing, even though this independence is not needed to derive the Bell inequalities.\footnote{The assumption that the setting variables are independent of the hidden variables, on the other hand, {\em is} needed to derive the Bell inequalities.}   
   
When one conceptualizes state-to-box conversions in a resource-theoretic way, however, it becomes apparent that this LO-based approach is untenable, as we now demonstrate.

First recall that, as we noted in the introduction, the entanglement of states and the nonlocality of boxes quantify the {\em nonclassicality} of the correlational properties of states and boxes.  For both states and boxes, the distinction between free and nonfree  is the distinction between classical and nonclassical correlations.  In the case of quantum states, this corresponds to the distinction between separable and nonseparable, while in the case of boxes, it corresponds to the distinction between satisfying all Bell inequalities and violating some Bell inequality.  
Next, note that the set of separable states and the set of Bell-inequality-satisfying boxes cannot be generated by local operations alone; they require shared randomness.

Because a preparation of a box is a special case of 
a state-to-box conversion 
where the input type is trivial, if one were to assume LO as the set of free operations for state-to-box 
conversions, one would be stipulating that the distinction between free and nonfree boxes is the distinction between uncorrelated and correlated (i.e., product and nonproduct forms), rather than the distinction between Bell-inequality-satisfying and Bell-inequality-violating.  Consequently, an LO-based approach cannot capture the classical-nonclassical distinction. 

Furthermore, if the free operations are to be independent of type, 
then if one were to take LO as the set of free operations governing state-to-box conversions, one would also have to take LO to also govern state-to-state conversions, so that the distinction between free and nonfree {\em states} would also correspond to the distinction between product and nonproduct forms, rather than the distinction between separable and nonseparable, and thus would again not capture the classical-nonclassical distinction.  One must therefore reject the historical LO-based approach to the study of state-to-box conversions.

We now articulate our preferred approach to a resource-theoretic study of the interplay of entanglement and nonlocality.  We assume that the parties are connected by a network wherein they all have access to a common source, but where there are no channels between them, so that the distinction between free and nonfree operations is the distinction between what can be achieved by a common {\em classical} source (shared randomness) and what can be achieved by a common {\em quantum} source (nonseparable states).
This is a type-independent restriction. It follows that the set of free operations governing all varieties of conversion relations, including state-to-state, state-to-box, and box-to-box, is LOSR.  Further discussion of this proposal is provided in Sec.~\ref{NetworkStructure}.
 
An alternative approach  is one wherein the network includes channels among the parties, implying that LOCC is the set of free operations for all varieties of resource conversion.   At first glance, it might seem that the latter approach cannot possibly capture
  the nonlocality of boxes, as classical communication can be used to simulate any Bell inequality violation without requiring nonclassicality.  As we note in Sec.~\ref{NetworkStructure}, however, such a conceptualization can be made consistent by restricting attention to 
   a subclass of boxes, and it may be the pertinent one for certain applications.  Nonetheless, we shall show in Sec.~\ref{NetworkStructure} that it is rather more difficult to formalize than the one we pursue here and that the interplay of entanglement and nonlocality that it implies involves a more significant departure from standard intuitions than the one based on LOSR.  This also motivates our focus on LOSR in this article.

As we argued above, one cannot leave out shared randomness when assessing resource conversions
 if the resource of interest is the 
 nonclassicality of states and boxes.  In spite of this, there are special cases of state-to-box  conversions wherein the shared randomness does not offer any additional power over LO.  This occurs if the box is convexly extremal in the set of quantumly realizable boxes.\footnote{Here, a box is said to be quantumly realizable if it can be obtained from some quantum state by some LOSR operation. Note, however, that one could equally well define a box to be quantum realizable if it can be obtained from a quantum state by an LO operation, since the shared randomness can always be provided by the quantum state.}
 Similarly, for {\em box-to-box} conversions where the output box is convexly extremal,  LO and LOSR also deliver the same verdicts about convertibility relations. In fact, there is a slightly larger set of output boxes for which LO and LOSR deliver the same verdicts for state-to-box and box-to-box conversions, namely, those that are {\em LO-equivalent} to a convexly extremal box. The result can be summarized as follows:

\begin{lem}
\begin{samepage}
    \label{convext}
    Consider the following statements about interconversion between an $n$-partite state $\rho$ and an $n$-partite box $B$
    \begin{compactenum}[(i)]
        \item  $\rho \mapsto B$ by LOSR,
        \item  $\rho \mapsto B$ by LO,
    \end{compactenum}
    and between a pair of $n$-partite boxes,  $B_0$ and $B$,
        \begin{compactenum}[(i)]
        \item[(i)$^{\prime}$]  $B_0 \mapsto B$ by LOSR,
        \item[(ii)$^{\prime}$]  $B_0 \mapsto B$ by LO.
    \end{compactenum}
The following implications hold among these conditions:
\end{samepage}
 \begin{enumerate}
 \item[(a)] If $B$ is a convexly extremal box or a convexly nonextremal box that is LO-equivalent to a convexly extremal box,
    then  (i) and (ii) are equivalent and (i)$^{\prime}$ and (ii)$^{\prime}$ are equivalent. 
\item[(b)] If $B$ is an arbitrary convexly nonextremal box 
 then although it is still the case that (ii) $\implies$ (i) and (ii)$^{\prime}$ $\implies$ (i)$^{\prime}$, it can happen that (i) $\nimplies$ (ii) and it can happen that (i)$^{\prime}$ $\nimplies$ (ii)$^{\prime}$.
 \end{enumerate} 
 Here, convex extremality is judged relative to the set of quantumly realizable boxes 
\end{lem}

\begin{proof}
 For all boxes $B$,  (ii) $\implies$ (i) and (ii)$^{\prime}$ $\implies$ (i)$^{\prime}$ because LO is a strict subset of LOSR.  It therefore suffices to consider only the reverse implications.
Claim (a).  That (i) $\implies$ (ii) (respectively, (i)$^{\prime}$ $\implies$ (ii)$^{\prime}$) for a convexly extremal $B$ is seen as follows: if a mixture of different LO operations takes $\rho$ (respectively $B_0$) to $B$, then by the convex-extremality of $B$, every LO operation in the mixture must take $\rho$ (respectively $B_0$) to $B$.   Now consider the case where $B$ is convexly nonextremal but LO-equivalent to a convexly extremal box, which we denote by $B_{\rm ext}$.
Note that $B_{\rm ext}$ is also LOSR-equivalent to $B$, since LOSR subsumes LO.  By assumption, $\rho \mapsto B$ by LOSR (respectively $B_0 \mapsto B$ by LOSR).  It then follows from the LOSR-equivalence of 
$B$ and $B_{\rm ext}$  that $\rho \mapsto B_{\rm ext}$ by LOSR (respectively $B_0 \mapsto B_{\rm ext}$ by LOSR).  Next, from the fact that  (i) $\implies$ (ii) for convexly extremal boxes, it follows that $\rho \mapsto B_{\rm ext}$ by LO (respectively $B_0 \mapsto B_{\rm ext}$ by LO).  Finally, given the LO-equivalence of $B$ and $B_{\rm ext}$, we obtain $\rho \mapsto B$ by LO (respectively $B_0 \mapsto B$ by LO).   Claim (b).  To see that there are convexly nonextremal boxes $B$ for which  (i)$^{\prime}$ $\nimplies$ (ii)$^{\prime}$, it suffices to consider the case where $B_0$ is a product box  while $B = B_0  \otimes B_{\rm extra}$ where 
$B_{\rm extra}$ is any local box that is not a product box, so that it can be prepared for free using LOSR operations, but not using LO operations. (Here `$\otimes$' denotes parallel composition of resources \cite{coecke2016mathematical}.) To see that there are convexly nonextremal boxes $B$ for which (i) $\nimplies$ (ii), we can take  $B = B'  \otimes B_{\rm extra}$ where $B'$ is any convexly extremal box that self-tests $\rho$.
\end{proof}

Because of this lemma, some pre-existing results concerning state-to-box conversions under LO coincide with results about state-to-box conversions under LOSR.  In such cases, the LO-based assessments of which state-to-box conversions are possible coincide with the LOSR-based assessments and one can simply incorporate all previous results based on LO  into a resource theory based on LOSR. 
We will see that considerations of state-to-box conversions in discussions of the anomaly of nonlocality, studied in Sec.~\ref{sec:anomalies}, are of this sort.  Other pre-existing results, however, {\em do} need to be corrected if one is to understand them as results in a resource theory based on LOSR. 
  The precise conditions for the self-testing of states by boxes, considered in Sec.~\ref{sec:selftesting}, are of this sort.\\

\section{Resolving the anomaly of nonlocality}\label{sec:anomalies}

As summarized in the introduction and in Ref.~\cite{Methot2006anomaly}, the {\em anomaly of nonlocality} refers to the fact that there are situations wherein features of nonlocal boxes are found to be realizable by a partially entangled state but {\em not} by a maximally entangled state. 
The recognition of these anomalies~\cite{Methot2006anomaly,AnomalyExtra2005Scarani,AnomalyExtra2008Brunner,vidick2011more,AnomalyExtra2011Junge,Acin2012randomnessvsnonlocality,AnomalyExtra2014Tan,AnomalyExtra2015Augusiak,AnomalyExtra2015Fonseca,AnomalyExtra2016Bowles,AnomalyExtra2017Kabel,AnomalyExtra2018Curchod,AnomalyExtra2018Bamps,AnomalyExtra2018Chitambar,AnomalyExtra2018Lipinska,AnomalyExtra2018Barasinski} was important insofar as it made clear that it is not straightforward to understand the nonlocality of boxes and the entanglement of states as two manifestations of a single type of resource.  The aim of our article is to show that one can nonetheless do so by recasting the central questions into a  formal resource-theoretic framework.  In particular, we show that if one quantifies the entanglement properties of quantum states via LOSR operations, rather than LOCC operations, then the entanglement of states and the nonlocality of boxes are indeed seen to be two manifestations of a single type of resource, namely, nonclassicality of correlational properties.

We begin by reframing the anomaly of nonlocality within a rigorous resource-theoretic framework. 
We use the framework developed in Ref.~\cite{coecke2016mathematical}.\footnote{Recall that a set of free operations defines an ordering relation (formally, a preorder) on resources, where one resource is {\em at least as resourceful} as a second if it can be freely converted to the second. Two resources are {\em equivalently resourceful} (or in the same {\em equivalence class}) if each can be freely converted into the other, and two resources are {\em incomparable} 
 if neither is freely convertible to the other. } 
By viewing entanglement through this lens, we show that each instance of the anomaly of nonlocality can be recast as a set of claims that are not merely counterintuitive but {\em contradictory}, thereby signaling a flaw in the conceptual scheme of unformalized resource-theoretic assumptions within which they arose.\footnote{Note that earlier work on the anomaly of nonlocality did not conceive of it as a paradox that was in need of resolution.   The fact that the anomaly {\em becomes} a paradox when recast in a resource-theoretic framework helps us to identify how to achieve a unified treatment of nonlocality of boxes and entanglement of states as resources of nonclassicality of correlational properties.}

To begin, we note that it is possible to find a nonlocal box $B$ 
which can be realized
 from some partially entangled pure state of a given Schmidt rank,
\begin{align}  \label{convpartialtobox}
\ket{\psi_{\rm partial}} \mapsto B,
\end{align}
but which  {\em cannot} be realized from any maximally entangled pure state of the same Schmidt rank,\footnote{Formally, $\ket{\psi_{\rm max}}$ is any state for which the squared Schmidt coefficients
  describe a uniform distribution for the given Schmidt rank, while $\ket{\psi_{\rm partial}}$ is any state for which they describe a nonuniform distribution. 
}
\begin{align}\label{notconv}
\ket{\psi_{\rm max}} \not\mapsto B.
\end{align}

The following list provides a number of concrete examples of this phenomenon.  For each example,
we specify the box $B$ appearing therein by reference to a convex function that witnesses its nonlocality.  
For each of the following boxes, one can find a $\ket{\psi_{\rm max}}$ and a $\ket{\psi_{\rm partial}}$ of the same Schmidt rank such that Eqs.~\eqref{convpartialtobox} and \eqref{notconv} hold:
\begin{samepage}
\begin{itemize} \label{listofnonmaximaltasks}
\item a box that achieves the maximum probability
of running
Hardy's version of Bell's theorem~\cite{Hardy1993paradox}.
\item a box that maximally violates a tilted Bell inequality~\cite{Acin2002,liang2011semi,vidick2011more}, thereby offering more noise resistance for that inequality~\cite{Acin2002}. 
\item a box that has extractable secret key rate higher than $\approx 0.144$~\cite{Scarani2006QKD,Acin2006QKD}. 
\item a box that has Kullback-Leibler divergence (i.e., relative entropy distance)
from the set of local boxes larger than $\approx 0.058$~\cite{acin2005optimal}.
\end{itemize}\end{samepage}

Meanwhile, standard entanglement theory tells us that any partially entangled pure state can be realized
starting from a maximally entangled state of the same Schmidt rank~\cite{nielsen1999conditions}:
\begin{align} \label{convmaxtopartial}
\ket{\psi_{\rm max}} \mapsto \ket{\psi_{\rm partial}}.
\end{align}

It is now evident what is puzzling about these three claims (Eqs.~\eqref{convpartialtobox}, \eqref{notconv} and \eqref{convmaxtopartial}): if the conversion relations in Eqs.~\eqref{convmaxtopartial} and \eqref{convpartialtobox} hold in a resource theory, then given that resource conversion relations are necessarily transitive in any such theory\footnote{Transitivity of resource conversions is necessary in the framework of Ref.~\cite{coecke2016mathematical} because the free operations are required to be closed under sequential composition.}---i.e., if $R_1 \mapsto R_2$ and $R_2 \mapsto R_3$ then $R_1 \mapsto R_3$---it follows that we should have $\ket{\psi_{\rm max}} \mapsto B$, which contradicts Eq.~\eqref{notconv}. 

We now identify the flaw in the unformalized resource-theoretic assumptions
 that led to this contradiction. 
   It is
the implicit idea that the three conversion relations all hold relative to a single notion of resourcefulness, that is, that they all hold relative to the {\em same} set of free operations and thus can be considered as relations holding in one and the same resource theory.
In the description of the anomaly,  the claim about state-to-state conversion,
 Eq.~\eqref{convmaxtopartial}, is implicitly evaluated relative to LOCC, while the claims about state-to-box conversions, Eqs.~\eqref{convpartialtobox} and~\eqref{notconv},
   are implicitly evaluated relative to LO. 

In Section~\ref{sec:unifiedRT}, we argued that the most straightforward way of understanding the interplay of entanglement and nonlocality resource-theoretically is to imagine a network with a common source among the parties but no channels, in which case both state-to-state and state-to-box conversions should be evaluated relative to LOSR rather than LO or LOCC. As we now show, this resolves the contradiction. 
The standard claims about the state-to-box conversions are not modified when one replaces LO by LOSR.
Eq.~\eqref{convpartialtobox} holds with respect to $\LOSR$ because $\LO \subset \LOSR$, and Eq.~\eqref{notconv} holds with respect to $\LOSR$  by Lemma~\ref{convext} and the fact that, for each of the examples given, the box in question
  is convexly-extremal in the set of quantumly realizable boxes. (This follows from the fact that the functions which the boxes maximize in each example are convex-linear.)  On the other hand, the standard claim about the state-to-state conversion {\em is} modified when one replaces LOCC by LOSR.
If one judges conversion between entangled states relative to LOSR, rather than LOCC, then it is the {\em negation} of Eq.~\eqref{convmaxtopartial} that holds, namely,
\begin{align} \label{notconvmaxtopartial}
\ket{\psi_{\rm max}} \not\mapsto \ket{\psi_{\rm partial}}.
\end{align}
  This is because $\ket{\psi_{\rm max}}$ and $\ket{\psi_{\rm partial}}$  are {\em incomparable} in the resource theory of LOSR-entanglement---neither can be converted into the other under LOSR,
 as shown below (see Corollary~\ref{cor:equivalent_or_incomparable}).
 But Eq.~\eqref{notconvmaxtopartial}, unlike Eq.~\eqref{convmaxtopartial}, is {\em consistent} with Eqs.~\eqref{notconv} and \eqref{convpartialtobox}, and therefore there is no contradiction (and hence no anomaly).  
 
The terms ``partially entangled'' and ``maximally entangled'' are apt descriptions of
$\ket{\psi_{\rm partial}}$ and $\ket{\psi_{\rm max}}$ when one is considering their LOCC-entanglement properties.  
This is because $\ket{\psi_{\rm partial}}$ is {\em strictly below} $\ket{\psi_{\rm max}}$ in the LOCC order (since, in addition to Eq.~\eqref{convmaxtopartial}, we have $\ket{\psi_{\rm partial}} \not\mapsto \ket{\psi_{\rm max}}$) and consequently there exists {\em some} 
LOCC-entanglement monotone, $M_{\rm LOCC}$, for which $M_{\rm LOCC}( \psi_{\rm partial} ) < M_{\rm LOCC}( \psi_{\rm max} )$ and no LOCC-entanglement monotones relative to which this strict inequality is reversed.
When considering their LOSR-entanglement properties, however, the terminology is no longer appropriate.
   In accordance with Eq.~\eqref{notconvmaxtopartial}, there necessarily exists an LOSR-entanglement monotone, $M_{\rm LOSR}$, relative to which $M_{\rm LOSR}( \psi_{\rm partial} ) > M_{\rm LOSR}( \psi_{\rm max} )$. 
From this perspective, it is {\em natural}, rather than anomalous, that there exist tasks---such as realizing the sorts of nonlocal boxes that appear in the list presented earlier---for which the type of LOSR-entanglement required to realize the task is present in $|\psi_{\rm partial}\rangle$ but not in $|\psi_{\rm max}\rangle$.  Indeed, one can define a nontrivial
 LOSR-entanglement monotone (i.e., one that is not also an LOCC-entanglement monotone) from each example of an anomaly of nonlocality. Given a function over boxes that witnesses the type of nonlocality described in the example,  the LOSR monotone over {\em states} is simply the maximum value of that function among boxes that are LOSR-realizable starting from the given state. 
We provide the details in Appendix~\ref{monotones}.

The best known of the anomalies of nonlocality is the one concerning Hardy's version of Bell's theorem, so it is useful to reiterate our conclusion for it specifically.
The fact that the Hardy-type correlations {\em cannot} be achieved by a maximally entangled state but {\em can} be achieved by a partially entangled state surprises almost everyone who encounters the topic.  Presumably this is because---based on their familiarity with LOCC-entanglement---they expect that whatever resource of nonclassicality is present in a {\em partially} entangled state, it ought to be less than 
the resource of nonclassicality that is present in a {\em maximally} entangled state.  The resolution of the puzzle is that the notion of nonclassicality that is relevant for Bell scenarios is LOSR-entanglement, not LOCC-entanglement, and that there are measures of LOSR-entanglement relative to which what we call a partially entangled state is {\em more nonclassical} than what we call a maximally entangled state. 

The LOSR-incomparability of $\ket{\psi_{\rm max}}$ and $\ket{\psi_{\rm partial}}$ also harmonizes with the recently demonstrated~\cite{wolfe2020quantifying} LOSR-incomparability of a Tsirelson box (which provides the maximal possible quantum violation of the  Clauser-Horne-Shimony-Holt inequality~\cite{CHSH}), denoted by $B_{\rm Tsir}$, and a Hardy box (which achieves the maximum
 probability of running Hardy's version of Bell's theorem), denoted by $B_{\rm Hardy}$. 
Indeed, both instances of incomparability can be inferred directly from: (i) the transitivity of resource conversions within an LOSR resource theory incorporating both states and boxes, and (ii)
known facts about the possible and impossible state-to-box conversions under LOSR, namely, that $\ket{\psi_{\rm max}} \mapsto B_{\rm Tsir}$ while  $\ket{\psi_{\rm max}} \not\mapsto B_{\rm Hardy}$, and that $\ket{\psi_{\rm partial}} \mapsto B_{\rm Hardy}$ while $\ket{\psi_{\rm partial}} \not\mapsto B_{\rm Tsir}$.\footnote{These facts about state-to-box conversions are well-known when the operations are LO, and can be inferred to hold also for LOSR by appealing to Lemma~\ref{convext} and the convex extremality of $B_{\rm Tsir}$ and $B_{\rm Hardy}$.}
  For instance, to see that these state-to-box conversion relations imply that $\ket{\psi_{\rm max}} \not\mapsto \ket{\psi_{\rm partial}}$, it suffices to note that if it were the case that $\ket{\psi_{\rm max}} \mapsto \ket{\psi_{\rm partial}}$, then we could follow this conversion with $\ket{\psi_{\rm partial}} \mapsto B_{\rm Hardy}$ in order to have a means of converting $\ket{\psi_{\rm max}}$ to $B_{\rm Hardy}$,  thereby yielding a contradiction.  Similarly, to see that these relations imply that $B_{\rm Tsir} \not\mapsto B_{\rm Hardy}$, one merely notes that if it were the case that $B_{\rm Tsir} \mapsto B_{\rm Hardy}$, then by implementing $\ket{\psi_{\rm max}} \mapsto B_{\rm Tsir}$ followed by $B_{\rm Tsir} \mapsto B_{\rm Hardy}$, one would have a means of converting $\ket{\psi_{\rm max}}$ to $B_{\rm Hardy}$, thereby yielding a contradiction.  Similar arguments can be given to establish that $\ket{\psi_{\rm partial}} \not\mapsto \ket{\psi_{\rm max}}$ and $B_{\rm Hardy} \not\mapsto B_{\rm Tsir}$.\footnote{As an aside, this argument provides an alternative proof of the LOSR-incomparability of $B_{\rm Tsir}$ and $B_{\rm Hardy}$ to the one presented in Ref.~\cite{wolfe2020quantifying}.}

\section{Genuine multipartite entanglement}\label{sec:multipartite}

There is also some tension between results concerning genuine multipartite entanglement and those concerning genuine multipartite nonlocality when the definitions of these concepts are motivated by the LOCC paradigm for entanglement.
We again begin by reframing this tension as an outright inconsistency
by formulating a {\em genuinely multipartite anomaly of nonlocality}.  We consider the case of three parties for concreteness, although our analysis can be easily generalized  
to cases with more parties.

Denote the entangled state ${\frac{1}{\sqrt{2}}(\ket{00}+\ket{11})}$ by $\ket{\phi^{+}}$. In the preorder of tripartite entangled states relative to LOCC-convertibility, 
the state ${\ket{\psi_{\rm 2Bell}} \equiv |\phi^{+}\rangle_{A_1B}\otimes  |\phi^{+}\rangle_{A_2C}}$ is above the state ${\ket{\psi_{\rm GHZ}} \equiv \tfrac{1}{\sqrt{2}} (|000\rangle_{ABC} +|111\rangle_{ABC})}$ because the former can be deterministically converted to the latter by LOCC,
 \begin{align} \label{convBellpairtoGHZ}
\ket{\psi_{\rm 2Bell}} \mapsto \ket{\psi_{\rm GHZ}}.
\end{align}
(It suffices for one party to prepare three systems in their lab in the state $\ket{\psi_{\rm GHZ}}$, and then to use $\ket{\psi_{\rm 2Bell}}$ to teleport one part to each of the other two parties, which requires classical communication.)
And yet, 
 there are tripartite boxes, such as the Mermin box~\cite{Mermin1990,Brassard2005},
that can be realized from $\ket{\psi_{\rm GHZ}}$ by local measurements,
\begin{align} \label{convGHZtomaxMermin}  
\ket{\psi_{\rm GHZ}} \mapsto B_{\rm Mermin},
\end{align}
but that cannot be so realized from $\ket{\psi_{\rm 2Bell}}$,
\begin{align} \label{notconvBellpairtomaxMermin}
\ket{\psi_{\rm 2Bell}} \not\mapsto B_{\rm Mermin},
\end{align}
as follows from a result in Ref.~\cite{wolfe2021infl}.\footnote{The relevant result from Ref.~\cite{wolfe2021infl} is that the Mermin box cannot be achieved by LOSR processing from
{\em any} state, $\rho_{\rm triangle}$, that is a tensor product of states having entanglement between {\em pairs} of parties only (i.e., generated from entangled sources consistent with the so-called ``triangle scenario'' network), $\rho_{\rm triangle} \not\mapsto B_{\rm Mermin}$. Eq.~\eqref{notconvBellpairtomaxMermin} follows because $\ket{\psi_{\rm 2Bell}}$ is an instance of such a state.}
As before, Eqs.~\eqref{convBellpairtoGHZ}, \eqref{convGHZtomaxMermin}, and ~\eqref{notconvBellpairtomaxMermin} seem to imply a contradiction 
 given the transitivity of resource conversion relations.

The resolution of the puzzle proceeds as in the case of the bipartite anomalies.  The reason for the seeming contradiction is that the conversion relations of Eqs.~\eqref{convGHZtomaxMermin} and \eqref{notconvBellpairtomaxMermin} are implicitly evaluated relative to LO, 
while that of Eq.~\eqref{convBellpairtoGHZ} is evaluated relative to LOCC.  If, however, one evaluates 
all conversion relations
relative to LOSR,
 then although Eqs.~\eqref{convGHZtomaxMermin} and \eqref{notconvBellpairtomaxMermin} do not change, by virtue of Lemma~\ref{convext} and the fact that the Mermin box is convexly extremal,
Eq.~\eqref{convBellpairtoGHZ} does.  Specifically, relative to LOSR, 
  $\ket{\psi_{\rm 2Bell}}$ and $\ket{\psi_{\rm GHZ}}$ are found to be {\em incomparable}, so that the negation of Eq.~\eqref{convBellpairtoGHZ} holds, that is,
\beq\label{noconvBellGHZ}
\ket{\psi_{\rm 2Bell}} \not\mapsto \ket{\psi_{\rm GHZ}},
\eeq
and the contradiction is blocked. The proof of incomparability follows from a condition for LOSR-convertibility we derive further on (see Corollary \ref{cor:multipartite_extension}),
as is made explicit in Appendix~\ref{BellGHZ}.

This anomaly and its resolution shed light on how one ought to define notions of genuine multipartite entanglement and nonlocality, and specifically what it means for these to be {\em genuine 3-way}.

We begin by presenting a slightly different perspective on the anomaly.
Consider the conventional definition of {\em genuine 3-way entanglement}~\cite{kProducibleEntanglement,EntanglementManyBody}.  Recalling that a tripartite state is termed {\em biseparable} if there is a partitioning of the parties into two groups such that the state is separable relative to this bipartition, the standard definition can
be simply stated as follows: 
\begin{defn}[standard]
A tripartite state is {\em genuine 3-way entangled} if and only if it is not a mixture of biseparable states.
\end{defn}

Now note that $\ket{\psi_{\rm 2Bell}}$ counts as genuine 3-way entangled by this definition.  This is somewhat counterintuitive a priori, given that $\ket{\psi_{\rm 2Bell}}$ is composed of states that contain only {\em bipartite} entanglement.  One can glean some insight into what led to the adoption of the conventional definition, in spite of its counterintutive features, from the fact that $\ket{\psi_{\rm 2Bell}}$ is above $\ket{\psi_{\rm GHZ}}$ in the LOCC order (Eq.~\eqref{convBellpairtoGHZ}).  Because it is generally agreed that any reasonable
 definition of genuine 3-way entanglement should be such that $\ket{\psi_{\rm GHZ}}$ counts as genuine 3-way entangled, it follows that because $\ket{\psi_{\rm 2Bell}}$ is above $\ket{\psi_{\rm GHZ}}$ in the order, it  too must qualify as having such entanglement.

With the conventional notion of genuine 3-way entanglement in mind,
we are now in a position to present the alternative perspective on the genuinely tripartite anomaly of nonlocality.
First, note that it is generally agreed that any definition of genuine 3-way nonlocality should be such that $B_{\rm Mermin}$  counts as genuine 3-way nonlocal.
But because
 $\ket{\psi_{\rm 2Bell}}$ is {\em above} $\ket{\psi_{\rm GHZ}}$ in the LOCC order, one would expect that whatever resource of genuine 3-way entanglement is required to generate the genuine 3-way nonlocality inherent in $B_{\rm Mermin}$, $\ket{\psi_{\rm 2Bell}}$ would have it if $\ket{\psi_{\rm GHZ}}$ does.  And yet this intuitive conclusion is in conflict with
 Eq.~\eqref{notconvBellpairtomaxMermin}.

The problem is that the conventional notion of genuine 3-way entanglement is motivated by LOCC.
 We therefore propose an alternative definition of genuine 3-way entanglement, motivated by LOSR.  We also show that there is a corresponding alternative definition for genuine 3-way nonlocality which mirrors our alternative definition of genuine 3-way entanglement.  The genuinely tripartite anomaly is shown to admit of a natural resolution relative to these two notions (genuine 3-way entanglement and genuine 3-way nonlocality), demonstrating that these have a more natural interplay than exists between the conventional pair of notions.

In the LOSR paradigm, the distinction between a classical and a quantum resource shared among some parties is the distinction between sharing classical randomness and sharing entanglement.  
Consequently, the natural manner of defining a resource of 3-way nonclassicality is to consider resourcefulness relative to a set of operations wherein all 2-way nonclassicality is considered free---that is, where 2-way common sources are allowed to be quantum---while the 3-way common source is required to be classical. 
Thus, we define genuine 3-way {\em nonclassicality} for quantum states and for nonlocal boxes (as well as for other types of multipartite processes, such as quantum measurements, quantum channels,  and multi-time processes) as those that are nonfree relative to 
  {\em LOSR supplemented by 2-way shared entanglement (LOSR2WSE)}.\footnote{One can define  genuinely $k$-way nonclassicality among $n$ parties (where $k\leq n$) in an analogous manner: via LOSR supplemented by $(k-1)$-way shared entanglement~\cite{navascues2020genuine}.}     
  \begin{defn}[resource-theoretic]\label{Defn:G3WE}
A nonlocal box is {\em genuine 3-way nonlocal} and an entangled state is {\em genuine 3-way entangled} if and only if they cannot be obtained by LOSR2WSE, that is, local operations together with a source of correlations consisting of shared entanglement between each pair of parties and shared randomness among all three. 
    \end{defn}
   These notions
    are illustrated in Figs.~\ref{genuinely3way} and ~\ref{non_locality}.
Ref.~\cite{wolfe2021infl}  discusses boxes that are genuine 3-way nonlocal in this sense, and Ref.~\cite{navascues2020genuine} discusses states that are genuine 3-way entangled in this sense. The causal structure 
of the free resources
 associated to LOSR2WSE
 has been termed the ``quantum triangle scenario with shared randomness''~\cite{wolfe2021infl,kraft2021quantum}.

\begin{figure}[htb!]
\centering
    \includegraphics[scale=1]{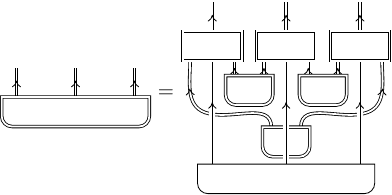}
    \caption{
       A tripartite entangled state is genuine 3-way entangled if it cannot be decomposed as shown; that is, if it cannot be realized using LOSR supplemented by 2-way shared entanglement (LOSR2WSE). 
        Throughout, double wires represent quantum systems and single wires represent classical systems. 
    }
    \label{genuinely3way}
\end{figure}

\begin{figure}[htb!]
\centering
    \includegraphics[scale=1]{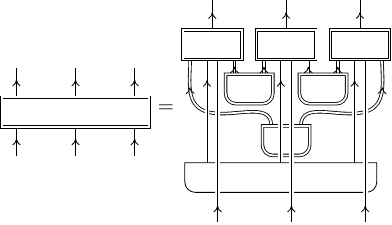}
    \caption{
           A tripartite box is genuine 3-way nonlocal if it cannot be decomposed as shown; that is, if it cannot be realized using LOSR supplemented by 2-way shared entanglement (LOSR2WSE).  }
    \label{non_locality}
\end{figure}

 It is clear that $\ket{\psi_{\rm 2Bell}}$ can be realized via LOSR2WSE,
  while it can be shown that $\ket{\psi_{\rm GHZ}}$ cannot (see  Ref.~\cite{navascues2020genuine}), and so of the two, only $\ket{\psi_{\rm GHZ}}$ is genuine 3-way entangled according to Definition~\ref{Defn:G3WE}.
   In this approach, therefore, the intuitions we noted earlier regarding genuine 3-way entanglement are vindicated: if a state can be obtained from one where all of the entanglement is of the bipartite variety, then it is not genuine 3-way entangled.  

 It is also the case that 
 $B_{\rm Mermin}$ cannot be realized by LOSR2WSE, as follows from results
 in Ref.~\cite{wolfe2021infl}, so that  $B_{\rm Mermin}$ is genuine 3-way nonlocal according to Definition~\ref{Defn:G3WE}
   (just as it was relative to the old definition).
 
In the description of the tripartite anomaly, the fact that $\ket{\psi_{\rm GHZ}}$ can be converted into $B_{\rm Mermin}$ (Eq.~\eqref{convGHZtomaxMermin})  while $\ket{\psi_{\rm 2Bell}}$ {\em cannot} (Eq.~\eqref{notconvBellpairtomaxMermin})
 was only surprising relative to the belief that $\ket{\psi_{\rm 2Bell}}$ must have {\em more} genuine 3-way nonclassicality than $\ket{\psi_{\rm GHZ}}$ does, on the grounds that it is above $\ket{\psi_{\rm GHZ}}$ in the LOCC order.\footnote{This is in precise analogy to how the fact that $\ket{\psi_{\rm partial}}$ can be converted to a box manifesting Hardy-type correlations (Eq.~\eqref{convpartialtobox}) while $\ket{\psi_{\rm max}}$ cannot (Eq.~\eqref{notconv}) is only surprising relative to the belief that $\ket{\psi_{\rm max}}$ has more nonclassicality than $\ket{\psi_{\rm partial}}$ does.} But this is overturned in the approach just described, since  $\ket{\psi_{\rm 2Bell}}$ explicitly has {\em less} genuine 3-way nonclassicality than $\ket{\psi_{\rm GHZ}}$ does, since $\ket{\psi_{\rm 2Bell}}$ has none while $\ket{\psi_{\rm GHZ}}$ has some.
 Given that $B_{\rm Mermin}$ has genuine 3-way nonclassicality according to our definition, it becomes intuitively clear why  $\ket{\psi_{\rm GHZ}}$ and not $\ket{\psi_{\rm 2Bell}}$ can be converted into $B_{\rm Mermin}$.

Furthermore,
 we note that whereas our definition of genuine 3-way nonclassicality 
fits within the mathematical framework for resource theories~\cite{coecke2016mathematical}---as the property of being nonfree relative to LOSR2WSE---the conventional definition does not.  
The latter fact is most easily seen by recalling an awkward feature of the conventional definition,
namely, that the set of states that are {\em not} genuinely multipartite entangled is not closed under tensor products.  For example, consider the tripartite states $|\phi^{+}\rangle_{A_1B} \otimes |0\rangle_C$ and $|\phi^{+}\rangle_{A_2 C} \otimes |0\rangle_B$.  Both are biseparable (relative to different partitions of the tripartite system).  
Jointly having these states, however, is 
equivalent to having $|\psi_{\rm 2Bell}\rangle = |\phi^{+}\rangle_{A_1B} \otimes |\phi^{+}\rangle_{A_2 C}$, which, as noted previously, is not biseparable 
   and therefore {\em is} genuine 3-way entangled according to the conventional definition.  

Although this feature of the conventional definition has been acknowledged by some as counterintuitive and somewhat perverse~\cite{navascues2020genuine,contreras2021,MultipartiteProblem2020Luo}, we wish to draw attention here to the fact that it is {\em inconsistent} with the framework of resource theories\footnote{Just as the conversion relations in each of the anomalies were inconsistent with the transitivity of resource conversion relations.}, because the latter 
 stipulates that the set of free resources must be closed under parallel composition (see Definition 2.1 of Ref.~\cite{coecke2016mathematical}). From the resource-theoretic perspective, therefore, the fact that the property of biseparability is not preserved under parallel composition
  implies that biseparability is simply not a viable candidate for the property that defines the set of free resources in a resource theory. Hence, 
 nonbiseparability is not a viable candidate for the property that defines 
 the resource of genuinely multipartite nonclassicality.
 
Although we have here been primarily concerned with making the case that it is the notion of genuine 3-way entanglement based on  LOSR2WSE, rather than the one based on LOCC, that does best justice to our intuitions about multipartite Bell scenarios,
 we have also seen that the conventional notion fails to satisfy certain desiderata that one would want any resource to satisfy.  This raises the question of whether it even makes sense,  within the LOCC paradigm, to speak of a resource of entanglement that is genuine 3-way.  The following considerations suggest that it does not.

We begin by highlighting why such a notion {\em does} make sense within the LOSR paradigm of entanglement.  There, 
the resource of entanglement constitutes quantumness of common sources.
Consequently, the fact that there is a distinction between quantum common sources between each pair of parties
 and a quantum common source between 
  all three parties implies that there is a distinction between 2-way and 3-way notions of entanglement.  In particular, if all 2-way quantum common sources
   are taken to be freely available, there is still something that the parties can lack, namely, 3-way quantum common sources.

 In the LOCC paradigm, by contrast, the resource of entanglement is equivalent to a resource of quantum communication 
  and there is simply nothing corresponding to the notion of quantum communication between all the parties beyond the notion of quantum communication between any pair of parties.  In other words, if all 2-way quantum communication channels are taken to be freely available, there is no communication resource that the triple of parties lack.  Such considerations suggest that it may not be sensible to try and define a notion of genuine 3-way entanglement in the LOCC paradigm.

\section{Self-testing of entangled states}\label{sec:selftesting}

In this section, we show that LOSR-entanglement is the notion that best captures many existing ideas regarding self-testing,
  i.e., the certification of the presence of particular entangled states by observing a nonlocal box~\cite{mayers1998quantum,mayers2003self,vsupic2020self,Scarani2019}.
  \footnote{
  Note that it is common to refer to a box as self-testing {\em both} a state and measurements, rather than a state alone~\cite{vsupic2020self,JedWeak2020,ScaraniStateSelfTest2009}.  We will discuss this distinction further in Appendix~\ref{Oversight}. Note also that we are here considering only the notion of {\em perfect} self-testing, rather than the notion of {\em robust} self-testing~\cite{McKague2012,Yang2013selftesting,vsupic2020self,Scarani2019}, which we do not discuss in this article.  Similarly, we do not consider self-testing based on observing a maximal violation of a Bell inequality, rather than based on observing a particular box~\cite{ScaraniStateSelfTest2009,Bamps2015selftesting,Baccari2020}.} 
  More precisely, we show that many results on self-testing can be rederived in a methodologically sound manner within the resource theory based on LOSR, and that simple generalizations of these results (including some correctives) follow directly.

Colloquially, a state is said to be self-tested by a box if this state is ``the only one'' from which the box can be obtained by implementing local measurements. However, there is {\em never} just a single state that can yield a given box, and so the standard definition allows for some freedom in the specification of the state.  The freedom that is implied by the original definition~\cite{mayers1998quantum,mayers2003self,Yang2013selftesting,Wang2016} is this:
\begin{defn}[standard]\label{defnselftesteOld}
The pure state $|\psi\rangle$ is self-tested by a  box $B$  if $B$ can be obtained from $|\psi\rangle$ by local measurements and if for any state $\sigma$ from which $B$ can be obtained by local measurements, there is a local isometry that takes $\sigma$ to $|\psi\rangle\langle \psi| \otimes \omega$ for some state $\omega$.
A state $|\psi\rangle$ is said to be {\em self-testable} when there exists a box $B$ such that $|\psi\rangle$ is self-tested by $B$.
 \end{defn}

In Appendix~\ref{Oversight}, we discuss a subtlety regarding the claim that this is the `standard' definition of self-testing of states. Specifically, we note that various authors~\cite{Coladangelo2017,vsupic2020self} did not explicitly require 
 the condition that $B$ can be obtained from $|\psi\rangle$ by local measurements, and we explain why we deem it likely that 
these authors were assuming it implicitly.

As we did with the anomalies of nonlocality, we will begin by taking a resource-theoretic perspective on self-testing, where the resource theory is based on LOSR.  We take the natural definition of self-testing to be this: 
\begin{defn}[resource-theoretic, LOSR-based]\label{defnselfteste}
{
\abovedisplayskip=1pt plus 3pt
\belowdisplayskip=1pt plus 3pt
\abovedisplayshortskip=1pt plus 3pt
\belowdisplayshortskip=1pt plus 3pt
Given a pair consisting of a density operator $\rho$ and a box $B$, we say that $\rho$  {\em is self-tested by} $B$ if it holds that
\begin{align}
\rho \mapsto B\nonumber
\end{align}
 and that for all density operators $\sigma$,
\begin{align} 
\textrm{if }\; \sigma \mapsto B\; \textrm{ then }\; \sigma \mapsto \rho,\nonumber
\end{align}
where all conversions are evaluated relative to LOSR. A state $\rho$ is said to be {\em self-testable} if there exists some box $B$ that self-tests $\rho$.
}\end{defn}

In Appendix~\ref{equivdefnself}, we give an equivalent version of this definition in terms of the notion of the \emph{upward closure} of a resource, that is, the set of all resources that are above the given resource in the preorder. Specifically, $\rho$ is self-tested by $B$ if the upward closure of $B$ among states contains only those states which are also in the upward closure of $\rho$. 
Our definition of self-testing also generalizes immediately to other sorts of objects and even to other resource theories, as we discuss in Appendix~\ref{equivdefnself}.

Note that all notions of conversion appearing in standard discussions of self-testing (e.g., the notion of `can be obtained from' in Definition~\ref{defnselftesteOld})
 are understood, within a resource-theoretic approach, as conversion {\em relative to the free operations in the resource theory.}
If self-testing is to be viewed as the task of certifying the {\em nonclassicality of correlations} in a quantum state by the observation of the nonclassicality of correlations in a box,
 then it follows from the arguments in Section~\ref{sec:unifiedRT} that the appropriate set of free operations is given by LOSR (rather than, e.g., local isometries or local operations).

The resource-theoretic perspective on self-testing makes it clear that while there is no way to be certain which equivalence class of states was the one that generated the box (because there is an unbounded number of distinct such classes in the upward closure of any box), nonetheless, one {\em can} be certain about the identity of the {\em least} of these classes (i.e., the classes in this upward closure that are lowest in the partial order). 
 As such, while much of the language used in the self-testing literature suggests that when a state is self-tested, it is being uniquely certified 
  {\em up to equivalence}, in fact it is only being uniquely certified 
  {\em up to upward closure}.\footnote{Note, however, that if one is willing to introduce a restriction on the set of states under consideration, then one {\em can} achieve certification of the state up to equivalence.  We discuss this type of certification in Appendix~\ref{selftestinguptoequivalence}.}

We now discuss the relation between our definition of self-testing (Definition~\ref{defnselfteste}) and the standard one (Definition~\ref{defnselftesteOld}).

To do so, it is useful to begin by describing how the standard definition is usually generalized to the case where the state to be self-tested is mixed, because only then can one understand why it is generally asserted that mixed states {\em cannot} be self-tested.
 This is done simply by replacing $|\psi\rangle$ in Definition~\ref{defnselftesteOld}  with a mixed state $\rho$.  In order to highlight the differences with 
 Definition~\ref{defnselfteste}, we also recast the mixed-state generalization of the standard definition into a more resource-theoretic form: 
\begin{defn}[standard, generalized to mixed states]
\label{defnselftesteMixed}
Of the form of Definition~\ref{defnselftesteOld}, but where $|\psi\rangle$ is replaced by a mixed state $\rho$.
Equivalently, of the form of Definition~\ref{defnselfteste}, but where the conversions $\rho \mapsto B$ and $\sigma \mapsto B$ are evaluated relative to LO, while the conversion $\sigma \mapsto \rho$ is evaluated relative to local isometries with the constraint that the final state factorizes with respect to the division between the system on which $\rho$ is defined and the auxiliary system (i.e., there exists a state $\omega$ 
 such that $\sigma \mapsto \rho \otimes \omega$ by a local isometry).
\end{defn}

To see that the claimed resource-theoretic recasting is accurate, note that the means by which $B$ is stipulated to be obtainable from $\sigma$ or $\rho$ in Definition~\ref{defnselftesteOld} (after substituting $\rho$ for $|\psi\rangle$)
  is through `local measurements'.
  But given that a local measurement is just  an instance of a local operation, namely, the type of local operation that takes a state a box, it follows that the state-to-box conversion relations $\rho \mapsto B$ and $\sigma \mapsto B$ 
are being evaluated relative to LO.

By contrast, after substituting a mixed state $\rho$ for $|\psi\rangle$ in Definition~\ref{defnselftesteOld},  the conversion $\sigma \mapsto \rho$  cannot be understood as being evaluated relative to LO. 
To see this, we appeal to the following lemma, which will also serve to clarify where the standard definition and its usual mixed-state generalization  coincide with an LO-based definition and where they  differ from it.

 \begin{samepage}
\begin{lem}
    \label{lem:conversion2}
    Consider the following statements about interconversion between $n$-partite states $\sigma$ and $\rho$:
    \begin{compactenum}[(i)]
        \item  $\sigma \mapsto \rho $ by LO,\\
         or equivalently,\\
          $\exists W : \rho = {\rm Tr}_{\rm aux}( W)$ and $\sigma \leftrightarrows
         W $ by local isometries.
        \item  
        $\exists \omega : \sigma \leftrightarrows
          \rho  \otimes \omega$ by local isometries. 
    \end{compactenum}
 (Here, $\leftrightarrows$ denotes convertibility in both directions, $\omega$ is a state on an auxiliary system, $W$ is a joint state on the auxiliary system and the system associated to $\rho$, and ${\rm Tr}_{\rm aux}$ denotes a partial trace over the auxiliary system.). The following implications hold among these conditions:
 \begin{enumerate}
 \item[(a)] If $\rho$ is pure, then (i) and (ii) are equivalent.
\item[(b)] If $\rho$ is mixed, then although it is still the case that (ii) $\implies$ (i), it can happen that (i) $\nimplies$ (ii).
 \end{enumerate} 
\end{lem}
\end{samepage}
 Note that the second form of condition (i) follows from considering the Stinespring dilations of the local operations.   
 
\begin{proof}
  (ii) $\implies$ (i) for all states because if there exists an $\omega$ satisfying condition (ii), then there exists a $W$ satisfying condition (i), namely, $W= \rho \otimes \omega$.  It therefore suffices to consider the implication (i) $\implies$ (ii).  Claim (a).  (i) $\implies$ (ii) in the case where $\rho$ is pure because in this case, the only joint states $W$ having $\rho$ as their reduction are of the form ${W=\rho \otimes \omega}$ for some $\omega$ (i.e., the purification of a pure state necessarily factorizes).  Claim (b).  To see that it can happen that (i) $\nimplies$ (ii) when $\rho$ is mixed, it suffices to note that there are joint states $W$ having $\rho$ as their reduction that are {\em not} of the form ${W=\rho \otimes \omega}$ for some $\omega$.  That is, if $\rho$ is mixed, then it can arise as the reduction of a state $W$ exhibiting correlations between the auxiliary system and the system on which $\rho$ is defined.
\end{proof}

Note that {\em if} the proposal for the mixed-state generalization of the standard definition (Definition~\ref{defnselftesteMixed})  had been that $\sigma$ must map
 by a local isometry to a state $W$ where $\rho = {\rm Tr}_{\rm aux}(W)$ {\em without} requiring that $W$ be of the factorizing form $\rho \otimes \omega$, then  the conversion $\sigma \mapsto \rho$ would simply be evaluated relative to LO (simply recall the two forms of condition (i) in Lemma~\ref{lem:conversion2}).  This alternative proposal for the mixed-state generalization of the standard definition, when  
 recast resource-theoretically, is simply this:
\begin{defn}[resource-theoretic, LO-based]
\label{defnselftesteLObased}
Of the form of Definition~\ref{defnselfteste}, but where all conversions are evaluated relative to LO. 
\end{defn}

In our view, this alternative approach to adapting the standard definition to mixed states is the more natural of the two, 
because what is important in self-testing is that one is able to recover the state $\rho$ from $\sigma$, while the question of whether or not the state $W$ (from which $\rho$ is obtained by partial trace) stipulates that there is correlation between the system of interest and the auxiliary system seems irrelevant. 

It turns out that there are questions for which it is significant whether a proponent of the standard definition adopts Definition~\ref{defnselftesteMixed} or Definition~\ref{defnselftesteLObased} as its mixed-state generalization. Specifically, whereas it is commonly claimed that {\em it is impossible to self-test mixed states}~\cite{Coladangelo2017,Supi2018Mutliselftest,vsupic2020self}, this claim is justified {\em only} relative to 
Definition~\ref{defnselftesteMixed}, but not 
Definition~\ref{defnselftesteLObased}.

To see this, it suffices to consider the proof that is standardly given for this claim 
 (see, e.g., Ref.~\cite{vsupic2020self}).  It is first noted that for any box $B$ that can be obtained from a mixed state $\rho$, there is a pure state $|\psi\rangle$ of the same dimension as $\rho$ that can also be used to obtain $B$ (as shown in Ref.~\cite{Sikora2016HSD}).  Therefore, among the $\sigma$ from which $B$ can be obtained by local measurements, there are some which are pure.  But for any pure $\sigma$, it is impossible to satisfy the condition that there exists an $\omega$ such that $\sigma \mapsto \rho \otimes \omega$ by a local isometry unless $\rho$ is also pure. If $\rho$ is not pure, then it follows that this condition cannot be satisfied; hence, mixed states cannot be self-tested.
 \footnote{It is, in fact, the structure of this argument which reveals that it is Definition~\ref{defnselftesteMixed} rather than Definition~\ref{defnselftesteLObased} that is presumed to be the mixed-state generalization of the standard definition of self-testing in works such as Ref.~\cite{vsupic2020self}.}

This proof does not go through if one allows the final state $W$ to exhibit correlation between the auxiliary system and the system on which $\rho$ is defined because although the purity of $\sigma$ implies that $W$ needs to be pure, a pure $W$ is compatible with the reduced state $\rho$ being mixed ($W$ simply needs to be entangled).

Indeed, if one adopts the alternative mixed-state generalization of the standard definition wherein the conversion $\sigma \mapsto \rho$ is evaluated relative to LO (Definition~\ref{defnselftesteLObased}), then one finds that mixed states {\em can} be self-tested.  It suffices only to note that there are mixed states in the LO-equivalence class of a given pure state.  Appendix~\ref{discrepanciesLOSRLO} provides an example. 
Note that the possibility of self-testing mixed states holds also under the LOSR-based definition of self-testing (Definition~\ref{defnselfteste}) given that LOSR subsumes LO, so that the LOSR-equivalence classes include the LO-equivalence classes.

It is similarly clear that 
convexly nonextremal boxes can serve to self-test states because there are convexly nonextremal boxes in the equivalence class of a given convexly extremal box (whether equivalence is judged relative to LO or LOSR).  Appendix~\ref{discrepanciesLOSRLO} provides an example. \footnote{This observation does not seem to conflict with any prior claims regarding the self-testing of states.  Although it has previously been claimed that convexly nonextremal boxes do {\em not} provide an opportunity for self-testing~\cite[Appendix C]{geometry2018} in the context of joint self-testing of state-measurement pairs (see Definition~\ref{defnselftesteOriginal} of Appendix~\ref{Oversight} in this manuscript), this claim does not conflict with the fact that convexly nonextremal boxes {\em do} provide an opportunity for self-testing of states alone.  Note that we have not attempted to see what a resource-theoretic perspective implies for how to define self-testing of state-measurement pairs, although we suspect that if one were to do so, one would find that the standard definition (Definition~\ref{defnselftesteOriginal} of Appendix~\ref{Oversight}) is also in need of revision.}

We have shown how one can extend the scope of self-testing to mixed states and to convexly nonextremal boxes. 
What accounts for the fact that such extensions  were not previously recognized?  The answer, we believe, is that without the resource-theoretic perspective on self-testing, the deficiencies in the standard definition were difficult to see.  In our view, therefore, the extension of scope of self-testing is an application of adopting the resource-theoretic perspective on self-testing.

Nonetheless, even if one grants that self-testing should be defined resource-theoretically, it is easy to see that 
 the closest resource-theoretic counterpart to the standard definition is Definition~\ref{defnselftesteLObased}, corresponding to taking LO as the set of free operations, not LOSR.  We therefore turn our attention now to articulating the precise differences that exist between these two choices, and to explaining why it is the LOSR-based definition that should be endorsed.

  As we noted in Section~\ref{sec:unifiedRT}, the tendency to {\em not} consider supplementing LO with shared randomness is widespread in the literature on Bell's theorem, and is likely 
a historical artifact of the conventional way of conceptualizing a Bell experiment.  
But as we argued in Section~\ref{sec:unifiedRT}, given that everyone agrees that entanglement and nonlocality are supposed to capture the {\em nonclassicality} of the correlational properties of states and boxes respectively, one has no choice but to include classical correlations---that is, shared randomness---in the set of free operations.   We conclude, therefore, that if there are any discrepancies in the verdicts regarding self-testing when considered relative to LO and when considered relative to LOSR, it is the verdicts based on LOSR that should be considered the correct ones.  In the following, we will show how such discrepancies can indeed arise.

The following lemma, which is the analogue of Lemma~\ref{convext} for state-to-state conversions, is useful here:
 \begin{samepage}
\begin{lem}
    \label{lem:conversion3}
    Consider the following statements about interconversion between $n$-partite states $\sigma$ and $\rho$
    \begin{compactenum}[(i)]
        \item  $\sigma \mapsto \rho $ by LOSR,
        \item  $\sigma \mapsto \rho $ by LO.
    \end{compactenum}
The following implications hold among these conditions:
 \begin{enumerate}
 \item[(a)] If $\rho$ is pure or $\rho$ is a mixed state that is LO-equivalent to a pure state, then  (i) and (ii) are equivalent.
\item[(b)] If $\rho$ is an arbitrary mixed state, then although (ii) $\implies$ (i), it can happen that (i) $\nimplies$ (ii).
 \end{enumerate} 
\end{lem}
\end{samepage}

\begin{proof}  (ii) $\implies$ (i) for all $\rho$ because every LO operation is an instance of an LOSR operation. It suffices, therefore, to consider the implication (i) $\implies$ (ii).  Claim (a).  The proof that (i) $\implies$ (ii) in the case where $\rho$ is pure is a generalization of its proof when one furthermore assumes that $\sigma$ is pure, which is  presented as Lemma~\ref{lem:conversion} of Section~\ref{sec:RTLOSR} and proven there.  As the generalization from pure $\sigma$ to mixed $\sigma$ is straightforward, we leave it to the reader to verify.  Next, suppose that $\rho$ is a mixed state that is LO-equivalent to a pure state, denoted $|\psi\rangle\langle \psi|$.  Note that $|\psi\rangle\langle \psi|$ is then also {\em LOSR-equivalent} to $\rho$, since LOSR subsumes LO.  By assumption, $\sigma \mapsto \rho$ by LOSR, so it follows from the LOSR-equivalence of $\rho$ and $|\psi\rangle\langle \psi|$  that $\sigma \mapsto |\psi\rangle\langle \psi|$ by LOSR.  Next, from the fact that  (i) $\implies$ (ii) for pure states, it follows that $\sigma \mapsto |\psi\rangle\langle \psi|$ by LO.  Finally, given the LO-equivalence of $\rho$ and $|\psi\rangle\langle \psi|$, we obtain $\sigma \mapsto \rho$ by LO. Claim (b). To see that there are mixed states $\rho$ for which (i) $\nimplies$ (ii), it suffices to consider the case where $\sigma$ is a product state, while $\rho=\sigma \otimes \zeta$ where $\zeta$ is any separable state that is not a product state (so that it can be prepared for free using LOSR operations, but not using LO operations). 
\end{proof}

This lemma implies the existence of cases in which there is a discrepancy between verdicts about self-testing using the LO-based and the LOSR-based approaches, and stipulates the conditions under which there is no such discrepancy.  The result is as follows:
\begin{prop}\label{prop:agree}
Consider the question of whether a box $B$ self-tests a state $\rho$ or not. 
\begin{enumerate}
\item[(a)] If $\rho$ is a pure state or a mixed state that is LO-equivalent to a pure state and $B$ is a convexly extremal box or a convexly nonextremal box that is  
LO-equivalent to a convexly extremal box, then the answer implied by the  LOSR-based definition of self-testing (Definition~\ref{defnselfteste}) coincides with the answer implied  by the LO-based definition (Definition~\ref{defnselftesteLObased}).  
\item[(b)] However, there are examples of state-box pairs wherein the state $\rho$ is mixed or the box $B$ is convexly nonextremal  such that the answers implied by the LOSR-based and LO-based definitions diverge.
\end{enumerate}
\end{prop}

\begin{proof}  Claim (a).  This follows  from the relevant definitions of self-testing together with claim (a) of Lemma~\ref{convext} and claim (a) of Lemma~\ref{lem:conversion3}.  
Claim (b). To find an example of a mixed state $\rho$ that is self-tested by a box $B$ relative to LOSR but not relative to LO, it suffices to note that given a pure state $|\psi\rangle$ that is self-tested by $B$, one can construct a mixed state $\rho$ that is LOSR-equivalent to $|\psi\rangle$ but not LO-equivalent to it.  An example is provided in Appendix~\ref{discrepanciesLOSRLO}. Similarly, to find an example of a box $B$ that self-tests a state $\rho$  relative to LOSR but not relative to LO, it suffices to note that given a convexly extremal box $B_{\rm ext}$ that self-tests $\rho$, one can construct a convexly nonextremal box $B$ that is LOSR-equivalent to $B_{\rm ext}$ but not LO-equivalent to it.   An example is provided in Appendix~\ref{discrepanciesLOSRLO}.
\end{proof}

Note that because the resource-theoretic definition of self-testing can be framed entirely in terms of {\em equivalence classes} of states and boxes rather than in terms of the states and boxes themselves (see Definition~\ref{AbstractDefnSelfTesting2} in Appendix~\ref{equivdefnself}), the fact that every pure state has mixed states in its LOSR-equivalence class and that every convexly extremal box has convexly nonextremal boxes in its LOSR-equivalence class implies that the convex extremality of resources is not necessary for them to appear in nontrivial self-testing relations. 
The notion of extremality that is most relevant for self-testing is instead one that concerns the position of a resource in the preorder over resources.   

Thus, our resource-theoretic approach to self-testing not only shows that mixed states and convexly nonextremal boxes can be self-tested, it makes clear that the scope of these includes not only the states and boxes that are LO-equivalent to their pure counterparts, but also those that are LOSR-equivalent to these.

Note that for the case of pure states and convexly extremal boxes, claim (a) of Lemma~\ref{lem:conversion2} implies that the LO-based definition of self-testing (Definition~\ref{defnselftesteLObased}) is equivalent to the standard definition (Definition~\ref{defnselftesteOld}).  Combined with Proposition~\ref{prop:agree}, this implies that the LOSR-based definition of self-testing (Definition~\ref{defnselfteste}) is also equivalent to the standard definition in this case.  Even in this case, however, the resource-theoretic approach yields some novel insights relative to the standard approach.

Specifically, while the standard definition might at first glance seem {\em ad hoc}, particularly with its appeal to a local isometric freedom, the resource-theoretic approach can provide a justification of its form.
It is seen to follow from two plausible claims underlying the resource-theoretic definition: (i) that 
  the essence of self-testing a resource $R$ by a resource $R'$ is certifying that $R$ is the unique least resource (up to equivalence) in the upward closure of $R'$
(as specified formally in Definition~\ref{AbstractDefnSelfTesting2} of Appendix~\ref{equivdefnself}),
  and (ii)    that for the particular case of self-testing entanglement by nonlocality,
the free operations relative to which convertibility relations should be judged are LOSR operations.

The utility of having a resource-theoretic
 justification of the definition of self-testing of states
is highlighted by the fact that it provides clear answers to challenges regarding self-testing in multipartite Bell scenarios. Consider the example of chiral states.
Chiral pure states---states which are inequivalent under local unitary transformations to their complex conjugate---are often proposed as examples of states that cannot be self-tested, since a chiral state and its complex conjugate give rise to the same set of boxes.\footnote{A similar situation arises for states which are not LOSR-equivalent to their partial transposes; see the results of Ref.~\cite{Hiraxh2020PPT}. Note also that chiral pure states exist even for three-qubit systems~\cite{Acin2000triclassify,Acin2001puretriclassify}.} A similar failure of self-testability arises for {\em measurements} rather than states~\cite{mckague2010generalized}.
However, some authors have proposed that the definition of self-testing be modified to include complex conjugation in addition to the  local isometric freedom, so that such states and measurements could also be said to be self-tested. In the context of self-testing of measurements, Ref.~\cite{mckague2010generalized} suggests such an extension, and Ref.~\cite{vsupic2020self} describes this move as ``in line with the general spirit of self-testing in which one aims to certify the measurements up to all the intrinsic limitations of the device-independent scenario.''  

Our systematic resource-theoretic approach to self-testing, together with our arguments for LOSR as the appropriate set of free operations, eliminate any ambiguity regarding the status of chiral states: they {\em cannot} be self-tested.  Specifically, for all $B$, if $\ket \psi$ is chiral and $\ket{\psi} \mapsto B $, then the complex conjugate $\ket{\psi^*}$ also maps to $B$, that is, $\ket{\psi^*} \mapsto B $, and yet $\ket{\psi^*} \not\mapsto \ket{\psi}$.  Here, all conversion relations are relative to LOSR. Hence, by Definition~\ref{defnselfteste}, $\ket{\psi}$ cannot be self-tested. The set of free operations in the definition of self-testing cannot sensibly be modified ad hoc to change this fact.

Nonetheless, our work makes clear that there {\em is} a sensible way of extending the notion of self-testing to handle examples like that of chiral states, a way which is also in line with the aim of certifying processes `up to all the intrinsic limitations' of the scenario. In particular, one can define a {\em relaxed} notion of self-testing wherein one certifies (up to upward closure) a {\em set} of equivalence classes of states, rather than a unique equivalence class. Here, conversions are still evaluated relative to LOSR, so self-testing statements according to such a relaxed notion can still be understood as certifying the nonclassicality of a state based purely on the observed box. The price one pays for using this relaxed notion is simply that the precision with which one certifies the entangled state is diminished. This relaxation is described in more detail in the discussion surrounding Eq.~\eqref{eqrhoUC2} in Appendix~\ref{equivdefnself}.

\section{The resource theory of LOSR-entanglement}\label{sec:RTLOSR}

Having motivated the need for understanding LOSR-entanglement, we now prove some basic results about the preorder of bipartite and multipartite entangled states under LOSR. 

We begin by formally defining the sets of Local Operations (LO), Local Operations and Shared Randomness (LOSR), and Local Operations and Classical Communication (LOCC).
First, recall that the most general quantum operation taking an $n$-partite system described by the Hilbert space $\bigotimes_{i=1}^{n} \mathcal{H}^{(i)}$ to one described by the Hilbert space $\bigotimes_{i=1}^{n} \mathcal{K}^{(i)}$ is a completely-positive trace-preserving linear map between the corresponding operator spaces, that is, $\mathcal{E}: \mathcal{L}\big( \bigotimes_{i=1}^{n} \mathcal{H}^{(i)}\big) \to  \mathcal{L}\big( \bigotimes_{i=1}^{n} \mathcal{K}^{(i)}\big)$ where $\mathcal{L}(\mathcal{H})$ denotes the set of linear operations on  $\mathcal{H}$. 
A {\em local operation} is one that factorizes across the $n$-partition, 
$\mathcal{E}=
\mathcal{E}^{(1)}\otimes \cdots \otimes \mathcal{E}^{(n)}$
 where  $\mathcal{E}^{(i)} :  \mathcal{L}(\mathcal{H}^{(i)}) \to \mathcal{L}(\mathcal{K}^{(i)})$, and LO is the set of all such operations.  The operations in LOSR  are all and only those that can be expressed as a convex mixture of local operations, $\mathcal{E}=\sum_{\alpha} w_{\alpha} \mathcal{E}_{\alpha}^{(1)}\otimes \cdots \otimes \mathcal{E}_{\alpha}^{(n)}$,
   where $\{w_{\alpha}\}_{\alpha}$ denotes a probability distribution.  
To model classical communication between the parties, recall that a local nondestructive quantum measurement with an outcome labelled $j$ can be represented as a set of completely positive~\cite{NielsenAndChuang,PhysRevA.100.022112} trace-nonincreasing maps, $\{ \mathcal{E}_j\}_j$, whose sum, $\sum_j  \mathcal{E}_j$,  is a completely positive trace-preserving map.  An operation describing a single round of communication from Alice to Bob is described as follows: Alice implements a local quantum measurement represented by $\{ \mathcal{E}^{(A)}_j\}_j$, communicates the outcome $j$ to Bob, and then Bob implements a local quantum measurement, the choice of which may depend on $j$,  $\{ \mathcal{E}^{(B)}_{k|j}\}_k$, such that the overall operation is $\sum_{j,k} \left(  \mathcal{I} \otimes \mathcal{E}^{(B)}_{k|j} \right) \circ \left( \mathcal{E}^{(A)}_j \otimes \mathcal{I} \right)$.  An LOCC operation on an $n$-partite system can be defined as any operation which is a composition of such operations (but where the communication is between any pair of parties). 

Because LOCC allows for shared randomness to be generated by one party and then shared with the other parties by classical communication, it strictly includes LOSR,
    \begin{align}
        \label{eq:inclusion}
        \LOSR \subset \LOCC.
    \end{align}
It follows, therefore, that if a conversion from a state $\rho$ to a state $\sigma$ (mixed or pure) is not possible by LOCC, then it is also not possible by LOSR either:
\begin{align}
    \rho \not\mapsto \sigma\;{\rm by}\; \LOCC \ \ &\implies\ \  \rho \not\mapsto \sigma \;{\rm by}\; \LOSR.\label{eq:LOCCstrongestcontrapositive}
\end{align}
Consequently, (i) any witness of LOCC-nonconvertibility is a witness of LOSR-nonconvertibility as well,
(ii) any monotone relative to LOCC~\cite{HHH99b,vidal2000entanglement,GConcurrence,datta2009min} is also a monotone relative to LOSR, and (iii) the LOCC-equivalence classes contain the LOSR-equivalence classes.
 
However, there are instances of LOSR-nonconvertibility that are not instances of LOCC-nonconvertibility, or equivalently, cases where LOCC allows a conversion that is not allowed under LOSR.
Consequently, (i) there are witnesses of LOSR-nonconvertibility that are not witnesses of LOCC-nonconvertibility, (ii)
 there are LOSR monotones that are not LOCC monotones,
so that 
 a complete set of LOCC-entanglement monotones does \emph{not} constitute a complete set of LOSR-entanglement monotones,
and (iiii) it is {\em a priori} possible that some or all of the LOSR-equivalence classes of states are {\em strict} subsets of the corresponding LOCC-equivalence class of states.
Despite the {\em a priori} possibility of a distinction between LOSR-equivalence classes and LOCC-equivalence classes, for the case of pure states at least, one can show that this possibility is not realized.
Here, we make reference to the set of {\em Local Unitaries} (LU), which are simply local operations that are unitary.

\begin{lem}
    \label{lem:equivalence}
    Consider the equivalence relation associated to reversible     interconvertibility.\footnote{One resource is said to be {\em reversibly interconvertibl}e with another if conversions in both directions are possible under the free operations.  Note that the free operation achieving the conversion in one direction need not be the inverse of the free operation achieving the conversion in the opposite direction. }
    The following statements about equivalence of the $n$-partite pure states $|\psi\rangle$ and $|\phi\rangle$  (denoted $|\psi\rangle \leftrightarrows |\phi\rangle$) are equivalent:
    \begin{compactenum}[(i)]
        \item  $|\psi\rangle \leftrightarrows |\phi\rangle$ relative to LOSR.
        \item  $|\psi\rangle \leftrightarrows |\phi\rangle$ relative to LO.
        \item  $|\psi\rangle \leftrightarrows |\phi\rangle$ relative to LU.
        \item  $|\psi\rangle \leftrightarrows |\phi\rangle$ relative to LOCC.
    \end{compactenum}
\end{lem}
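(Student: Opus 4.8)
The plan is to exploit the nesting of the four classes of free operations. Local unitaries are a special case of general local operations; these are in turn the special case of LOSR in which the shared random variable is trivial; and LOSR is contained in LOCC by Eq.~\eqref{eq:inclusion}. Hence $\LU \subseteq \LO \subseteq \LOSR \subseteq \LOCC$. Because any interconversion witnessed by operations drawn from a smaller class is \emph{a fortiori} witnessed by operations from any larger class containing it, we immediately obtain the chain of implications (iii)$\Rightarrow$(ii)$\Rightarrow$(i)$\Rightarrow$(iv). The whole lemma therefore reduces to closing the cycle with the single ``hard'' implication (iv)$\Rightarrow$(iii): that LOCC-interconvertibility of two pure states forces them to be related by a local unitary.

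To prove (iv)$\Rightarrow$(iii) I would first show that reversible LOCC conserves the entanglement across every bipartition. Fix any bipartition $A \mid \bar A$ of the $n$ parties. A multipartite LOCC protocol, read across this cut, is a bipartite LOCC protocol, since all communication and operations internal to each side are free from the cut's viewpoint; a deterministic conversion $\ket{\psi} \mapsto \ket{\phi}$ therefore induces a deterministic bipartite conversion across the cut. By Nielsen's majorization theorem the vector of squared Schmidt coefficients of $\ket{\psi}$ across $A \mid \bar A$ is then majorized by that of $\ket{\phi}$, while the reverse conversion $\ket{\phi} \mapsto \ket{\psi}$ yields the opposite majorization. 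Mutual majorization of probability vectors implies equality up to permutation, so $\Tr_{\bar A}\ket{\psi}\!\bra{\psi}$ and $\Tr_{\bar A}\ket{\phi}\!\bra{\phi}$ have identical spectra, and this holds for every subset $A$.

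It remains to upgrade ``equal reduced spectra across all cuts'' to genuine local-unitary equivalence. For bipartite states this step is immediate: equal Schmidt coefficients mean the two Schmidt decompositions differ only in the choice of local Schmidt bases, which is exactly a local unitary. The substantive obstacle is the case $n>2$, where matching all reduced spectra is necessary but not sufficient for LU-equivalence. Here I would exploit the rigidity of deterministic pure-state transformations: in any branch of a deterministic protocol the unnormalised post-measurement state must remain proportional to a fixed pure target, which forces each local Kraus operator to act as a scalar multiple of an isometry on the relevant support; the conservation of entanglement across every cut together with reversibility then promotes these isometries to unitaries and forbids any genuine information gain by the measuring party, so that the composite protocol collapses to a product of local unitaries. (Equivalently, one may invoke the standard fact that LOCC-interconvertibility of pure multipartite states coincides with LU-equivalence.) I expect this multipartite upgrade to be the main difficulty, precisely because it is the regime in which matching local spectra no longer pins down the state and one must instead control the detailed branch structure of the protocol.
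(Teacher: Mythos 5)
Your proposal takes essentially the same route as the paper: the paper's entire proof consists of noting the inclusion chain $\LU \subset \LO \subset \LOSR \subset \LOCC$ and then citing the known result of Ref.~\cite{bennett2000exact} that LOCC-equivalence of pure states coincides with LU-equivalence, so that \LO{} and \LOSR{} are sandwiched in between; your chain of implications (iii)$\Rightarrow$(ii)$\Rightarrow$(i)$\Rightarrow$(iv) plus the hard step (iv)$\Rightarrow$(iii) is the same argument, and your parenthetical fallback of invoking that standard fact is exactly what the paper does.

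One caveat on your optional inline derivation of (iv)$\Rightarrow$(iii) for $n>2$: the logic of the ``rigidity'' step is scrambled. It is not true that in a deterministic LOCC protocol ``the unnormalised post-measurement state must remain proportional to a fixed pure target'' at intermediate rounds---different measurement outcomes may produce genuinely different intermediate states, so long as each is separately convertible onward to the target (e.g.\ measuring in the Schmidt basis en route to a product state). Hence one cannot extract ``Kraus operator $\propto$ isometry'' from that claim and only afterwards invoke conservation. The correct order is the reverse: first use monotonicity of the onward protocol together with non-increase of entanglement entropy on average to conclude that the entropy across each cut is conserved in \emph{every} branch; then strict concavity of the von Neumann entropy forces all branches to have the identical reduced state on the complementary side, which by uniqueness of purifications forces each Kraus operator to act as a scalar multiple of a unitary on the support; induction over rounds then collapses the protocol to local unitaries. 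Since you explicitly offer the citation route as an equivalent alternative, your proof stands, but the sketch as written would not survive as a self-contained argument.
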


\begin{proof}
We note that the different sets of operations 
 form a chain of strict inclusions:
    \begin{align}
        \label{eq:inclusion_chain}
        \LU \subset \LO \subset \LOSR \subset \LOCC.
    \end{align}
Next, we appeal to the known result
  that two pure states $|\psi\rangle$ and $|\phi\rangle$
     are LOCC-equivalent if and only if they are LU-equivalent \cite{bennett2000exact}. Since the set-theoretic inclusion relations are such that LO and LOSR are both between LU and LOCC in Eq.~\eqref{eq:inclusion_chain}, the result follows. 
\end{proof}

It follows from Lemma~\ref{lem:equivalence} and the discussion above that {\em all} of the differences between the LOCC and LOSR preorders on pure states correspond to pairs of equivalence classes which are strictly ordered under LOCC but are incomparable under LOSR.

The LOCC preorder over pure states has been completely characterized in the bipartite case~\cite{nielsen1999conditions}, and in the following we obtain the analogous result for the LOSR preorder.  We also express some results concerning the  $n$-partite case for $n>2$.

\begin{samepage}
\begin{lem}
    \label{lem:conversion}
    The following statements about interconversion between $n$-partite pure states $|\psi\rangle$ and $|\phi\rangle$ are equivalent 
    \begin{compactenum}[(i)]
        \item  $|\psi\rangle \mapsto |\phi\rangle$ by LOSR.
        \item  $|\psi\rangle \mapsto |\phi\rangle$ by LO.
        \item  $\exists |\eta_1\rangle \dots |\eta_n\rangle, |\zeta\rangle : |\psi\rangle |\eta_1\rangle \dots |\eta_n\rangle \leftrightarrows
          |\phi\rangle |\zeta\rangle$ by LU, as illustrated in Fig.~\ref{main_Lemma_three_parties}.
    \end{compactenum}
\end{lem}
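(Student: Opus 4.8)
The plan is to establish the cyclic chain of implications (i) $\Rightarrow$ (ii) $\Rightarrow$ (iii) $\Rightarrow$ (i), exploiting the inclusion $\LO \subset \LOSR$ from Eq.~\eqref{eq:inclusion_chain} so that the step (iii) $\Rightarrow$ (i) follows from a direct construction which happens to be an LO protocol, and hence a fortiori an LOSR protocol. Throughout, I use that a ``conversion'' $\mapsto$ denotes a deterministic (trace-preserving) free operation, and that, since local unitaries are reversible, LU-equivalence of two pure states means precisely that they are related by a single local unitary $\bigotimes_j U_j$.

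For (i) $\Rightarrow$ (ii), I would write a general LOSR channel as a convex mixture over the shared random variable $\lambda$, namely $\mathcal{N} = \sum_\lambda p(\lambda)\, \bigotimes_{j=1}^n \mathcal{E}_j^{(\lambda)}$, where each $\mathcal{E}_j^{(\lambda)}$ is a local CPTP map. If $\mathcal{N}(\ket{\psi}\!\bra{\psi}) = \ket{\phi}\!\bra{\phi}$, then because a pure state is an extreme point of the convex set of density operators, each branch $\big(\bigotimes_j \mathcal{E}_j^{(\lambda)}\big)(\ket{\psi}\!\bra{\psi})$ with $p(\lambda)>0$ must itself equal $\ket{\phi}\!\bra{\phi}$. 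Any single such branch is a product of local maps with no shared randomness, i.e.\ an LO protocol achieving the conversion, which is exactly (ii).

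The main obstacle is the step (ii) $\Rightarrow$ (iii), which is a Stinespring-dilation argument that must be carried out \emph{party-locally}. Writing the LO operation as $\bigotimes_j \mathcal{E}_j$, I would dilate each local map by the Stinespring theorem~\cite{Stinespring1955} as $\mathcal{E}_j(\cdot) = \Tr_{E_j}\!\big[V_j(\,\cdot \otimes \ket{\eta_j}\!\bra{\eta_j})V_j^\dagger\big]$, where the ancilla $\ket{\eta_j}$, the environment $E_j$, and the unitary $V_j$ all live in party $j$'s laboratory, so that $\bigotimes_j V_j$ is again a local unitary. The conversion condition then states that the pure state $\ket{\Phi} := \big(\bigotimes_j V_j\big)\big(\ket{\psi}\bigotimes_j \ket{\eta_j}\big)$, upon tracing out the environments $E_1\dots E_n$, reduces to the pure state $\ket{\phi}\!\bra{\phi}$. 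A purification whose marginal on the system registers is pure must factorize (a one-line Schmidt-rank argument), so $\ket{\Phi} = \ket{\phi}\ket{\zeta}$ for some residual $\ket{\zeta}$ supported on $E_1\dots E_n$, one register per party. This is exactly the LU-equivalence asserted in (iii), with the relating local unitary being $\bigotimes_j V_j$.

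Finally, (iii) $\Rightarrow$ (i) is obtained by running the equivalence in the forward direction: each party locally prepares its ancilla $\ket{\eta_j}$, the parties jointly apply the local unitary relating $\ket{\psi}\bigotimes_j\ket{\eta_j}$ to $\ket{\phi}\ket{\zeta}$, and each party discards its share of the residual $\ket{\zeta}$; since $\ket{\phi}\ket{\zeta}$ is a product across the system/residual cut, the output is exactly $\ket{\phi}$. This is an LO protocol, hence an LOSR one, closing the cycle. The delicate points to verify are that the Stinespring environments and ancillas can be kept strictly local to each party (guaranteeing that $\bigotimes_j V_j$ is genuinely LU and not merely a global unitary), and the elementary but essential fact that a pure state with a pure marginal across a cut is a product across that cut.
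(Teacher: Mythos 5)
Your proposal is correct and follows essentially the same route as the paper's own proof: extremality of pure states to collapse the shared-randomness mixture for (i) $\Rightarrow$ (ii), party-local Stinespring dilation plus the fact that a pure marginal forces factorization for (ii) $\Rightarrow$ (iii), and the trivial observation that the LU construction is itself a free LO (hence LOSR) protocol for the remaining implication. The only difference is organizational (a single cycle of implications rather than two pairwise equivalences), with no difference in substance.
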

\end{samepage}

\begin{figure}
\centering
    \includegraphics[scale=1]{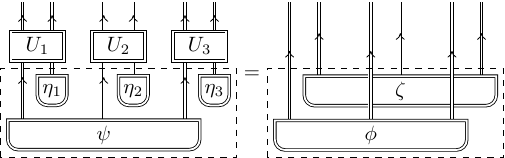}
    \caption{A depiction of condition (iii) of Lemma~\ref{lem:conversion} for $n = 3$ parties. 
    Note that for each $U_i$, the factorization of its input space into a pair of subsystems need not be the same as the factorization of its output space.}
    \label{main_Lemma_three_parties}
\end{figure}

\begin{proof}
    (ii) $\implies$ (i) holds trivially, because every $\LO$ operation is an $\LOSR$ operation. (i) $\implies$ (ii) follows, because every $\LOSR$ operation is a mixture of $\LO$ operations, and given that the state $\ket{\phi}$ is pure, it follows that
     the states resulting from each local operation of this mixture must be proportional to $|\phi\rangle$. Consequently, any one of the $\LO$ operations in the mixture can convert $|\psi\rangle$ to $|\phi\rangle$ on its own.

    (iii) $\implies$ (ii) trivially, because preparing local auxiliary systems, applying local unitaries, and taking partial traces are free $\LO$ operations. (ii) $\implies$ (iii) follows from the Stinespring dilation theorem~\cite{Stinespring1955,Paulsen2003Apr}: any $\LO$ operation is an $n$-partite tensor product of local trace-preserving completely positive operations, and each of these 
    can be implemented by introducing an auxiliary system prepared in a pure state $|\eta_i\rangle $, coupling this auxiliary system to the system of interest by a unitary $U_i$, then tracing out some subsystem of this composite.
  Given that the final reduced state on the output $n$-partite system must be the pure state $|\phi\rangle$, it follows that a subsystem of the output composite can only be traced out if it is unentangled with the $n$-partite system.  Consequently, the final joint state of the output $n$-partite system and the systems that are traced out
  must be of the form $|\phi\rangle \otimes |\zeta\rangle$ for some (possibly entangled) ``junk'' state $|\zeta\rangle$.
\end{proof}

Evidently, Lemmas~\ref{lem:equivalence}~and~\ref{lem:conversion} provide avenues for understanding the LOSR-entanglement preorder on pure states by utilizing known results about LU-equivalence~\cite{Kraus2010Local,liu2012local,Acin2001puretriclassify,Barnum2001,biamonte2013tensor}. 

For instance, in the bipartite case, it is well known that local unitary equivalence between pure  states has a particularly simple form: if $\lambda_{\psi}$ denotes the vector of squared Schmidt coefficients of a state $|\psi\rangle$, and $v^{\downarrow}$ denotes the permutation of a vector $v$'s components such that they are ordered from largest to smallest, then $\ket \psi$ and $\ket \phi$ are LU-equivalent if and only if 
\begin{align}
    \label{eq:bipartite_LUequivalence}
     \lambda_{\psi}^{\downarrow} = \lambda_{\phi}^{\downarrow},
\end{align}
that is, if and only if their vectors of squared Schmidt coefficients are equal up to permutation.
Lemma~\ref{lem:equivalence} then implies that this is also the condition for LOSR-equivalence between pure states.  

This characterization of the LU-equivalence of bipartite pure states also implies that condition (iii) in Lemma~\ref{lem:conversion} has a particularly simple form\footnote{The equivalence of condition (ii) and this form of condition (iii) for the bipartite case is stated as Exercise 12.22 of Ref.~\cite{NielsenAndChuang}.}, resulting in the following corollary concerning LOSR-convertibility:
\begin{cor}
    \label{cor:bipartite}
    A bipartite pure state $\ket{\psi}$ can be converted to a bipartite pure state $\ket{\phi}$ by LOSR if and only if
    \begin{align}
    \label{eq:bipartite_extension}
    \exists \ket \zeta: \lambda_{\psi}^{\downarrow} = (\lambda_{\phi} \otimes \lambda_{\zeta})^{\downarrow}.
    \end{align}
 \end{cor}

It follows that the pair of states $|\psi_{\rm max}\rangle$ and $|\psi_{\rm partial}\rangle$ from the main text, which have the same Schmidt rank but different vectors of Schmidt coefficients, are such that neither converts to the other under LOSR, i.e., they are LOSR-incomparable. This justifies Eq.~\eqref{notconvmaxtopartial}.

In the case of $n$-partite systems for $n>2$, a necessary condition for LU-equivalence between pure states is
the equality of squared Schmidt coefficients for each bipartition $\beta$ of the $n$-partite system, but (unlike the bipartite case) this is not a sufficient condition~\cite{Kraus2010Local,Acin2001puretriclassify}. 
Nevertheless, combining this condition with Lemma~\ref{lem:conversion} yields the following corollary concerning LOSR-convertibility.

\begin{cor}
    \label{cor:multipartite_extension}
    An $n$-partite pure state $\ket{\psi}$ can be converted to an $n$-partite pure state $\ket{\phi}$ by LOSR only if
    \begin{align}
        \label{eq:multipartite_extension}
\exists \ket \zeta, \forall \beta : (\lambda^{(\beta)}_{\psi})^{\downarrow} = (\lambda^{(\beta)}_{\phi} \otimes \lambda^{(\beta)}_{\zeta})^{\downarrow},
    \end{align}
    where for a pure state $\ket{\omega}$, $\lambda^{(\beta)}_{\omega}$ denotes the vector of its squared Schmidt coefficients 
    with respect to bipartition $\beta$ of the $n$-partite system.
\end{cor}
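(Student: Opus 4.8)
The plan is to chain together the characterization of LOSR-convertibility from Lemma~\ref{lem:conversion} with the bipartition-wise necessary condition for multipartite LU-equivalence. First I would invoke the equivalence (i)~$\Leftrightarrow$~(iii) of Lemma~\ref{lem:conversion}: the hypothesis that $\ket{\psi}$ can be converted to $\ket{\phi}$ by LOSR supplies local auxiliaries $\ket{\eta_1}, \dots, \ket{\eta_n}$ and some (possibly entangled) state $\ket{\zeta}$ such that $\ket{\psi}\ket{\eta_1}\cdots\ket{\eta_n}$ and $\ket{\phi}\ket{\zeta}$ are LU-equivalent. Crucially, this produces a \emph{single} $\ket{\zeta}$, which will serve as the state whose existence is asserted in Eq.~\eqref{eq:multipartite_extension}.

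Next I would apply the necessary condition for multipartite LU-equivalence noted just before the corollary: LU-equivalent pure states must have identical vectors of squared Schmidt coefficients (up to permutation) across every bipartition $\beta$. Applied to the two LU-equivalent states above, this yields, for each $\beta$,
\[
\bigl(\lambda^{(\beta)}_{\psi \otimes \eta_1 \otimes \cdots \otimes \eta_n}\bigr)^{\downarrow} = \bigl(\lambda^{(\beta)}_{\phi \otimes \zeta}\bigr)^{\downarrow}.
\]

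The heart of the argument is then to evaluate each side using the elementary fact that, with respect to a fixed bipartition, the squared-Schmidt-coefficient vector of a tensor product of pure states is the (unordered) tensor product of the individual vectors. On the left, each auxiliary $\ket{\eta_k}$ is held entirely by party $k$ and hence lies wholly on one side of any bipartition $\beta$; its Schmidt vector relative to $\beta$ is therefore the singleton $(1)$, and tensoring by it leaves $\lambda^{(\beta)}_{\psi}$ unchanged. Thus the left-hand vector reduces to $(\lambda^{(\beta)}_{\psi})^{\downarrow}$. On the right, both $\ket{\phi}$ and $\ket{\zeta}$ are genuinely $n$-partite and may be entangled across $\beta$, so the tensor-product rule gives $(\lambda^{(\beta)}_{\phi} \otimes \lambda^{(\beta)}_{\zeta})^{\downarrow}$. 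Equating the two recovers Eq.~\eqref{eq:multipartite_extension}, with the same $\ket{\zeta}$ valid for all $\beta$.

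I expect the only delicate point to be the bookkeeping of which systems sit on which side of each bipartition---specifically, verifying that the local auxiliaries $\ket{\eta_k}$ contribute trivially while $\ket{\zeta}$ contributes its full bipartition-dependent Schmidt vector. Since the necessary condition for LU-equivalence is only one-directional for $n > 2$ (equality of bipartition Schmidt spectra is not sufficient for LU-equivalence beyond the bipartite case), the corollary is correspondingly stated as an implication rather than an equivalence, and no converse is attempted; this asymmetry is precisely what permits the clean reduction above.
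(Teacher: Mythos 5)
Your proposal is correct and follows essentially the same route as the paper, which obtains the corollary by combining condition (iii) of Lemma~\ref{lem:conversion} with the fact that equality of squared Schmidt coefficients across every bipartition is necessary for multipartite LU-equivalence, with the local auxiliaries $\ket{\eta_k}$ dropping out because each lies wholly on one side of any bipartition. Your explicit bookkeeping of the tensor-product rule for Schmidt spectra, and your remark on why the statement is only an implication for $n>2$, simply spell out what the paper leaves implicit.
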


Despite the insufficiency of Corollary~\ref{cor:multipartite_extension} for determining LOSR-convertibility (which we demonstrate with an explicit example in Appendix~\ref{insufficiency}), it provides a necessary condition for LOSR-convertibility which
 is {\em not} a necessary condition for LOCC-convertibility.
  As an example, by appealing to Corollary~\ref{cor:multipartite_extension}, we show in Appendix~\ref{BellGHZ} that $\ket{\psi_{\rm 2Bell}}$ and $\ket{\psi_{\rm GHZ}}$ (which are strictly ordered relative to LOCC) are incomparable relative to LOSR, thereby justifying Eq.~\eqref{noconvBellGHZ}. 

The condition stipulated by Corollary~\ref{cor:multipartite_extension}, that there exists a pure state $\ket{\zeta}$ satisfying Eq.~\eqref{eq:multipartite_extension},
is a non-trivial requirement. Indeed, the problem of deciding whether there exists an $n$-partite pure state $\ket{\zeta}$ admitting a given set of vectors of squared Schmidt coefficients $\lambda^{(\beta)}_{\zeta}$ for a given
 family of bipartitions $\beta$ is known as the spectral quantum marginals problem~\cite{klyachko2006quantum,walter2013entanglement}.
Appendix~\ref{alg} describes a simple method for computing  the set of vectors $\lambda_{\zeta}^{(\beta)}$ that satisfy Eq.~\eqref{eq:multipartite_extension}, given $\lambda_{\psi}^{(\beta)}$ and $\lambda_{\phi}^{(\beta)}$, which provides the input to the spectral quantum marginals problem.

We also note a second corollary of Lemma~\ref{lem:conversion}.
 Recall that the Schmidt rank of a state $\ket{\psi}$ with respect to a bipartition $\beta$, henceforth denoted $\mathrm{SR}_{\psi}^{(\beta)}$, is defined as the 
number of non-zero entries of the vector of squared Schmidt coefficients $\lambda_{\psi}^{(\beta)}$.

    \begin{cor}
        \label{cor:equivalent_or_incomparable}
        If two pure states $\ket{\psi}$ and $\ket{\phi}$ have the same Schmidt ranks for all bipartitions, i.e., 
        \begin{align}
            \label{eq:equal_schmidt_ranks}
            \forall \beta: \mathrm{SR}_{\psi}^{(\beta)} = \mathrm{SR}_{\phi}^{(\beta)},
        \end{align}
        then they are either 
        LOSR-equivalent (equivalently, LU-equivalent)
         or else LOSR-incomparable.
    \end{cor}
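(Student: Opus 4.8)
The plan is to prove the stated dichotomy by a direct argument: since the hypothesis \eqref{eq:equal_schmidt_ranks} is symmetric under exchanging $\ket\psi$ and $\ket\phi$, it suffices to rule out the strictly-ordered case, i.e.\ to show that if the two states are \emph{comparable} --- say $\ket\psi \mapsto \ket\phi$ by LOSR, without loss of generality --- then they are in fact LOSR-equivalent. The starting point is condition (iii) of Lemma~\ref{lem:conversion}: the assumed convertibility $\ket\psi \mapsto \ket\phi$ furnishes local product ancillas $\ket{\eta_1}\dots\ket{\eta_n}$ and some a~priori possibly entangled leftover state $\ket\zeta$ for which $\ket\psi\ket{\eta_1}\dots\ket{\eta_n} \rightleftharpoons \ket\phi\ket\zeta$ by LU.

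First I would pin down the entanglement structure of $\ket\zeta$ via a Schmidt-rank count across every bipartition $\beta$. Each local ancilla $\ket{\eta_i}$ lives in a single party's space and is therefore a product state across every cut, so tensoring them in leaves every Schmidt rank unchanged: $\mathrm{SR}^{(\beta)}_{\psi\eta} = \mathrm{SR}^{(\beta)}_{\psi}$. Local unitaries preserve Schmidt ranks, and the Schmidt rank of a tensor product is the product of the Schmidt ranks, so the LU-equivalence yields $\mathrm{SR}^{(\beta)}_{\psi} = \mathrm{SR}^{(\beta)}_{\phi}\cdot \mathrm{SR}^{(\beta)}_{\zeta}$ for every $\beta$. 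Feeding in the hypothesis $\mathrm{SR}^{(\beta)}_{\psi} = \mathrm{SR}^{(\beta)}_{\phi}$ and using $\mathrm{SR}^{(\beta)}_{\phi}\ge 1$ forces $\mathrm{SR}^{(\beta)}_{\zeta} = 1$ for all $\beta$. In particular $\ket\zeta$ has Schmidt rank one across every single-party-versus-rest cut, whence it is fully product, $\ket\zeta = \ket{\zeta_1}\dots\ket{\zeta_n}$.

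Finally I would exploit that both $\ket{\eta_1}\dots\ket{\eta_n}$ and $\ket\zeta$ are now fully product to reverse the conversion using LO alone. Realizing the LU-equivalence by a local unitary $V = V_1\otimes\dots\otimes V_n$ with $V(\ket\phi\ket\zeta) = \ket\psi\ket{\eta_1}\dots\ket{\eta_n}$, the parties can convert $\ket\phi$ to $\ket\psi$ by (a) each locally preparing their factor $\ket{\zeta_i}$, (b) each applying $V_i$, and (c) each discarding their local ancilla $\ket{\eta_i}$; since the ancillas factor off as a pure product, discarding leaves exactly $\ket\psi$. Thus $\ket\phi\mapsto\ket\psi$ by LO, hence by LOSR, and together with the assumed $\ket\psi\mapsto\ket\phi$ this gives $\ket\psi\rightleftharpoons\ket\phi$ under LOSR; Lemma~\ref{lem:equivalence} then upgrades this to LU-equivalence, completing the dichotomy. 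I expect the main subtlety to be resisting the temptation to route through Corollary~\ref{cor:multipartite_extension} alone: its Schmidt-coefficient condition would only deliver $(\lambda^{(\beta)}_{\psi})^{\downarrow} = (\lambda^{(\beta)}_{\phi})^{\downarrow}$ for all $\beta$, which is \emph{not} sufficient for LU-equivalence in the multipartite case. The clean route is to work with condition (iii) of Lemma~\ref{lem:conversion} directly, so that the reversible LU-equivalence is available and the only fact one must extract from the Schmidt ranks is the product structure of the leftover $\ket\zeta$ --- which is precisely what lets one reverse the conversion without invoking any multipartite LU-classification.
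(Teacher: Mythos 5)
Your proof is correct and follows essentially the same route as the paper's: invoke condition (iii) of Lemma~\ref{lem:conversion}, use the equality of Schmidt ranks across all bipartitions to force $\mathrm{SR}_{\zeta}^{(\beta)}=1$ and hence $\ket{\zeta}$ to be a product state, and conclude that the conversion is reversible (the paper phrases this as the resulting LU-condition being symmetric in $\ket{\psi}$ and $\ket{\phi}$, while you make the reverse LO protocol explicit---these are the same observation), with Lemma~\ref{lem:equivalence} supplying the final identification of LOSR-equivalence with LU-equivalence. Your added details (the multiplicativity of Schmidt ranks under tensor product and the explicit prepare-rotate-discard reversal) simply unpack steps the paper leaves implicit.
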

    \begin{proof}
Suppose that $\ket{\psi} \mapsto \ket{\phi}$ under LOSR, so that condition (iii) from Lemma~\ref{lem:conversion}
         is satisfied.           Under the assumption that 
        Eq.~\eqref{eq:equal_schmidt_ranks} holds, it follows that the state $\ket{\zeta}$ in this condition must be such that $\forall \beta: \mathrm{SR}_{\zeta}^{(\beta)} = 1$.  But this implies that $\ket{\zeta}$ is a product state, $\ket \zeta = \ket {\zeta_1} \cdots \ket {\zeta_n}$, in which case 
condition (iii) from Lemma~\ref{lem:conversion} reduces to $\exists |\eta_1\rangle \dots |\eta_n\rangle, |\zeta_1\rangle \dots |\zeta_n\rangle : |\psi\rangle |\eta_1\rangle \dots |\eta_n\rangle \leftrightarrows |\phi\rangle |\zeta_1\rangle \dots |\zeta_n\rangle$ by LU,
         which is a condition that is symmetric in $\ket \psi$ and $\ket \phi$.  The same logic holds if one supposes that $\ket{\phi} \mapsto \ket{\psi}$ under LOSR. Consequently, either both conversion relations hold or neither does.  In other words, either $\ket{\psi}$ and $\ket{\phi}$ are LOSR-equivalent  or they are LOSR-incomparable.  Finally, LOSR-equivalence is the same as LU-equivalence by Lemma~\ref{lem:equivalence}.\footnote{By the structure of this proof, it is clear that one can replace the assumption of sameness of Schmidt ranks for all bipartitions with
  any other condition that is symmetric in $\ket \psi$ and $\ket \phi$ and which establishes that $\ket \zeta$ must be a product state. For instance, if two pure states $\ket \psi$ and $\ket \phi$ have the same marginal entropies for each subsystem (i.e., are ``marginally isentropic'') then one can also conclude that $\ket{\zeta}$ is a product state and consequently that $\ket \psi$ and $\ket \phi$ are either LOSR-equivalent or LOSR-incomparable.  This result is similar to Theorem 1 of Ref.~\cite{bennett2000exact}, which concerns LOCC rather than LOSR, and which establishes that if $\ket \psi$ and $\ket \phi$ are marginally isentropic, then they are either LU-equivalent or LOCC-incomparable.   Because LOCC-incomparability implies LOSR-incomparability, and LU-equivalence is the same as LOSR-equivalence (by Lemma~\ref{lem:equivalence}), the analogue of Corollary~\ref{cor:equivalent_or_incomparable} for marginally isentropic states can be obtained as a corollary of Theorem 1 of Ref.~\cite{bennett2000exact}.}
    \end{proof}

Applying Corollary~\ref{cor:equivalent_or_incomparable} to the case of bipartite states, it follows that any pair of pure bipartite states having equal Schmidt rank but different squared Schmidt coefficients are incomparable under LOSR.  Thus, for example, it follows that all two-qubit entangled pure states are LOSR-incomparable.\footnote{This provides a second way of seeing that the pair of bipartite pure states $|\psi_{\rm max}\rangle$ and $|\psi_{\rm partial}\rangle$ are LOSR-incomparable.}

Another result about LOSR-entanglement for pure bipartite states that follows from Lemma~\ref{lem:conversion}
concerns entanglement catalysis.  Specifically, whereas there is nontrivial catalysis for pairs of bipartite pure states in the theory of LOCC-entanglement~\cite{JonathanPlenio}, this is not the case in the theory of LOSR-entanglement. If $\ket \psi$ cannot be converted to $\ket \phi$ under LOSR, i.e., $\ket{\psi} \not\mapsto \ket{\phi}$, then $\ket \psi$ also cannot be {\em catalytically} converted to $\ket \phi$ under LOSR, i.e., there is no pure state $\ket \chi$ such that $\ket{\psi} \otimes \ket{\chi} \mapsto \ket{\phi} \otimes \ket{\chi}$.  The proof is provided in Appendix~\ref{catalysis}. Consequently, whether one uses LOCC or LOSR as the free operations also makes a difference for the catalytic order over entangled states.

It is also worth noting that one can easily construct LOSR-entanglement monotones that are not also LOCC-entanglement monotones.  We already noted that for every instance of the anomaly of nonlocality, one can define an LOSR-entanglement monotone in terms of the yield of nonlocal box appearing in that anomaly (see Appendix~\ref{monotones} for details). Though conceptually natural, these monotones are likely to have limited practical utility unless one can provide a general solution to the optimization that is part of their definition.  If the set of states on which monotonicity is required to hold is taken to be the set of pure states rather than the full set of states,  i.e., if one is content to have a pure-state monotone rather than a true monotone, then it is not difficult to construct examples of pure-state LOSR-entanglement monotones that are not also pure-state LOCC-entanglement monotones and for which a closed-form expression can be given. For instance, the quantity known as {\em entanglement spread}~\cite{Harrow2010} is such a monotone, as is every function in the 2-parameter family of the form $\Delta_{\alpha \beta}(\psi) = S_{\alpha}(\psi) - S_{\beta}(\psi)$, where $0\le \alpha < \beta\le \infty$,
 and where $S_{\alpha}$ is the order-$\alpha$ Renyi entropy (entanglement spread is the $\alpha=\infty$, $\beta=0$ member of this family). This follows from a result in Ref.~\cite{Hayden2003}; see Theorem 1 and Remarks 1 and 2 therein.\footnote{We thank Patrick Hayden for bringing this fact to our attention.}  Other pure-state LOSR-entanglement monotones for pure states can easily be constructed using our characterization of the LOSR order in Lemma~\ref{lem:conversion} and Corollary~\ref{cor:multipartite_extension}.

\section{The ordering of entangled states depends on the network structure}\label{NetworkStructure}

As stipulated in the introduction, we take the entanglement properties of multipartite quantum states to be {\em their nonclassical correlational properties}. 
  The question of 
what counts as nonclassicality and how to order states by their degree of nonclassicality,
 in turn,  can only be answered relative to a choice of network structure between the parties, that is, a specification of 
 the particular communication channels and common sources that link them.\footnote{Note that the specification of whether the channels and sources are classical or quantum  is not considered to be part of the specification of the network structure.  In our usage, the network structure refers merely to the pattern of connectivity, and does not distinguish classical and quantum. } (See Ref.~\cite{wolfe2020quantifying} for a discussion of this point.)

Consider first how this claim is true if one focuses on boxes, rather than states.
Imagine a box for which the conditional probability distribution of its outputs given its inputs violates a Bell inequality.  Should the correlational properties of such a box be considered nonclassical?   If the box is realized in a network wherein the parties only have access to a common source (Fig.~\ref{fig:common_source_networks}), then it should, because in this context, such violations certify the nonclassicality of the source~\cite{Wood2015}.  If, however, the box is realized in a network wherein the parties have communication channels  (Fig.~\ref{fig:two_way_channels})\footnote{And no restriction is stipulated on the spatio-temporal relations between the inputs and outputs of the box---see discussion below.}, then it should not, because any Bell inequality violation is easily simulated using classical communication between the wings.

\begin{figure}[htb]
\centering
    \subfloat[A common source shared by all parties.]{
        \makebox[0.48\textwidth][c]{
            \raisebox{-0.5\height}{\includegraphics[scale=1]{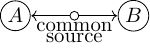}}
            \raisebox{-0.5\height}{\includegraphics[scale=1]{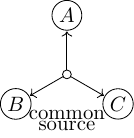}}
        }
        \label{fig:common_source_networks}
    }\\
    \subfloat[Two-way channels between all pairs of parties.]{
        \makebox[0.48\textwidth][c]{
            \raisebox{-0.5\height}{\includegraphics[scale=1]{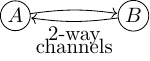}}
            \raisebox{-0.5\height}{\includegraphics[scale=1]{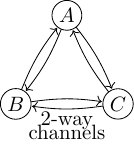}}
        }
        \label{fig:two_way_channels}
    }
    \caption{Common source and two-way channel networks involving two and three parties.
     }
     \label{fig:common_source_and_two_way}
\end{figure}

The same sorts of considerations apply to quantum states: whether the correlational properties of a particular state should be considered nonclassical and how to quantify that nonclassicality are questions whose answers depend
 on the network.  This is less obvious than in the case of boxes because, for states, the difference between the two networks in Fig.~\ref{fig:common_source_and_two_way}
 is not manifest in the classical-nonclassical boundary.  The difference does show up, however, in the {\em ordering} over the states.
 In particular, for the networks in Fig.~\ref{fig:common_source_networks}, the ordering of multipartite states is determined by LOSR-conversion relations, while for the networks in Fig.~\ref{fig:two_way_channels}, it is determined by LOCC-conversion relations. 
 The results of Section~\ref{sec:RTLOSR} show that these orderings are indeed different, so that one has different notions of entanglement  in the two networks, which we have termed LOSR-entanglement and LOCC-entanglement respectively. 

 Although the location of the classical-nonclassical boundary for states does not differ for the pair of network structures in Fig.~\ref{fig:common_source_and_two_way}, in other pairs of networks it can.  We provide an example in Appendix~\ref{ExamplePairStructures}.

To summarize, then: understanding the nonclassicality of the correlational properties of a given type of process involves understanding the  gap between the correlational properties that are classically realizable and those that are quantumly realizable. But given that what is realizable (either classically or quantumly) differs between network structures, the nature of such gaps---and thus the nature of the nonclassicality of correlational properties---is relative to a network structure. 
 Ultimately, the network-relativity of assessments of nonclassicality is due to the fact that whether correlations are realizable or not  is dependent on the {\em causal relations} that exist among the parties~\cite{Wood2015,schmid2020unscrambling}, and the network structure stipulates these causal relations.

 Much of this article has concerned the interplay of entanglement and nonlocality in networks of the form of Fig.~\ref{fig:common_source_networks}, with common sources among the parties.  We now turn to the question of how this interplay appears in networks of the form of Fig.~\ref{fig:two_way_channels}, with channels among the parties.

As we noted in Sec.~\ref{sec:unifiedRT}, it is not obvious a priori that it even makes sense to study nonlocality in such a network given that any Bell inequality violation can be simulated by classical communication. 
\rob However, there is an approach that allows such a possibility.  In this approach, one changes the resources under consideration from the boxes defined in Sec.~\ref{sec:introduction} to generic multipartite processes with classical inputs and outputs, supplemented by {\em a specification of the spatio-temporal relations between all of the inputs and outputs}.  To distinguish these from the boxes defined in Sec.~\ref{sec:introduction}, we refer to these as {\em spatio-temporally-indexed boxes}.  One furthermore restricts attention to the subset of spatio-temporally-indexed boxes
  for which the settings and outcomes of a given party
 are {\em space-like separated} from those of the other parties.  We term these ``space-like boxes''. 
For this special subset of processes,
   the internal causal structure can only be that of a common cause, just like the notion of box defined in Sec.~\ref{sec:introduction}.  
Furthermore, focussing on the subset of spatio-temporally-indexed boxes corresponding to 
space-like boxes allows Bell inequality violations to be indicative of nonclassicality even in the presence of a communication channel.  The reason is that \blk although the parties have access to classical communication channels, these channels {\em simply do not help} with achieving  conversions between one space-like box and another because any processing at one wing must be implemented at space-like separation from a processing at the other wings in order for the overall processing's output to also be a space-like box.   \rob
It follows that to assess conversions {\em among space-like boxes}, it is sufficient to consider just LOSR operations.
 In other words, although the free operations in a resource theory of spatio-temporally-indexed boxes may be taken to be LOCC, 
  for space-like boxes, a conversion is achievable by LOCC
   if and only if it is achievable by LOSR. \blk

Meanwhile, for achieving {\em state-to-box} and {\em state-to-state} conversions, the classical communication channel {\em is} useful, so these conversions \rob must be evaluated relative to LOCC in this alternative resource theory.\blk Consider the state-to-box case, for example.  If the entangled quantum state can be prepared prior to the time at which the setting variables of the \rob space-like \blk box are fixed, then there is no reason to assume that local operations on one part of the entangled state need to be space-like separated from those on another part.  Consequently, measurements can be performed on the different parts of the entangled state and the results can be communicated among the parties and used to control subsequent local operations.\footnote{Some authors have suggested a set of free operations that is distinct from LOCC and termed Wirings with Prior-to-Input Classical Communication (WPICC)~\cite{WPICC}.  The latter allows only classical communication prior to receiving the values of the input variables.  It is not clear, however, what sort of physical or experimental restriction might enforce such a prohibition on the use of a classical channel.} 

This approach provides an alternative means of resolving 
 the inconsistencies described in Secs.~\ref{sec:anomalies} and \ref{sec:multipartite}.  
 For instance, in the inconsistency associated with the anomaly of nonlocality,
 the claim that $\ket{\psi_{\rm max}} \not\mapsto B$ (see Eq.~\eqref{notconv}), which is true for LO, is not true for LOCC, because under LOCC, $\ket{\psi_{\rm max}}$ can be converted to $\ket{\psi_{\rm partial}}$ which can be converted to $B$. So this approach, which reduces mathematically to LOSR for box-to-box conversions but LOCC for state-to-state and state-to-box conversions, implies replacing Eq.~\eqref{notconv} with its negation, while preserving Eqs.~\eqref{convmaxtopartial} and ~\eqref{convpartialtobox}, and therefore provides a
  way out of the inconsistency.

The restriction to space-like boxes
may at first appear quite natural.  After all, in the context of Bell experiments, one seeks to ensure that the outcome and setting on one wing cannot have a causal influence on the outcome at any other wing 
  (i.e., that the locality loophole is sealed).  
Sealing loopholes in a Bell experiment is done for the purpose of ruling out theories---typically described as ``locally realist''---which make alternative predictions to quantum theory.
  If one is not in the business of trying to rule out such alternatives, 
 but rather in the business of leveraging
  quantum resources for information-processing tasks (while presuming
   the correctness of quantum theory), then one need not
     restrict attention to boxes whose wings are space-like separated.  One might also consider
      boxes for which one is confident, based on the setup of the experiment, that there is no causal influence between the wings.   The wings being space-like separated is a sufficient but not a necessary condition for satisfying this no-causal-influence condition.
 
Nonetheless, 
 it may be the case that the information-processing task in question is of a cryptographic nature, and that the users of the protocol are explicitly worried about an adversary that is attempting to make it appear  that the no-causal-influence condition is satisfied when it is not, in order to fool them into thinking that they are achieving a cryptographically secure protocol.  In such circumstances, it may be appropriate to require space-like separation as a guarantee of the no-causal-influence condition being satisfied.

  \sloppy Such circumstances, therefore, motivate the study of the interplay of LOCC-entanglement and nonlocality, that is, the interplay of the nonclassicality of states and \rob space-like \blk boxes in the network with channels among the parties (Fig.~\ref{fig:two_way_channels}).
Note, however, that there are significant complications in undertaking
 such a study within a formal resource-theoretic framework.
\rob
 One such complication is that 
 the formal representation of a resource 
  includes
 a specification of the spatio-temporal relations between all of its inputs and outputs, unlike the notion of box introduced in Sec.~\ref{sec:introduction} and studied in this article.
Being able to forego the complication of requiring such a specification is one of the advantages of our approach.
\blk

After a draft of this article appeared (and partly in response to it), Ref.~\cite{Gour2020RT} has argued for 
\rob the alternative approach wherein one considers free operations including 2-way communications (Fig.~\ref{fig:two_way_channels}) \blk
 and has begun the project of developing a formal resource theory of this type.\footnote{Whereas we have here argued that the appropriate notion of nonclassicality for states and boxes is relative to the network connecting the parties, with LOCC and LOSR emerging as the appropriate free operations in different circumstances, Ref.~\cite{Gour2020RT} argues against the LOSR paradigm.  It claims that \enquote{LOCC is needed to fully reveal the nonlocal features of quantum states.}  In the next section, however, we explain why whether a state  is judged to be entangled or not should {\em not} be defined in terms of whether it can generate a nonlocal box.  Furthermore, even if one {\em were} to adopt such a definition, it is not the case that the only way to witness the entanglement of a state is by activating it in a protocol that makes use of  LOCC operations.  Indeed, in the next section, we summarize the many ways that have been discovered to witness entanglement by converting states to processes with classical inputs and outputs using LOSR. 
Therefore, although we do not dispute that LOCC operations can sometimes be appropriate for the study of the interplay of entanglement and nonlocality (namely, if the network structure includes communication channels),
the arguments in Ref.~\cite{Gour2020RT} against the appropriateness of LOSR operations are unconvincing.
}

It is also worth noting that the results obtained in such a setting will clearly 
be at odds with conventional notions about the interplay of entanglement and nonlocality.  
For example, in the LOCC paradigm, Hardy-type proofs of Bell's theorem can be achieved not only with partially entangled states, but with maximally entangled states as well (the negation of Eq.~\eqref{notconv} discussed above), contrary to the conventional claim.
It is also the case that within this approach, it is impossible to uniquely certify all partially entangled states of a given Schmidt rank by self-testing, again contrary to the usual result. 
Finally, the fact that there is no additional ``3-way communication resource'' that goes beyond having
2-way communication channels between every pair of parties
 implies that, in the LOCC paradigm, there is no obvious candidate for a notion of genuine 3-way entanglement and consequently no natural counterpart to a notion of genuine 3-way nonlocality,  as we noted in  Sec.~\ref{sec:multipartite}.

In any case, what is clear is that the  interplay between entanglement and nonlocality can be markedly different in different network structures and therefore it is critical to articulate which network structure one is assuming when making any claims about the nature of entanglement, nonlocality, or their interplay. 
This realization is one of the key contributions of our work.

\section{On entangled states that do not violate any Bell inequality}\label{Werner}

The fact that there are mixed entangled states that do not violate any Bell inequality~\cite{werner1989quantum,Barrettlocal}  is often taken as a surprising feature of the interplay between the entanglement of states and the nonlocality of boxes.  We here argue that, in the resource-theoretic approach advocated in this article, this fact is not surprising.  Indeed, it is an instance of a phenomenon that is generic among resource theories.  The conclusion which has sometimes been drawn from this phenomenon---that certain entangled states cannot be certified device-independently---is now known to be false and can be seen, in retrospect, to be an artifact of considering a limited type of 
network structure
in entanglement certification protocols.

In the framework of resource theories as partitioned process theories, outlined in Ref.~\cite{coecke2016mathematical}, the distinction between free and nonfree processes (of any type) is part of the definition of the resource theory. It cannot be stipulated arbitrarily, because the free processes are required to be closed under arbitrary composition operations, such that they form a subtheory of the enveloping process theory.  The question of what sorts of resources of a given type, $T'$, one can obtain from a given resource $R$ of type $T$ is an interesting one (studied in Refs.~\cite{semiquantum,rosset2020characterizing}), but it does not serve to {\em define} whether $R$ is to be considered free or not.  
It is worth particularizing this claim to the case of interest here.  We are concerned with processes of state type and of box type wherein being free corresponds to being  realizable  using only classical sources.  In this context, whether or not a given state can be converted to a nonlocal box is certainly an interesting question, but it is {\em not} a necessary condition for the state to be deemed nonclassical.

Nonetheless, imagine a sceptic who tries to maintain that the classical-nonclassical boundary for quantum states {\em should} in fact be taken to be the boundary between states that yield only local boxes and those that can yield nonlocal boxes, and that nonclassicality of states should be quantified in terms of features of this yield rather than the ordering of the states. 
Such a position might be motivated by the idea that 
the only nonclassical feature of an entangled quantum state is its ability to violate a Bell inequality. Nonetheless, we now explain why we believe this to be an ill-conceived approach.

Consider the possibility of trying to define a free-nonfree distinction for processes of type $T$ in terms of whether or not they can yield a nonfree process of type $T'$.  Note, first of all, that to get off the ground with this sort of definition, one must already have in hand a free-nonfree distinction for processes of type $T'$.  But then, whatever physical considerations led to the free-nonfree distinction in the type $T'$ sphere, one can (and should) avail oneself of the 
  same considerations to obtain a free-nonfree distinction in the type $T$ sphere. To specialize to the case of interest here: it is uncontroversial that the classical-nonclassical distinction for boxes should be defined in terms of realizability by classical shared randomness, which corresponds to whether all of the Bell inequalities are satisfied or not.  But then, the classical-nonclassical distinction for states can (and should) {\em also} be defined directly in terms of realizability by classical shared randomness (i.e., LOSR), which corresponds to whether it is separable or not.

Note that the imperative to define 
 the free subtheory based on a physical restriction, outlined in Sec.~\ref{sec:unifiedRT}, supports this approach insofar as it yields a type-independent notion of freeness.  In the case of interest here, the physical restriction is the classicality of all of the sources and channels in the network, 
  and it defines a free-nonfree distinction for states, boxes, and every other type~\cite{semiquantum,rosset2020characterizing} of multipartite process in that network. 

Another problem with the sceptic's proposal is that there are {\em many} different types of processes to which a type-$T$ process can be converted: type $T'$, type $T''$, $\dots$.
How does one decide which of these ought to be used to define nonfreeness of type-$T$ processes?

Indeed, this ambiguity arises in any attempt 
  to define the nonclassicality of states in terms of the nonclassicality of processes obtained from these. 
   Consider the device-independent certification of entanglement, i.e., certification by 
   conversion into a process with only classical inputs and outputs.  
    If we convert the state to a box,
    then certain nonseparable states yield only free boxes, so that if we defined the classical-nonclassical boundary for states in terms of such conversions, this boundary would lie {\em strictly within} the space of nonseparable states, with Werner states~\cite{werner1989quantum} being examples of nonseparable states that are nonetheless deemed to be classical.   
However, as soon as one considers converting states into other sorts of processes, it becomes possible to certify entangled states that could not be certified by boxes.  Examples of such processes
include: those depicted in Fig.~\ref{fig:filt_a}, obtained by implementing in each lab a multi-meter (i.e., a measurement whose identity is determined by a classical input) {\em after} a filtering operation on the quantum state (a nondestructive measurement whose identity is fixed)~\cite{popescu1995bell,gisin1996hidden}; those depicted in Fig.~\ref{fig:filt_b}, obtained by implementing in each lab a sequence of nondestructive multi-meters on the quantum state~\cite{gallego2014nonlocality}; those depicted in Fig.~\ref{fig:filt_c}, which implement 
local operations with a particular nontrivial causal structure {\em within} the laboratory of each party~\cite{bowles2018device}.

Alternatively, one can also certify the nonclassicality of arbitrary entangled states within the simple causal structure of the Bell scenario, if one is willing to relax one's notion of device-independence. This is the case, for example, if one considers local processings with classical outputs and secondary quantum inputs, the so-called semi-quantum scenario, pioneered by Buscemi~\cite{Buscemi2012LOSR} and depicted in Fig.~\ref{fig:filt_d}.

\begin{figure}[htb]
\centering
    \subfloat[]{
        \raisebox{-0.645\height}{\includegraphics[scale=1]{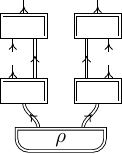}}
        \label{fig:filt_a}
    }
   \hspace{1.3em} \subfloat[]{
        \raisebox{-0.5\height}{\includegraphics[scale=1]{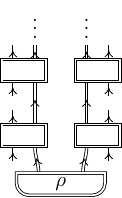}}
        \label{fig:filt_b}
    }\\
    \subfloat[]{
        \raisebox{-0.5\height}{\includegraphics[scale=1]{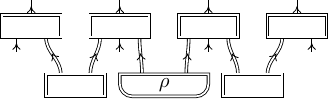}}
        \label{fig:filt_c}
    }
    \\
    \subfloat[]{
        \raisebox{-0.5\height}{\includegraphics[scale=1]{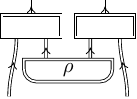}}
        \label{fig:filt_d}
    }
    \caption{
(a) A box with prior filtering of the state. 
(b) A box implementing a sequence of nondestructive measurements on the state.     (c) A process with only classical inputs and classical outputs, but constrained to have a particular causal structure in each party's lab.  Such a process can witness {\em all} entangled states device-independently. (d) A process (sometimes termed a semi-quantum channel) which is like a box, but with quantum rather than classical inputs.  Such a process can certify any entangled state within the standard Bell causal structure, but only device-dependently.}
     \label{fig:filtering_sequences_etc}
\end{figure}

So, whether one can certify (much less quantify) an entangled state or not by converting it into a process of a different type depends explicitly on the type. 

Note that our argument against the sceptic's position does not depend on the results of Ref.~\cite{bowles2018device},
that there {\em do} in fact exist processes with classical inputs and outputs that can certify all entangled states.  Even if there had not existed any such processes, this would not add credibility to the idea that the classical-nonclassical boundary is anything but the separable-nonseparable boundary.
If a process of one type is nonfree, there is no guarantee that in a conversion to another type of process, the resulting process
 is also nonfree.\footnote{The question of what kinds of type-changing transformations are in fact nonclassicality-preserving is studied in Ref.~\cite{semiquantum,rosset2020characterizing}.}
 In other words, in type-changing resource conversions, nonfreeness of the initial resource is not a sufficient condition for nonfreeness of the final resource.  
 
This insufficiency is obvious if the type of the final resource is too impoverished.  Consider, for instance, the type-changing conversion that maps a multi-partite entangled state to a multipartite process with only classical outputs (i.e., no inputs), represented by a multi-partite joint probability distribution.  Because {\em every} multi-partite joint probability distribution can be realized with a classical source shared among all of the parties, it follows that in the LOSR-ordering of resources, 
each such distribution is a free resource, even though every entangled state is nonfree.\footnote{Note that this is not a trivial example, since processes that have only classical outputs {\em can} witness entanglement in some networks. For instance, if one considers the triangle network of Fig.~\ref{fig:triangle_network} in Appendix~\ref{ExamplePairStructures},
there exist tripartite joint distributions, such as the one described in Eq.~\eqref{Fritzdistn}, that are {\em only} realizable quantumly, not classically, and hence the type-changing resource conversion that takes states to distributions {\em can} witness some entangled states device-independently within this network.}

Once one recognizes that there is no reason to expect nonfreeness to be preserved in type-changing conversions,
one sees that there is no reason to expect that all entangled states should be able to generate a nonlocal box.
 Without any justification for such an expectation, there is no reason to consider the phenomenon to be surprising. 
Indeed, with all of the recent progress in this area~\cite{gallego2014nonlocality,bowles2018device}, it has become clear that the interesting question is rather: what distinguishes the process types that can witness any entangled state from the process types that cannot?

\section{Discussion}\label{sec:Discussion}

 We have shown that by recasting the interplay of entanglement and nonlocality within a resource-theoretic framework,  several puzzling features 
 are seen to be an artifact of 
 not assuming a consistent choice of the free operations for all types of processes.   
 In the resource theory where the network is presumed to be one where the parties share a common source and where the {\em free} sources are the classical ones, the set of free operations is LOSR for {\em all} types of conversion relations and the relevant notion of entanglement is LOSR-entanglement.   This notion of entanglement is shown to have a particularly natural interplay with the notion of nonlocality.

In particular, we have shown that this perspective provides a particularly satisfactory resolution of the long-standing anomalies of nonlocality. 
   In addition to its foundational dividends on this front, the resource theory based on  LOSR operations is also likely to be significant for certain more practical problems.  For instance, the fact that partially entangled states provide an advantage over maximally entangled states for the tasks of certified randomness generation \cite{Acin2012randomnessvsnonlocality} and for the rate of secret key extraction \cite{Scarani2006QKD,Acin2006QKD} suggests that one should seek to identify the LOSR-entanglement monotones that quantify the degree of success one can achieve in such tasks.

Our suggestion to use LOSR supplemented by 2-way shared entanglement (LOSR2WSE) to define genuine 3-way entanglement and genuine 3-way nonlocality has not only conceptual merit---providing a satisfactory resolution to a tripartite anomaly of nonlocality for instance---but also the potential for practical applications.  For example, it opens the door to defining quantitative measures of these types of entanglement and nonlocality. 
This is in contrast to the LOCC-motivated notions of genuine 3-way entanglement and nonlocality, which we showed violate basic principles of a resource theory.  Our approach is also easily adapted to the study of types of nonclassicality in network structures beyond that of a common source shared among all parties
 ~\cite{Henson2014,Fritz2012beyondBell,Wolfe2016inflation}.

Self-testing of entangled states is another topic that has practical significance, insofar as it provides a 
  means of certifying quantum devices. It is crucial, therefore, that we properly understand 
 what, precisely, is certified in a self-testing experiment.
Our resource-theoretic reframing of self-testing brings additional clarity to this
 question, by establishing the freedom up to which a state may be identified by self-testing.
pure bipartite entangled states of a fixed Schmidt rank).
In addition, our new approach shows that certain mixed states can be self-tested.
This sets the stage for investigations of the possibility of self-testing for more general classes of mixed states, a topic that is important for experimental implementations of self-testing, where the ideal of purity can never be fully achieved.
Our approach also shows that certain {\em convexly nonextremal} boxes can self-test a particular entangled state, thereby expanding the scope of possibilities for achieving such a certification.

Finally, our resource-theoretic framing of self-testing also implies that the notion can be extended beyond quantum states and boxes, to arbitrary pairs of resource types,
 and to arbitrary resource theories. Suppose that resources of both types are described in a single resource theory, so that one can specify the partial order over all resources in terms of interconvertibility under the free operations.
Leveraging our abstract characterization of self-testing (given explicitly in Appendix~\ref{equivdefnself}), 
self-testing of a type-$T$ resource, $R$, by a type-$T'$ resource, $R'$, simply means that the upward closure among resources of type $T$ of $R'$  is equal to that of $R$. Equivalently, it means that the upward closure of the equivalence class of $R'$ among equivalence classes of resources of type $T$ has a unique least element, 
 namely, the equivalence class of $R$.
Such a notion of self-testing could be applied, for example, to the resource types
  considered in the type-independent LOSR resource theory laid out in Refs.~\cite{semiquantum,rosset2020characterizing}, which are relevant for semi-device-independent certification protocols. It could also be applied to completely different resource theories in other contexts. 
 
We showed in Section~\ref{Werner} how our resource-theoretic approach naturally accommodates Werner states and hidden nonlocality, in such a way that they are no longer fundamentally puzzling.
It will be interesting to reconsider related results through this lens, such as conversions from multiple entangled states~\cite{BBP+96,Navascues2011,Palazuelos2012} to nonlocal boxes (including entanglement distillation) and Peres' conjecture~\cite{Peres1999,Vertesi2014}. The considerations in Ref.~\cite{semiquantum} on encodings of nonclassicality are likely to be of value here.
 
 It also remains to be seen to what extent the LOSR paradigm can provide
 new insights into the question of what resources are needed for information-processing tasks that are built on Bell inequality violations---such as nonlocal games~\cite{Broadbent2006,Palazuelos2016,Johnston2016}, key distribution~\cite{BHK,Acin2006QKD,Scarani2006QKD,Acin2007QKD,vazirani14,Kaniewski2016chsh} and randomness generation~\cite{Colbeckthesis,colbeckamp,Pironio2010,Dhara2013DIRNG}.

In terms of developing the resource theory of LOSR-entanglement, we have demonstrated the significance of LU-equivalence for the problem of characterizing the necessary and sufficient conditions for convertibility of pure entangled states under LOSR.  Leveraging this result, we have solved the problem completely in the bipartite case, and proven a useful necessary condition for LOSR-convertibility in the multi-partite case.

Our reanalysis of the interplay of entanglement and nonlocality provides, we believe,  a strong motivation for developing the resource theory of LOSR-entanglement to the same level of sophistication as has been achieved for LOCC-entanglement.   Another motivation for doing so is that the resource theory of LOSR-entanglement has  applications beyond the study of states and boxes in Bell scenarios.
    Ref.~\cite{semiquantum} showed that LOSR is the appropriate set of free operations for evaluating the inconvertibility of many other types of resources besides boxes,  including steering assemblages \cite{Einstein1935,Schrodinger1935,wisesteer,Zjawin2023quantifyingepr,Zjawin2023resourcetheoryof} and teleportages \cite{telep, PhysRevA.99.032334,Hoban_2018}, and that doing so
      unifies and extends a variety of seminal results regarding interconversions between these distinct forms of nonclassicality. Consequently, for the study of 
      the interplay of entanglement to 
      these other types of nonclassical resources, it is also LOSR-entanglement that is the appropriate notion.

We note, finally,  that entanglement theory has also recently found applications beyond quantum information processing, such as in many-body physics and in the study of holography.  Given that the notion of classical communication does not seem to be pertinent in any of these applications, there is reason to suspect that LOCC-entanglement might be ill-suited to these applications. 
Our results, therefore, call out for a reassessment of how to formalize the notion of entanglement in these fields of study, 
 and a consideration of whether the paradigm of LOSR-entanglement might be more suitable.\footnote{For instance, Refs.~\cite{Harrow2010,anshu2022entanglement} support the notion that LOSR-entanglement is pertinent to many-body physics, given that it argues for the pertinence of the notion of entanglement spread, which is a 
 pure-state monotone for LOSR-entanglement but not for LOCC-entanglement,
   as noted in Sec.~\ref{sec:RTLOSR}.} 

\section*{Acknowledgments}
We thank Denis Rosset and Jed Kaniewski for valuable discussions, as well as Antonio Ac\'in, Valerio Scarani, and Patrick Hayden for thoughtful feedback on a draft of this article.  
 We thank an anonymous referee for proposing the algorithm described in Appendix H (which is more efficient than the one appearing in an earlier draft of this article), for noticing a mistake in the proof within Appendix \ref{catalysis} (now fixed), and for suggesting various changes that improved the manuscript.
D.S.~is supported by a Vanier Canada Graduate Scholarship. T.C.F.~is supported by the Natural Sciences and Engineering Research Council of Canada (NSERC), grant  411301803.
 R.K.~is supported by the Charg\'e des recherches fellowship of the Fonds de la Recherche Scientifique - FNRS (F.R.S.-FNRS), Belgium. A.B.S.~and D.S.~acknowledge  support by the Foundation for Polish Science (IRAP project, ICTQT, contract no. 2018/MAB/5, co-financed by EU within Smart Growth Operational Programme). This research was supported by Perimeter Institute for Theoretical Physics. Research at Perimeter Institute is supported in part by the Government of Canada through the Department of Innovation, Science and Economic Development Canada and by the Province of Ontario through the Ministry of Colleges and Universities.

\bibliographystyle{quantum}
\bibliography{resourcetheory}

\begin{appendices}
\addtocontents{toc}{\protect\setcounter{tocdepth}{-1}}

\section{Extracting a monotone from each instance of the anomaly} \label{monotones}

As mentioned in the main text, one can define LOSR monotones which capture features of the LOSR preorder that are not captured by LOCC monotones simply by repurposing each of the known examples of ``anomalies of nonlocality''. We now define a few such monotones explicitly. Each example of the anomaly involves a particular nonlocality-witnessing function, that is, a real-valued function over boxes for which there is a threshold value that attests to the box being nonlocal. The natural monotone {\em over quantum states} corresponding to that function is obtained by a yield-construction~\cite{gonda2023monotones}, in which one computes the maximum value of that function over all boxes that can be generated using LOSR operations
from the
 given state. The resulting function over states
  is an LOSR-entanglement monotone, due to the explicit maximization over LOSR operations in the definition.

In the first anomaly we listed below Eq.~\eqref{notconv}, the function which was used to witness the nonlocality of a given box was the maximum probability with which that box could run Hardy's version of Bell's theorem. Denoting this function $f_{\rm prob Hardy}$, one can define an LOSR monotone {\em for states} as follows:
\begin{equation}
M_{\rm probHardy}(\rho) := \max_{\tau \in \LOSR} \{ f_{\rm prob Hardy}(\tau(\rho)) \},
\end{equation}
where $\tau$ is any LOSR operation taking states to boxes.
As an aside, it is
worth noting that the monotone $M_{\rm NPR}$ from Ref.~\cite{wolfe2020quantifying}, which is defined for boxes, is maximized by a particular Hardy box (defined in Table~4 of Ref.~\cite{wolfe2020quantifying}). Hence, another relevant LOSR monotone for states can be defined as follows:
\begin{equation}
M_{\rm NPR}(\rho) := \max_{\tau \in \LOSR} \{ M_{\rm NPR}(\tau(\rho)) \},
\end{equation}
where $\tau$ is any LOSR operation taking states to boxes.

In the second anomaly we listed below Eq.~\eqref{notconv}, the function which was used to witness nonlocality was a tilted Bell inequality (viewed as a function from boxes to the reals). Denoting the function defined by the tilted Bell inequality with tilt $\alpha$~\cite{Yang2013selftesting,Bamps2015selftesting} as $f_{\alpha-{\rm tilted}}$, an LOSR monotone for states can be defined as follows:
\begin{equation}
M_{\alpha-{\rm tilted}}(\rho) := \max_{\tau \in \LOSR} \{ f_{\alpha-{\rm tilted}}(\tau(\rho)) \},
\end{equation}
where $\tau$ is any LOSR operation taking states to boxes.

By now, the pattern is clear. One can define an analogous yield-based monotone for each of the anomalies.

We leave the task of using these monotones to glean insights into the LOSR preorder over quantum states for future work. A deficiency of the above monotones (as presented) is that it may be difficult to perform the optimization in their definitions (especially since the convex set which is optimized over does not have a finite set of extreme points). Those monotones which are defined in terms of an abstract function (such as $f_{\rm KR}$) will also be at least as difficult to compute as the functions themselves. In most cases, finding a closed form expression for a given monotone is paramount for it to be practically useful.

\section{Proving LOSR incomparability between two Bell-pairs and the GHZ state} \label{BellGHZ}

The purpose of this section is to prove Eq.~\eqref{noconvBellGHZ}, i.e. the LOSR incomparability of the tripartite pure states
\begin{align}
    \ket{\psi_{\rm 2Bell}} &\equiv |\phi^{+}\rangle_{A_1B}\otimes  |\phi^{+}\rangle_{A_2C}, \text{ and }\\
    |\psi_{\rm GHZ}\rangle &\equiv \tfrac{1}{\sqrt{2}} (|000\rangle_{ABC} +|111\rangle_{ABC}),
\end{align}
as was required for resolving the anomaly regarding genuinely $3$-way entanglement.

To prove incomparability, we must show that neither state can be converted into the other. The impossibility of each of these two conversions is proved by showing that the necessary condition given in Eq.~\eqref{eq:multipartite_extension} is not satisfied.
For tripartite systems, there are exactly three bipartitions among the three parties, henceforth denoted $\{A|BC, B|AC, C|AB\}$. The associated squared Schmidt coefficients for $\ket{\psi_{\rm 2Bell}}$ and $|\psi_{\rm GHZ}\rangle$ are as follows:
\begin{equation}
    \begin{alignedat}{3}
        \lambda^{(A|BC)}_{\rm 2Bell}
        &= (\tfrac{1}{4},\tfrac{1}{4},\tfrac{1}{4},\tfrac{1}{4}),
        \qquad 
        &&\lambda^{(A|BC)}_{\rm GHZ}
        &&= (\tfrac{1}{2},\tfrac{1}{2}),\\
        \lambda^{(B|AC)}_{\rm 2Bell}
        &= (\tfrac{1}{2},\tfrac{1}{2}),
        &&\lambda^{(B|AC)}_{\rm GHZ}
        &&= (\tfrac{1}{2},\tfrac{1}{2}),\\
        \lambda^{(C|AB)}_{\rm 2Bell}
        &= (\tfrac{1}{2},\tfrac{1}{2}),
        &&\lambda^{(C|AB)}_{\rm GHZ}
        &&= (\tfrac{1}{2},\tfrac{1}{2}).
    \end{alignedat}
\end{equation}

First, when considering the $A|BC$ partition, it becomes clear that $|\psi_{\rm GHZ}\rangle \not \mapsto |\psi_{\rm 2Bell}\rangle$ under LOSR as there is no vector $\lambda^{(A|BC)}_{\zeta}$ such that $(\lambda^{(A|BC)}_{\rm GHZ})^{\downarrow} = (\lambda^{(A|BC)}_{\rm 2Bell}\otimes \lambda^{(A|BC)}_{\zeta})^{\downarrow}$, and thus Eq.~\eqref{eq:multipartite_extension} cannot be satisfied. (One can also see this through the failure of the condition in Corollary~\ref{rankcoro} in Appendix~\ref{alg}, since the ratio of the Schmidt rank of $|\psi_{\rm GHZ}\rangle$ to that of $|\psi_{\rm 2Bell}\rangle$
 is not an integer.)
Second, Eq.~\eqref{eq:multipartite_extension} implies that $|\psi_{\rm 2Bell}\rangle \mapsto |\psi_{\rm GHZ}\rangle$ only if there exists an auxiliary state $\ket{\zeta}$ such that
\begin{align}
    \lambda^{(A|BC)}_{\zeta} &= (\tfrac{1}{2},\tfrac{1}{2}),\label{kk1}\\
    \lambda^{(B|AC)}_{\zeta} &= (1),\label{kk2}\\
    \lambda^{(C|AB)}_{\zeta} &= (1).\label{kk3}
\end{align}
However, these squared Schmidt coefficients are not consistent with any tripartite state $\ket \zeta$. This can be seen by the fact that Eq.~\eqref{kk2} implies that $\ket \zeta$ factorizes across the $B|AC$ partition, and Eq.~\eqref{kk3} implies that $\ket \zeta$ factorizes across the $C|AB$ partition, and together these two facts imply that $\ket \zeta$ must factorize across the $A|BC$ partition. But this contradicts Eq.~\eqref{kk1}, since the latter can only hold if $\ket \zeta$ is entangled
  across the $A|BC$ partition. Therefore, $|\psi_{\rm 2Bell}\rangle \not \mapsto |\psi_{\rm GHZ}\rangle$ under LOSR.

\section{A subtlety regarding the standard definition of self-testing of states}\label{Oversight}

In this appendix, we discuss a subtlety regarding our claim that Definition~\ref{defnselftesteOld} is the standard definition of self-testing of states.  
Careful inspection will reveal that Definition~\ref{defnselftesteOld} is, in fact, not equivalent to the definition one finds in the primary review article on the topic~\cite{vsupic2020self}.  The criterion for a state $|\psi\rangle$ to be self-tested by a box $B$ stipulated therein  (their Definition 1) is similar to Definition~\ref{defnselftesteOld},
  but without the condition that \enquote{$B$ can be obtained from $|\psi\rangle$ by local measurements.} Nonetheless, we argue that this omission 
 was surely inadvertent.  Definition~\ref{defnselftesteOld} is the notion of self-testing of states that follows from the original notion of self-testing, where the sort of object for which the notion of self-testability applied was a triple, consisting of a state on $AB$, a set of local projective measurements on $A$, and a set of local projective measurements on $B$.  
 The original definition, introduced by Mayers and Yao~\cite{mayers1998quantum,mayers2003self} and still used today~\cite{Yang2013selftesting,Wang2016,Bancal2015SelfTesting,Supi2018Mutliselftest} is essentially as follows:

\begin{defn}[original]
\label{defnselftesteOriginal}
Consider a triple  consisting of a pure state $|\psi\rangle_{AB}$, a set of projective measurements $\{ \mathcal{M}^A_s\}_s$ (where $\mathcal{M}^A_s$ is represented by a projector-valued measure $\{M^{A}_{x|s}\}_x$), and a set of projective measurements $\{ \mathcal{M}^B_t\}_t$ (where $\mathcal{M}^B_t$ is represented by a projector-valued measure $\{M^{B}_{y|t}\}_y$). 
 The triple $(|\psi\rangle_{AB}, \{ \mathcal{M}^A_s\}_s, \{\mathcal{M}^B_t\}_t)$ is self-tested by a  box $B$  if for any triple $(|\phi\rangle_{A'B'}, \{ \mathcal{N}^{A'}_s\}_s, \{ \mathcal{N}^{B'}_t\}_t)$ (consisting of a pure state and two sets of projector-valued measures) from which the probability distribution $P_{XY|ST}$ associated to $B$ can be obtained as $P_{XY|ST}(xy|st)= {\rm Tr}_{AB}\left(|\phi\rangle \langle \psi|_{A'B'} (N^{A'}_{x|s} \otimes N^{B'}_{y|t}) \right)$, there is a local isometry  that takes $N^{A'}_{x|s} \otimes N^{B'}_{y|t}|\phi\rangle_{A'B'}$ to $\big( M^{A}_{x|s} \otimes M^{B}_{y|t}|\psi\rangle_{AB} \big) \otimes |\xi\rangle$ for some $|\xi\rangle$.
 \end{defn}

Subsequently, this definition was dissected into a definition of self-testability of states and a definition of self-testability of measurements~\cite{ScaraniStateSelfTest2009,JedWeak2020,Coladangelo2017,vsupic2020self}.  However, in the process, something was inadvertently left  on the operating table.  

Consider what Definition~\ref{defnselftesteOriginal} implies for the conditions under which a box $B$ self-tests a given state $|\psi\rangle$, as opposed to self-testing a state together with a pair of measurements. 

First, note that Definition~\ref{defnselftesteOriginal} implies that the probability distribution $P_{XY|ST}$ associated to the box $B$ can be obtained from the triple $(|\psi\rangle_{AB}, \{ \mathcal{M}^A_s\}_s, \{\mathcal{M}^B_t\}_t)$ via $P_{XY|ST}(xy|st)= {\rm Tr}_{AB}\left(|\psi\rangle \langle \psi|_{AB} (M^{A}_{x|s} \otimes M^{B}_{y|t}) \right)$.  
This follows from the fact that by assumption $P_{XY|ST}(xy|st)= {\rm Tr}_{AB}\left(|\phi\rangle \langle \phi|_{A'B'} (N^{A'}_{x|s} \otimes N^{B'}_{y|t}) \right)$ and by replacing $N^{A'}_{x|s} \otimes N^{B'}_{y|t}|\phi\rangle_{A'B'}$ and its adjoint by their images under the isometry, namely, $\big( M^{A}_{x|s} \otimes M^{B}_{y|t} |\psi\rangle_{AB} \big) \otimes |\xi\rangle$ and its adjoint.
Consequently, it follows from the definition that the box $B$ must be obtainable from $|\psi\rangle$ by local projective measurements. 
 
Second, note that 
 the isometry that takes $N^{A'}_{x|s} \otimes N^{B'}_{y|t}|\phi\rangle_{A'B'}$ to $\big( M^{A}_{x|s} \otimes M^{B}_{y|t}|\psi\rangle_{AB} \big) \otimes |\xi\rangle$ in Definition~\ref{defnselftesteOriginal} also takes $|\phi\rangle_{A'B'}$ to $|\psi\rangle_{AB} \otimes |\xi\rangle$.  This follows from summing over $x$ and $y$ and recalling that by virtue of the measurements being projective, we have $\sum_x  N^{A'}_{x|s} =I^{A'}$, $\sum_y  N^{B'}_{y|s} =I^{B'}$, $\sum_x  M^{A}_{x|s} =I^{A}$, and $\sum_y  M^{B}_{y|s} =I^{B}$.  Consequently, for any state $|\phi\rangle$ that can generate $B$ by local projective measurements, there is a local isometry that takes $|\phi\rangle$ to $|\psi\rangle \otimes |\xi \rangle$ for some $|\xi\rangle$.

  Definition~\ref{defnselftesteOriginal} therefore
  implies that a state is self-tested by a box if and only if the following two conditions are both satisfied:
\begin{quote} Box-generating condition: The state $|\psi\rangle$ must be able to generate the box $B$ by local projective measurements.
\end{quote}
\begin{quote} Extraction condition: The state can be extracted from any state that could have generated the box by implementing a local isometry and tracing over an auxiliary system which is in a pure state.
\end{quote}
These two conditions are incorporated into what we have termed the ``standard definition'' (Definition~\ref{defnselftesteOld})---we have merely relaxed the condition that the measurements be projective and that the auxiliary system be in a pure state.  But the definition one finds in Ref.\cite{vsupic2020self}
 has left out the box-generating condition.

 It consequently 
asserts that  $|\psi\rangle$ can be self-tested by box $B$ even if $|\psi\rangle$ itself could not have generated $B$.  In particular, therefore, it asserts that {\em every nonlocal box $B$ self-tests every 
product pure state}, simply because a product
 pure state can be extracted from {\em any} pure state by 
\rob implementing a local isometry and tracing over an auxiliary system which is in a pure state\blk.
 This conclusion is one that the authors presumably did not intend to endorse.  Indeed, it is commonplace to refer to self-testing as a certification of what state {\em was used} to generate the box.  This example also highlights the inconsistency of the definition one finds in Ref.~\cite{vsupic2020self} with the original definition (Definition~\ref{defnselftesteOriginal}).  According to the latter, if $|\psi\rangle$ is a product state, then it {\em cannot}  be self-tested by a nonlocal box because  the local measurements are explicitly required to be on the state $|\psi\rangle$ rather than on the state $|\xi\rangle$ of the auxiliary system, and a product state cannot yield a nonlocal box. 

It is these considerations that suggest to us that the omission of the box-generating condition was merely an oversight, and that the standard view of self-testing takes it as an implicit additional condition. Our Definition~\ref{defnselftesteOld} reflects this by explicitly including the condition.

\section{General resource-theoretic definitions of self-testing} \label{equivdefnself}

The definition of self-testing that is appropriate for a generic resource theory and two types of resources is as follows:
\begin{defn}[general resource-theoretic]\label{defnselftesteGeneral}
{
\abovedisplayskip=1pt plus 3pt
\belowdisplayskip=1pt plus 3pt
\abovedisplayshortskip=1pt plus 3pt
\belowdisplayshortskip=1pt plus 3pt
Consider a resource theory with the free operations $\mathcal{F}$, a set of resources that are the testees, $\mathcal{E}$,
 and a set of resources that are the testers, $\mathcal{T}$
  (typically the testee and tester resources are presumed to be of different types).  Given a pair consisting of a testee resource $E \in \mathcal{E}$ and a tester resource $T \in\mathcal{T}$, we say that $E$  {\em is self-tested by} $T$ 
   if it holds that
\begin{align}
E \mapsto T \nonumber
\end{align}
 and that for all testee resources $E' \in \mathcal{E}$,
\begin{align} 
\textrm{if }\; E' \mapsto T \; \textrm{ then }\; E' \mapsto E,\nonumber
\end{align}
where all conversions are evaluated relative to the free operations $\mathcal{F}$. A testee resource $E \in \mathcal{E}$ is said to be {\em self-testable} 
 if there exists some $T \in \mathcal{T}$ that self-tests $E$.
}
\end{defn}

For the resource theory of interest in this article, and for self-testing of states by boxes (Definition~\ref{defnselfteste}), the set of testee resources is the set of all states, while
the set of tester resources is the set of all boxes,
 and the set of free operations is LOSR.

We now elaborate on a more abstract resource-theoretic characterization of self-testing which is equivalent to Definition~\ref{defnselftesteGeneral}.

The free operations of the resource theory, $\mathcal{F}$, define a preorder over all of the testee and tester resources of interest, that is,  $\mathcal{E}\cup \mathcal{T}$.   Specifically, for $E\in \mathcal{E}$ and $T \in \mathcal{T}$, we have a preorder relation $E > T$ if there exists an operation in $\mathcal{F}$ converting $E$ into $T$. 

The \emph{upward closure} of a resource $R$ in the preorder of resources defined by a resource theory with free operations $\mathcal{F}$, denoted here by ${\rm UC}_{\mathcal{F}}(R)$, is the set of all resources that can be converted to $R$ using operations in $\mathcal{F}$. Formally, 
\begin{align*}
{\rm UC}_{\mathcal{F}}(R) := \{ R' :R' \mapsto R\},
\end{align*}
where $R' \mapsto R$ here denotes convertibility relative to $\mathcal{F}$. 
Note that it is defined for any $R$, including any testee resource $E\in \mathcal{E}$  and any tester resource $T\in \mathcal{T}$.

Using the notion of upward closure, one can provide an equivalent formulation of Definition~\ref{defnselftesteGeneral}.  
\begin{defn}[general resource-theoretic] \label{defnselftesteGeneral2}
Consider a set of testee resources $\mathcal{E}$ and a set of tester resources $\mathcal{T}$ within a resource theory with free operations $\mathcal{F}$.  A particular testee resource $E \in \mathcal{E}$  is {\em self-tested} by a particular tester resource $T\in \mathcal{T}$ 
 if it holds that the upward closure of $T$ within $\mathcal{E}$ contains all and only those testees which are contained in the upward closure of $E$ within $\mathcal{E}$,
\beq\label{eqgeneral2}
   {\rm UC}_{\mathcal{F}}(T) \cap \mathcal{E} = {\rm UC}_{\mathcal{F}}(E) \cap \mathcal{E}.
   \eeq
\end{defn}

It is also instructive to consider a reformulation of the definition of self-testing which focuses on the partial order of equivalence classes of resources rather than the preorder of resources themselves.  Let $\tilde{R}$ denote the $\mathcal{F}$-equivalence class of the resource $R$.   For a pair of distinct equivalence classes, $\tilde{R}$ and $\widetilde{R}'$, let $\tilde{R}>\widetilde{R}'$ 
denote that 
 $\tilde{R}$ is above 
 $\widetilde{R}'$ in the partial order, meaning that $\forall R\in \tilde{R}, \forall{R'}\in\widetilde{R'}: R \mapsto R'$.

Clearly, the upward closure of a resource $R$ is the same as that of any resource within the $\mathcal{F}$-equivalence class of $R$, that is, ${\rm UC}_{\mathcal{F}}(R)$  is the same for any $R \in \tilde{R}$.  

The \emph{upward closure in the partial order} of an equivalence class of resources, $\tilde{R}$, denoted here by $\widetilde{{\rm UC}}_{\mathcal{F}}(\tilde{R})$, is the set of all equivalence classes of resources that are above or equal to $\tilde{R}$ in the partial order induced by the free operations $\mathcal{F}$. Formally, 
\begin{align*}
\widetilde{{\rm UC}}(\tilde{R}) := \{ \tilde{R}' :\tilde{R}' \ge \tilde{R}\}.
\end{align*}

Using this notion, we can provide another definition of self-testing.

\begin{defn}[general resource-theoretic, equivalence classes] \label{GeneralDefnSelfTestingAbstract}
Consider a set of equivalence classes of testee resources $\tilde{\mathcal{E}}$ and a set of equivalence classes of tester resources $\tilde{\mathcal{T}}$ within a resource theory with free operations $\mathcal{F}$.  A particular testee equivalence class $\tilde{E} \in \tilde{\mathcal{E}}$  is {\em self-tested} by a particular tester equivalence class $\tilde{T}\in \tilde{\mathcal{T}}$ 

 if it holds that the upward closure of $\tilde{T}$ within $\tilde{\mathcal{E}}$ contains all and only those testees which are contained in the upward closure within $\tilde{\mathcal{E}}$ of $\tilde{E}$ ,
\beq\label{abstracteq1}
   \widetilde{\rm UC}_{\mathcal{F}}(\tilde{T}) \cap \tilde{\mathcal{E}} = \widetilde{\rm UC}_{\mathcal{F}}(\tilde{E}) \cap \tilde{\mathcal{E}}. 
   \eeq
This condition is equivalent to $\tilde{E}$ being the minimum element of the upward closure within $\tilde{\mathcal{E}}$ of $\tilde{\mathcal{T}}$ (considered as a subset of a partial order),

\begin{align}
\tilde{E}= 
\min
 \left[ \widetilde{{\rm UC}}_{\mathcal{F}}(\tilde{T}) \cap \tilde{\mathcal{E}}
 \right],
\end{align}
where $\min (S)$ denotes the minimum element of the subset $S$ of a partially ordered set, i.e., an element of $S$ that is smaller than every other element of $S$.
\end{defn}

If one particularizes this definition to the case of states and boxes, one obtains the following definition, which is equivalent to Definition~\ref{defnselfteste}.

\begin{defn} \label{AbstractDefnSelfTesting2}
Consider the resource theory wherein the free operations are LOSR. We say that a particular equivalence class of states $\tilde{\rho}$  is {\em self-tested}  by a particular equivalence class of boxes $\tilde{B}$ if $\tilde{\rho}$ is the minimum element of the upward closure  of $\tilde{B}$ within the set of all equivalence classes of states, $\widetilde{\rm States}$, that is, if
\begin{align}\label{eqrhoUC}
\tilde{\rho} = 
\min \left[ \widetilde{{\rm UC}}(\tilde{B}) \cap \widetilde{\rm States}
 \right].
\end{align}
\end{defn}

The resource-theoretic perspective also motivates a natural relaxation of the notion of self-testing, corresponding to dropping the condition of uniqueness.  To express the relaxation of Definition~\ref{GeneralDefnSelfTestingAbstract}, for instance, in which a {\em set} of equivalence classes of testee resources, 
$\{\tilde{E}_x \}_x \subset \tilde{\mathcal{E}}$
 (which could be finite or continuous) is self-tested by an equivalence class of tester resources, $\tilde{T}$, it suffices to replace Eq.~\eqref{abstracteq1} with
\beq\label{abstracteq1relaxed}
   \widetilde{\rm UC}_{\mathcal{F}}(\tilde{T}) \cap \tilde{\mathcal{E}} = \widetilde{\rm UC}_{\mathcal{F}}(\{\tilde{E}_x \}_x) \cap \tilde{\mathcal{E}},
   \eeq
   which is equivalent to
   \begin{align}
\{\tilde{E}_x \}_x = 
 {\rm MinElements}
\left[ \widetilde{{\rm UC}}_{\mathcal{F}}(\tilde{T}) \cap \tilde{\mathcal{E}}
 \right],
\end{align}
where ${\rm MinElements}(S)$ denotes the minimal elements of a subset $S$ of a partially ordered set, i.e., those elements of $S$ that are not greater than any other element of $S$.  
\rob
(If a subset $S$ has a minimum element and ${\rm MinElements}(S)$ is a singleton set, then ${\rm MinElements}(S)$ coincides with ${\rm min}(S)$.\footnote{ To see that ${\rm MinElements}(S)$ being a singleton set does not by itself imply that it coincides with ${\rm min}(S)$, it suffices to consider preorders that can be of infinite height and therefore that can include subsets with no minimum. For instance, if $S$ is an infinite chain together with another element $r$ that is incomparable to the points in the chain, then ${\rm MinElements}(S) =\{r\}$, but ${\rm min}(S)$ is the empty set. We thank an anonymous referee for this example. }  )

To recast this relaxed notion of self-testing in terms of a set of resources 
$\{ E_y \}_y
 \subset \mathcal{E}$ rather than a set of equivalences classes thereof, 
 i.e., to express this relaxation in the form of Definition~\ref{defnselftesteGeneral2}, it suffices to modify Eq.~\eqref{eqgeneral2} to:
\beq\label{eqgeneral2relaxed}
   {\rm UC}_{\mathcal{F}}(T) \cap \mathcal{E} = {\rm UC}_{\mathcal{F}}(\{ E_y \}_y ) \cap \mathcal{E}.
\eeq
Note that if the set of equivalence classes of testee resources $\{ \tilde{E}_x \}_x \subset \tilde{\mathcal{E}}$ is self-tested by the box $B$, then the only sets of testee resources, 
$\{ E_y\}_y
\subset \mathcal{E}$, that are self-tested by $B$ are sets containing at least one representative from {\em each} equivalence class in $\{\tilde{E}_x \}_x$.

The notion of {\em unique} self-testing differs from this relaxed notion of self-testing by the fact that it demands that the set of equivalence classes of testee resources, $\{ \tilde{E}_x\}_x$, is a singleton set, or equivalently, that the set of testee resources, $\{ E_y\}_y$, is contained within a single equivalence class.

To be sure, in cases of nonuniqueness, there will be more uncertainty about the identity of the testee resource than if unique certification were possible.  But for certain applications, it might be sufficient if this uncertainty is quantified and bounded.

We now consider the specialization of this relaxed notion of self-testing to the case of self-testing entangled states by nonlocal boxes.  The condition for a set of equivalence classes of states $\{\tilde{\rho}_x \}_x$ to be self-tested by an equivalence class of boxes, $\tilde{B}$, is as in Definition~\ref{AbstractDefnSelfTesting2}, but where Eq.~\eqref{eqrhoUC} is replaced by:
\begin{align}\label{eqrhoUC2}
\{ \tilde{\rho}_x \}_x = 
\rm MinElements
\left[ \widetilde{{\rm UC}}(\tilde{B}) \cap \widetilde{\rm States}
 \right].
\end{align}

In the main text, we noted that a chiral state is an example of a state that cannot be self-tested according to the nonrelaxed definition.  From the perspective of the relaxed definition, however, 
the set consisting of the equivalence class of a chiral state and the equivalence class of its image under complex conjugation is one that {\em can} be self-tested. 
Note that allowing that a box can certify a {\em set of equivalence classes of states} rather than a single such equivalence class (up to upward closure of course) is quite  different from proposing that the free operations be modified to include complex conjugation.
As we argued in the main text, such a relaxation is the best way to interpret proposals for modifying the notion of self-testing in order to include chiral states.

\section{Self-testing as unique certification up to equivalence}\label{selftestinguptoequivalence}

We noted in Section~\ref{sec:selftesting} that the resource-theoretic perspective on self-testing clarifies that it corresponds to certification up to {\em upward closure}, rather than up to equivalence. 

Nonetheless, there is a circumstance in which one {\em can} certify the state up to equivalence, namely, if one has restricted  attention to a subset of states wherein no two states are strictly ordered.   Under such a promise about the state, one can conclude from the satisfaction of the condition of self-testability of $\rho$ by box $B$ that any other state $\sigma$ in the subset that can generate the box $B$ is indeed LOSR-equivalent to $\rho$, rather than merely above $\rho$ in the LOSR order.

Determining when {\em certification up to equivalence} is possible motivates the task of finding  subsets of states that contain no strictly ordered elements, so that every pair of states in the set is either equivalent or incomparable. 
 That is precisely the sort of problem that is solved by determining the preorder of entangled states.  
In particular, our Corollary~\ref{cor:equivalent_or_incomparable} establishes that for any set of multipartite pure states having a fixed Schmidt rank along every bipartition, every pair of states in the set is either LOSR-equivalent or LOSR-incomparable.  As such, for the set of all {\em bipartite} pure states with a given Schmidt rank, every pair of states is either LOSR-equivalent or LOSR-incomparable.  If one then leverages the result of Ref.~\cite{Coladangelo2017}, namely, that every pure bipartite state is self-testable, it follows that if the state is known to be in the set of  pure bipartite states of a given Schmidt rank, then it {\em is} possible to certify the state up to equivalence.

We note that the existence of a continuum of states that can be uniquely certified in the sense just described is only possible because of two features of the LOSR preorder: it contains anti-chains (sets of pairwise incomparable elements) of infinite cardinality both among states, as follows from Corollary~\ref{cor:equivalent_or_incomparable} and the continuum of possibilities for vectors of squared Schmidt coefficients of a given Schmidt rank, and among boxes, as proven in Ref.~\cite{wolfe2020quantifying}.   

For what other sets of pure states might one be able to prove the possibility of unique certification of states within that set up to LOSR-equivalence?  Our characterization of the LOSR preorder of pure states also sheds light on this question.   
As we demonstrate in Corollary~\ref{rankcoro} of Appendix~\ref{alg}, 
a sufficient condition for the incomparability of two LOSR-inequivalent bipartite pure states is that the ratio of their Schmidt ranks is not an integer.
We can then leverage the result of Ref.~\cite{Coladangelo2017} to conclude the following:
 if we let $S_x$ be the set of all pure bipartite states of Schmidt rank $x$, then for any set of integers $\mathcal{X}$ for which every ratio is noninteger
  (i.e., $\forall x,x'\in \mathcal{X}$ it holds that $x/x' \not\in \mathbb{Z}^{+}$), the set of states $\cup_{x \in \mathcal{X}} S_x$  
is one relative to which unique certification of every state  up to LOSR equivalence) is achievable,
 while for any set of integers $\tilde{\mathcal{X}}$ for which there exists a pair with an integer ratio 
 (i.e., $\exists x,x'\in \tilde{\mathcal{X}}$ such that $x/x' \in \mathbb{Z}^{+}$), such unique certification is impossible. \rob Presumably, this example is of academic interest only.  \blk

\section{On LOSR equivalence classes and LO equivalence classes of states and boxes}
\label{discrepanciesLOSRLO}

We begin by exhibiting  a collection of mixed states on a bipartite system $A'B'$ that are LOSR-equivalent to a pure state $|\psi\rangle \langle \psi|$ on the bipartite system $AB$.\footnote{Note that we denote the quantum system in Bob's possession by $B$, a notation that we have also used for a generic box; which usage is intended in a given instance should be clear from the context. } Specifically, consider the states of the following form, which we term {\em flag-convexifications} of $\psi$:
\begin{align}\label{flaggedstate}
\rho_{A'B'} := \sum_{ij} p(i j) U_A^{(i)} \otimes U_B^{(j)} & \ket{\psi} \! \bra{\psi}_{AB} U_A^{ (i)\dagger} \otimes U_B^{ (j)\dagger} 
\nonumber \\ &\otimes \ket{i}\! \bra{i}_{Z_A} \otimes \ket{j}\! \bra{j}_{Z_B},
\end{align}
where $Z_A$ and $Z_B$ are local flag degrees of freedom for Alice and for Bob respectively, $A'=(A,Z_A)$ and $B'=(B,Z_B)$, $p(ij)$ is a joint probability distribution, $U_A^{(i)}$ and $U_B^{(j)}$ are arbitrary unitaries, and $\{ \ket{i}_{Z_A}\}_i$ and $\{ \ket{j}_{Z_B}\}_j$ are arbitrary orthonormal bases for $Z_A$ and $Z_B$.     As long as $p(ij)$ is not a point distribution, 
$\rho_{A'B'}$ is mixed.  Such a state is depicted schematically in Fig.~\ref{NonextremalStates}.
Note that this set of states is not an exhaustive characterization of those that are LOSR-equivalent to $\ket{\psi}$.\footnote{The latter set can be obtained by considering all the ways of supplementing $\ket{\psi}$ with shared randomness and then applying local isometries.}

To give the intuition for why a flag-convexification of $\ket{\psi}$, that is, a mixed state $\rho$ of the form of Eq.~\eqref{flaggedstate}, is LOSR-equivalent to $\ket{\psi}$, we provide a schematic in Fig.~\ref{NonextremalStates} of the reversible LOSR operation that is needed to take $|\psi\rangle$ to $\rho$; the one that recovers $|\psi\rangle$ from $\rho$ is easily inferred from the figure.

\begin{figure}[h!]
\centering
    \includegraphics[scale=1]{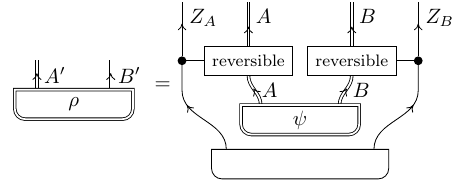}
    \caption{A schematic depiction of a mixed state $\rho$ of the form of Eq.~\eqref{flaggedstate}, which is LOSR-equivalent to the pure state $|\psi\rangle$.
We have here used the conventional notation for a controlled operation.
    }
    \label{NonextremalStates}
\end{figure}

The detailed proof 
 is as follows. 
  To convert $\ket{\psi}_{AB}$ to $\rho_{AB}$, the parties use their shared randomness to prepare their flag systems in the separable state $\sum_{ij} p(ij) \ket{i}\bra{i}_{Z_A} \otimes \ket{j}\bra{j}_{Z_B}$, Alice applies a controlled unitary $\sum_i U_A^{(i)} \otimes \ket{i}\bra{i}_{Z_A}$ with her local flag system $Z_A$ as the control and $A$ 
  as the target, and Bob acts similarly.
To convert $\rho_{AB}$ to $\ket{\psi}_{AB}$, Alice applies the controlled unitary $\sum_i U_A^{(i)\dagger} \otimes \ket{i}\bra{i}_{Z_A}$, 
 Bob acts similarly, and then each traces out their local flag system.

Flag-convexification also provides a means of exhibiting a collection of mixed states that are LO-equivalent (as well as LOSR-equivalent) to a given pure state.  It suffices to take the distribution $p(ij)$ in Eq.~\eqref{flaggedstate} to be factorizing, i.e., $p(ij)= p(i)p(j)$.  In this case, one only requires local randomness, rather than shared randomness to prepare $p(ij)$, so that the operation in Eq.~\eqref{flaggedstate} is implementable by LO.

It is also evident that mixed states of the form of Eq.~\eqref{flaggedstate} wherein $p(ij)$ does {\em not} factorize are examples of states that are equivalent to $\ket{\psi}$ in the LOSR order, but strictly above $\ket{\psi}$ in the LO order.   The existence of such states was noted in the main text.  This was claim (b) of Lemma~\eqref{lem:conversion3}.  The example used to prove the claim was $\rho = |\psi\rangle\langle \psi| \otimes \omega$, where $\omega$ is any separable but nonfactorizing mixed state.  Flag-convexification merely provides a collection of further examples that establish this claim.   Note that the example we used to prove claim (b) of Lemma~\ref{lem:conversion3} in the text is a trivial instance of Eq.~\eqref{flaggedstate} with nonfactorizing $p(ij)$, namely, one wherein the reversible operations are identity maps.

We now turn from states to boxes.  The arguments proceed analogously to the case of states.

We begin by exhibiting a collection of convexly nonextremal boxes that are LOSR-equivalent to a convexly extremal box $B_{\rm ext}$, termed {\em flag-convexifications} of $B_{\rm ext}$.
Denoting the conditional probability distribution associated to the convexly extremal box by $B^{\rm ext}_{X Y|ST}$ (where $X$ and $Y$ are the outputs for Alice and Bob and $S$ and $T$ are their inputs), one can construct from $B_{\rm ext}$ a box $B$ associated to the conditional probability distribution $B_{X' Y'|S T}$, as follows:
\begin{align}\label{flaggedbox}
B_{X'Y'|ST} = 
\sum_{ij} p(ij) \left[ \mathcal{F}_X^{(i)} \otimes \mathcal{F}_Y^{(j)} \right] ( B^{\rm ext}_{XY|ST} )   \otimes \delta_{Z_A,i} \delta_{Z_B,j}
\end{align}
where $X' =(X,Z_A)$ and $Y'=(Y,Z_B)$, where $\mathcal{F}_X^{(i)}$ is a reversible function acting on $X$ and similarly for $\mathcal{F}_Y^{(j)}$, and where $\delta_{x,y}$ represents the Kronecker-delta function.  Such a box is depicted schematically in Fig.~\ref{NonextremalBoxes}.
As long as $p(ij)$ is not a point distribution, $B$ is convexly nonextremal.\footnote{Note that this is not the most general form of a convexly nonextremal box that is LOSR-equivalent to $B_{\rm ext}$.  In particular, by copying the inputs of $B_{\rm ext}$, i.e., $S$ and $T$, and feeding these forward, the reversible function applied to $X$ can be controlled on $S$ in addition to $Z_A$ and the reversible function applied to $Y$ can be controlled on $T$ in addition to $Z_B$.}

The intuition for why any $B$ that is a flag-convexification of $B_{\rm ext}$ is LOSR-equivalent to  $B_{\rm ext}$, is provided in Fig.~\ref{NonextremalStates} which exhibits the reversible LOSR operation that is needed to take $B_{\rm ext}$ to $B$ and makes clear what LOSR operation would recover $B_{\rm ext}$ from $B$.

\begin{figure}[h!]
\centering
    \includegraphics[scale=1]{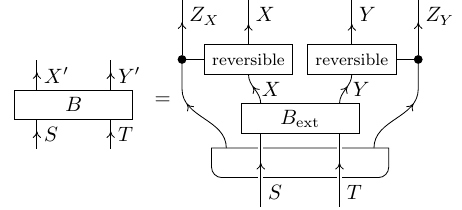}
    \caption{
    A schematic depiction of a convexly nonextremal box $B$ of the form of Eq.~\eqref{flaggedbox}, which is LOSR-equivalent to the convexly extremal box $B_{\rm ext}$.
}
    \label{NonextremalBoxes}
\end{figure}

The detailed proof 
 is as follows.   To convert $B_{\rm ext}$ to $B$, the parties use their shared randomness to prepare their classical flag variables $Z_A$ and $Z_B$ in the distribution ${\sum_{ij} p(ij) \delta_{Z_A,i} \otimes \delta_{Z_B,j}}$, Alice applies the function $\mathcal{F}^{(i)}_{X}$ to $X$ by controlling on the value of $Z_A$, and Bob applies the function $\mathcal{F}^{(i)}_{Y}$ to $Y$ by controlling on the value of $Z_B$. To convert $B$ to $B_{\rm ext}$, Alice simply also applies the same controlled operation, as does Bob, and then each traces out their local flag variable.

As with states, flag-convexification of boxes provides a means of exhibiting a collection of mixed states that are LO-equivalent, rather than LOSR-equivalent, to a given convexly extremal box, namely, those wherein $p(ij)$ in Eq.~\eqref{flaggedbox} factorizes.  Also similarly to the case of states, boxes of the form of Eq.~\eqref{flaggedbox} wherein $p(ij)$ does {\em not} factorize are equivalent to $B_{\rm ext}$ in the LOSR order but above $B_{\rm ext}$ in the LO order.  This is claim (b) of Lemma~\ref{convext}.  The example used to prove this claim in the text was a trivial instance of a flag-convexified box with nonfactorizing $p(ij)$, namely, one wherein the reversible operations are identity maps.

A graphical depiction of the difference between the partial order of states and boxes under LOSR and the partial order under LO is provided in Fig.~\ref{LOvsLOSRorders}.
 
 \begin{figure}[h!]
 \centering
    \subfloat[]{
        \raisebox{-0.5\height}{\includegraphics[scale=1]{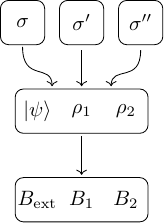}}
        \label{fig:hasse_losr}
    }
    \hspace{3mm}
    \subfloat[]{
        \raisebox{-0.5\height}{\includegraphics[scale=1]{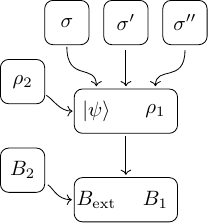}}
        \label{fig:hasse_lo}
    }
    \caption{Partial orders of equivalence classes of states and boxes relative to LOSR (a) and relative to LO (b) revealing the discrepancies between LOSR-based and LO-based notions of self-testing.  Here, the $\rho$'s and the $B$'s represent particular elements of a given equivalence class, where a class is depicted by a loop. $\rho_1$ and $\rho_2$ are of the form of Eq.~\eqref{flaggedstate} and $B_1$ and $B_2$ are of the form of Eq.~\eqref{flaggedbox}, but only $\rho_2$ and $B_2$ involve a distribution $p(ij)$ that does not factorize.  
 $\rho_2$ is self-tested by $B_{\rm ext}$ relative to LOSR but not relative to LO, and $B_2$ self-tests $|\psi\rangle$ relative to LOSR but not relative to LO.
   }
    \label{LOvsLOSRorders}
\end{figure}

\section{On the insufficiency of the condition in Corollary~\ref{cor:multipartite_extension} for LOSR-convertibility} \label{insufficiency}

One consequence of Eq.~\eqref{eq:LOCCstrongestcontrapositive} is that each pair of LOCC-incomparable states is also LOSR-incomparable. We can leverage this observation, together with known results on LOCC-incomparability, to show that Eq.~\eqref{eq:multipartite_extension} in Corollary~\ref{cor:multipartite_extension} fails to be a sufficient condition for all $n > 2$.

Specifically, it is known that there exists chiral states for each $n > 2$ which are LOCC-incomparable to their complex conjugate (in a particular basis)~\cite{bennett2000exact,Kraus2010Local}. Eq.~\eqref{eq:LOCCstrongestcontrapositive} implies that these chiral states are also {\em LOSR-incomparable} to their complex conjugate. These instances of LOSR-incomparability are not witnessed by a failure of Eq.~\eqref{eq:multipartite_extension}
 because the squared Schmidt coefficients of any state are unchanged after complex conjugation. Consequently, satisfaction of Eq.~\eqref{eq:multipartite_extension} is not a sufficient condition for LOSR-convertibility.

For a concrete example with $n=3$,
  consider the tripartite state $\ket{\psi} = \ket{+++}_{ABC} + \frac{i - 1}{2 \sqrt{2}} \ket{111}_{ABC}$ and its complex conjugate (relative to the computational basis), $\ket{\psi^{\ast}}$. Evidently, the squared Schmidt coefficients of $\ket{\psi}$ and $\ket{\psi^{\ast}}$ 
 coincide for all bipartitions:
\begin{align} \label{psipsistar}
    \forall \beta : \lambda^{(\beta)}_{\psi} = \lambda^{(\beta)}_{\psi^{\ast}} = \left(\tfrac12 + \sqrt{\tfrac{5}{32}},\;\tfrac12 - \sqrt{\tfrac{5}{32}}\right).
\end{align}
It follows that the necessary condition for $\ket{\psi} \mapsto \ket{\psi^{\ast}}$ expressed in Eq.~\eqref{eq:multipartite_extension} is seen to be satisfied because one can take $\forall \beta: \lambda_{\zeta}^{(\beta)} = (1)$, which corresponds to $\ket{\zeta}$ being a product state,
  i.e., $\ket{\zeta} = \ket{\zeta_1}_A\ket{\zeta_2}_B\ket{\zeta_3}_C$. 

\section{A method for computing $\lambda_{\zeta}^{(\beta)}$} \label{alg}

This appendix delineates a method for computing $\lambda_{\zeta}^{(\beta)}$ in Eq.~\eqref{eq:multipartite_extension} (if it exists) for a fixed bipartition $\beta$ when $\ket{\psi}$ and $\ket{\phi}$ are known. This method is useful because, as previously mentioned, Eq.~\eqref{eq:multipartite_extension} constitutes a necessary constraint for LOSR (or LO) convertibility between $n$-partite pure states.
Moreover, in the case of bipartite systems, Eq.~\eqref{eq:multipartite_extension} reduces to the necessary and sufficient condition given by Eq.~\eqref{eq:bipartite_extension}, for which the following procedure also applies.

Importantly, Eq.~\eqref{eq:multipartite_extension} already constrains the Schmidt ranks of $\ket{\zeta}$ relative to all bipartitions,
\[ \forall \beta: \mathrm{SR}_{\psi}^{(\beta)} = \mathrm{SR}_{\phi}^{(\beta)} \mathrm{SR}_{\zeta}^{(\beta)}. \]

\begin{cor} 
    \label{rankcoro}
    A pure state $\ket \psi$ can be converted to pure state $\ket \phi$ by $\LOSR$ (or by $\LO$) only if for each bipartition $\beta$, the ratio of Schmidt ranks between $\ket \psi$ and $\ket \phi$ are positive integers, i.e.,
    \begin{align}
        \forall \beta : k^{(\beta)} \coloneqq \frac{\mathrm{SR}_{\psi}^{(\beta)}}{\mathrm{SR}_{\phi}^{(\beta)}} \in \mathbb Z^+.
    \end{align}
\end{cor}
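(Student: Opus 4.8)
The plan is to derive the claim directly from the necessary condition for LOSR-convertibility already established in Corollary~\ref{cor:multipartite_extension}, reducing everything to a count of nonzero entries. I would begin by supposing $\ket\psi \mapsto \ket\phi$ by $\LOSR$ (equivalently by $\LO$, via Lemma~\ref{lem:conversion}). Then there exists a pure state $\ket\zeta$ such that Eq.~\eqref{eq:multipartite_extension} holds for every bipartition $\beta$, namely $(\lambda^{(\beta)}_\psi)^{\downarrow} = (\lambda^{(\beta)}_\phi \otimes \lambda^{(\beta)}_\zeta)^{\downarrow}$. I would fix an arbitrary $\beta$ and count the nonzero entries on each side.

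The key observation is that the Schmidt rank $\mathrm{SR}^{(\beta)}$ is, by definition, the number of nonzero entries of the associated vector of squared Schmidt coefficients, and that this count is invariant under the reordering $(\cdot)^{\downarrow}$. Thus the only ingredient needed is the multiplicativity of the nonzero-entry count under the tensor (Kronecker) product. Since the entries of $\lambda^{(\beta)}_\phi$ and $\lambda^{(\beta)}_\zeta$ are nonnegative, an entry $(\lambda^{(\beta)}_\phi \otimes \lambda^{(\beta)}_\zeta)_{ij} = (\lambda^{(\beta)}_\phi)_i (\lambda^{(\beta)}_\zeta)_j$ is nonzero if and only if both factors are nonzero---there is no accidental cancellation---so the number of nonzero entries of the product equals the product of the numbers of nonzero entries of the factors. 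Counting nonzero entries on both sides of Eq.~\eqref{eq:multipartite_extension} then yields the rank factorization already flagged immediately above the corollary,
\[ \mathrm{SR}_{\psi}^{(\beta)} = \mathrm{SR}_{\phi}^{(\beta)}\, \mathrm{SR}_{\zeta}^{(\beta)}. \]

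Finally, since $\ket\zeta$ is a genuine pure state, its Schmidt rank with respect to any bipartition is a positive integer, $\mathrm{SR}_{\zeta}^{(\beta)} \geq 1$. Rearranging the factorization gives $k^{(\beta)} = \mathrm{SR}_{\psi}^{(\beta)} / \mathrm{SR}_{\phi}^{(\beta)} = \mathrm{SR}_{\zeta}^{(\beta)} \in \mathbb{Z}^+$, and since $\beta$ was arbitrary this holds for all bipartitions. There is essentially no hard step here: the entire content is the multiplicativity of the nonzero-entry count under tensor products, which is immediate once one notes that the squared Schmidt coefficients being counted are strictly positive. The only point requiring minor care is not to conflate the nonzero-entry count (the Schmidt rank) with the ambient dimension of $\ket\zeta$; it is the former, not the latter, that factorizes.
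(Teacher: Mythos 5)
Your proof is correct and follows the same route as the paper: the paper derives the corollary from the rank factorization $\mathrm{SR}_{\psi}^{(\beta)} = \mathrm{SR}_{\phi}^{(\beta)} \mathrm{SR}_{\zeta}^{(\beta)}$, stated immediately before Corollary~\ref{rankcoro} as a direct consequence of Eq.~\eqref{eq:multipartite_extension}, exactly as you do. Your write-up merely makes explicit the counting step (multiplicativity of the nonzero-entry count under tensor products of nonnegative vectors) that the paper leaves implicit, which is a welcome but not substantively different addition.
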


Therefore, one should first compute $\lambda_{\psi}^{(\beta)}$ and $\lambda_{\phi}^{(\beta)}$ and thereby infer
$k^{(\beta)} \coloneqq \mathrm{SR}_{\psi}^{(\beta)} / \mathrm{SR}_{\phi}^{(\beta)}$;
if $k^{(\beta)}$
fails to be an integer for any bipartition $\beta$, then there is no solution to Eq.~\eqref{eq:multipartite_extension} for any $\lambda_{\zeta}^{(\beta)}$.
\rob Otherwise, $k^{(\beta)}$ will be an integer equal to the Schmidt rank with respect to the bipartition $\beta$ of $\ket{\zeta}$ (provided such a $\ket{\zeta}$ exists).

Without loss of generality, assume that the entries of $\lambda_{\psi}^{(\beta)}$ (and similarly $\lambda_{\phi}^{(\beta)}$ and $\lambda_{\zeta}^{(\beta)}$) are sorted in a non-increasing order, e.g. $\lambda_{\psi, 1}^{(\beta)} \geq \lambda_{\psi, 2}^{(\beta)} \geq \cdots \geq 0$.

Now for each $1 \leq j \leq \mathrm{SR}_{\zeta}^{(\beta)}$, let $\Lambda_{j}^{(\beta)}$ denote the multiset (where multiplicities are included)
\begin{align}
    \Lambda_{j}^{(\beta)} = (\lambda_{\phi,i}^{(\beta)}\lambda_{\zeta,j}^{(\beta)})_{i \in 1}^{\mathrm{SR}_{\phi}^{(\beta)}},
\end{align}
and let $\Lambda^{(\beta)}$ denote the multiset
\begin{align}
    \Lambda^{(\beta)} = (\lambda_{\psi,i}^{(\beta)})_{i \in 1}^{\mathrm{SR}_{\psi}^{(\beta)}}.
\end{align}
Notice that Eq.~\eqref{eq:multipartite_extension} is equivalent to the claim that the set of multisets $(\Lambda_{j}^{(\beta)})_{j = 1}^{\mathrm{SR}_{\zeta}^{(\beta)}}$ must form a multiset partition of $\Lambda^{(\beta)}$, i.e.
\begin{align}
    \label{eq:multiset_partition}
    \Lambda^{(\beta)} = \coprod_{j=1}^{\mathrm{SR}_{\zeta}^{(\beta)}}\Lambda_{j}^{(\beta)},
\end{align}
where the coproduct operation is the sum of multisets (generalizing the disjoint union of ordinary sets).

Without knowing the value of $\lambda_{\zeta,j}^{(\beta)}$, the value of $\Lambda_{j}^{(\beta)}$ is unknown. Fortunately, if a solution for $\lambda_{\zeta}^{(\beta)}$ in Eq.~\eqref{eq:multipartite_extension} exists, it must be that $\lambda_{\zeta,1}^{(\beta)}$ (the largest entry of $\lambda_{\zeta}^{(\beta)}$) is equal to
\begin{align}
    \label{eq:largest_value_trick}
    \lambda_{\zeta,1}^{(\beta)} = \frac{\lambda_{\psi,1}^{(\beta)}}{\lambda_{\phi,1}^{(\beta)}}.
\end{align}
Therefore, $\Lambda_{1}^{(\beta)}$ can be immediately determined. If $\Lambda_{1}^{(\beta)}$ is not contained in $\Lambda^{(\beta)}$, then no solution for $\lambda_{\zeta}^{(\beta)}$ in Eq.~\eqref{eq:multipartite_extension} exists and the procedure can be terminated. To compute the remaining entries of $\lambda_{\zeta}^{(\beta)}$, note that Eq.~\eqref{eq:multiset_partition} implies a generalization to Eq.~\eqref{eq:largest_value_trick}
\begin{align}
    \lambda_{\zeta,j}^{(\beta)} = \frac{1}{\lambda_{\phi,1}^{(\beta)}}\max\{ \lambda' | \lambda' \in \Lambda^{(\beta)} \setminus (\coprod_{i<j}\Lambda_{i}^{(\beta)}) \}.
\end{align}
If for any $j$, $\coprod_{i<j}\Lambda_{i}^{(\beta)}$ is not contained in $\Lambda^{(\beta)}$, then again no solution for $\lambda_{\zeta}^{(\beta)}$ exists. Otherwise, the computed values for $\lambda_{\zeta,j}^{(\beta)}$ as $j$ ranges from $1$ through $\mathrm{SR}_{\zeta}^{(\beta)}$ will constitute the unique solution for $\lambda_{\zeta}^{(\beta)}$ in Eq.~\eqref{eq:multiset_partition}.
\blk

\section{Bipartite entanglement catalysis under LOSR}\label{catalysis}

Recall that for bipartite pure states, the condition for $\ket \psi \mapsto \ket \phi$ under LOSR can be expressed in terms of vectors of squared Schmidt coefficients as
\begin{equation}\label{AppE1}
\exists \ket \zeta: \lambda_{\psi}^{\downarrow} = (\lambda_{\phi} \otimes \lambda_{\zeta})^{\downarrow}.
\end{equation}
(See Eq.~\eqref{eq:bipartite_extension},  and recall that  $v^{\downarrow}$ denotes the vector with the same components as $v$, but organized in non-increasing order, $v^{\downarrow}_1 \ge v^{\downarrow}_2  \ge \dots \ge v^{\downarrow}_n $.\blk )
Consequently, the condition for $\ket \psi \otimes \ket \chi \mapsto \ket \phi \otimes \ket \chi$ under LOSR is 
\begin{equation}\label{AppE2}
\exists \ket \zeta: (\lambda_{\psi}\otimes \lambda_{\chi})^{\downarrow} = (\lambda_{\phi} \otimes \lambda_{\chi} \otimes \lambda_{\zeta})^{\downarrow}.
\end{equation}  
\rob We here demonstrate that if a particular state $\ket \zeta$ satisfies the equality of Eq.~\eqref{AppE2} for some catalyst $\ket \chi$, then the same state $\ket \zeta$ also satisfies the equality of Eq.~\eqref{AppE1}, in the absence of any catalyst.

For the purposes of making the proof more transparent, we introduce a simplified notation for the vectors appearing therein: $\alpha \equiv \lambda_{\phi} \otimes \lambda_{\zeta}$,  $\beta \equiv \lambda_{\psi}$ and $\chi \equiv \lambda_{\chi}$.  In this notation, what we seek to prove is that  
\begin{equation}\label{AppE3}
(\chi \otimes \alpha)^{\downarrow} = (\chi \otimes \beta)^{\downarrow}
\end{equation}
 implies 
 \begin{equation}\label{AppE4}
 \alpha^{\downarrow} = \beta^{\downarrow}.
 \end{equation}

Note first of all that if one can find a catalyst state $\ket \chi$ that is characterized by a vector of squared Schmidt coefficients, $\chi$, that has some components that are zero, then one can also achieve catalysis by a state with no such components, simply by defining the state on a Hilbert space of smaller dimension. Consequently, one can assume without loss of generality that $\chi_i \ne 0$ for all $i$. 

The proof is by induction.  

Clearly, $(\chi \otimes \alpha)^{\downarrow}_1 = \chi^{\downarrow}_1 \alpha^{\downarrow}_1$ and $(\chi \otimes \beta)^{\downarrow}_1 = \chi^{\downarrow}_1 \beta^{\downarrow}_1$.  Using  $(\chi \otimes \alpha)^{\downarrow}_1 = (\chi \otimes \beta)^{\downarrow}_1$, i.e., the first component of the vector equality expressed by Eq.~\eqref{AppE3}, 
 and recalling that $\chi^{\downarrow}_1 \ne 0$, we infer that $\alpha^{\downarrow}_1 =\beta^{\downarrow}_1$.

Let  $\alpha^{(n)}$ denote the vector obtained by dropping the first $n$ components of $\alpha^{\downarrow}$, so that $\alpha^{(n)} \equiv (\alpha^{\downarrow}_{n+1},\alpha^{\downarrow}_{n+2},\dots)$.  Define $\beta^{(n)}$ similarly.  

Note that $\alpha^{\downarrow}_1 =\beta^{\downarrow}_1$ implies $ \chi^{\downarrow}_i \alpha^{\downarrow}_1 =  \chi^{\downarrow}_i \beta^{\downarrow}_1$ for all $i$.  If we drop from the vector $\chi \otimes \alpha$ the components $\chi^{\downarrow}_1 \alpha^{\downarrow}_1, \chi^{\downarrow}_2 \alpha^{\downarrow}_1, \dots, \chi^{\downarrow}_n \alpha^{\downarrow}_1 $, and put the remaining components of $\chi \otimes \alpha$ in non-increasing order, we obtain the vector $(\chi \otimes \alpha^{(1)})^{\downarrow}$.  The analogous procedure on the vector $\chi \otimes \beta$ yields the vector $(\chi \otimes \beta^{(1)})^{\downarrow}$.  Now note that the components we have dropped from $\chi \otimes \alpha$ are equal to the corresponding components that we have dropped from $\chi \otimes \beta$, and so Eq.~\eqref{AppE3} implies that the remaining components of these vectors, when placed in nonincreasing order, are equal to one another, i.e.,
\begin{equation}\label{AppE6}
(\chi \otimes \alpha^{(1)})^{\downarrow}= (\chi \otimes \beta^{(1)})^{\downarrow}.
\end{equation}

Clearly, the argument provided above to justify the inference from Eq.~\eqref{AppE3} to $\alpha^{\downarrow}_1 =\beta^{\downarrow}_1$ also justifies an inference from Eq.~\eqref{AppE6} to $\alpha^{\downarrow}_2 =\beta^{\downarrow}_2$.

Similarly, the argument provided above to justify the inference from $\alpha^{\downarrow}_1 =\beta^{\downarrow}_1$ and Eq.~\eqref{AppE3} to Eq.~\eqref{AppE6} also justifies an inference from $\alpha^{\downarrow}_2 =\beta^{\downarrow}_2$ and Eq.~\eqref{AppE3} to 
\begin{equation}\label{AppE8}
(\chi \otimes \alpha^{(2)})^{\downarrow}= (\chi \otimes \beta^{(2)})^{\downarrow}.
\end{equation}

By repeating this sequence of arguments inductively, one can infer that $\alpha^{\downarrow}_i =\beta^{\downarrow}_i$ for all $i$, which concludes the proof. 

\section{\sloppy Example of a pair of network structures for which the classical-nonclassical boundary for states differs from the separable-nonseparable boundary}\label{ExamplePairStructures}

Consider the two networks among three parties depicted in Fig.~\ref{fig:trine_and_triangle_networks}. The first, termed the {\em triangle network}, has three 
sources, each of which is shared between a different pair of parties (these sources are therefore termed ``2-way sources'').  The second,
 which we term the {\em  trine network}, 
has a single 
 source shared between all three parties (which is termed a ``3-way source'').  
\begin{figure}[htb]
\centering
    \subfloat[The triangle network.]{
        \makebox[0.24\textwidth][c]{
            \raisebox{-0.5\height}{\includegraphics[scale=1]{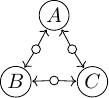}}
        }
        \label{fig:triangle_network}
    }
    \subfloat[The trine network.]{
        \makebox[0.24\textwidth][c]{
            \raisebox{-0.5\height}{\includegraphics[scale=1]{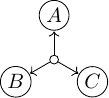}}
        }
        \label{fig:trine_network}
    }
    \caption{The triangle network and the trine network.}
    \label{fig:trine_and_triangle_networks}
\end{figure}

\begin{figure}[htb]
\centering
    \subfloat[Form of a tripartite state that is realizable in a classical version of the trine network; that is, by LOSR operations.]{
        \makebox[0.48\textwidth][c]{
            \raisebox{-0.5\height}{\includegraphics[scale=1.5]{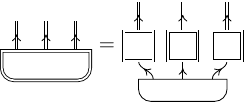}}
        }
        \label{fig:trine_losr}
    } \\
    \subfloat[Form of a tripartite state that is realizable in a quantum version of the trine network; that is, by LOSE operations.]{
        \makebox[0.48\textwidth][c]{
            \raisebox{-0.5\height}{\includegraphics[scale=1.5]{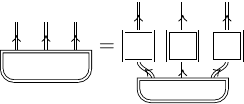}}
        }
        \label{fig:trine_lose}
    }
    \caption{A distinction among tripartite states that can be realized in the trine network.}
    \label{fig:trine_comparison}
\end{figure}

\begin{figure}[htb]
\centering
    \subfloat[Form of a tripartite state that is realizable in a classical version of the triangle network, that is by LO2WSR.]{
        \makebox[0.48\textwidth][c]{
            \raisebox{-0.5\height}{\includegraphics[scale=1.5]{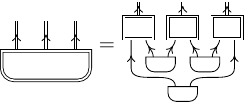}}
        }
        \label{fig:triangle_losr}
    } \\
    \subfloat[Form of a tripartite state that is realizable in a quantum version of the triangle network, that is by LO2WSE.]{
        \makebox[0.48\textwidth][c]{
            \raisebox{-0.5\height}{\includegraphics[scale=1.5]{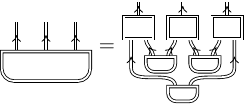}}
        }
        \label{fig:triangle_lose}
    }
    \caption{A distinction among tripartite states that can be realized in the triangle network.}
    \label{fig:triangle_comparison}
\end{figure}

Consider the tripartite states that can be realized in the trine network.    Among such states, the boundary between those having classical correlational properties and those having nonclassical correlational properties is determined by whether the 3-way source can be classical or whether it needs to be quantum.
More precisely, it is the boundary between tripartite states that can be achieved using local operations together with a {\em classical} 3-way source, i.e., shared randomness between the three parties,  termed LOSR and depicted in Fig.~\ref{fig:trine_losr}, and those that can only be achieved using local operations together with a {\em quantum} 3-way source, i.e., shared entanglement between the three parties,  termed LOSE~\cite{Gutoski2008,Schmid2021postquantumcommon} and depicted in Fig.~\ref{fig:trine_lose}.
As is well known, LOSE allows for the realization of {\em any} tripartite state, and the boundary between those states that are realizable by LOSR and those that require LOSE is simply 
 the boundary between the separable and the nonseparable states.
 
Contrast this with the tripartite states that can be realized in the {\em triangle} network.    Among such states, the boundary between those having classical correlational properties and those having nonclassical correlational properties is determined by whether the triple of 2-way sources can be classical or whether one or more of them needs to be quantum.
More precisely, it is the boundary between tripartite states 
that can be achieved using  Local Operations together with sources of 2-Way Shared Randomness (LO2WSR), depicted in Fig.~\ref{fig:triangle_losr}, and those which can only be achieved with Local Operations together with sources of 2-Way Shared Entanglement (LO2WSE), 
depicted in Fig.~\ref{fig:triangle_lose}.\footnote{Note that both LO2WSR and LO2WSE are distinct from 
 Local Operations together with Shared Randomness  and 2-Way sources of Shared Entanglement (LOSR2WSE) discussed in Sec.~\ref{sec:selftesting}.}   
The set of tripartite states that are realizable in a triangle network does {\em not} include all tripartite states.  Indeed, there are both nonseparable and separable states that cannot be realized in such a network~\cite{navascues2020genuine}. 
Furthermore, the boundary between states realizable in this network classically versus those realizable nonclassically does not coincide with the boundary between separable and nonseparable states.

To see that the latter claim is true,
 it is sufficient to consider a simple example, corresponding to the tripartite probability distribution proposed by Fritz~\cite{Fritz2012beyondBell}, but conceptualized as a separable state on a tripartite system, $ABC$, where each subsystem is associated with a 4-dimensional Hilbert space.
  Defining binary variables $A_0$,$A_1$, $B_0$,$B_1$ and $C_0$,$C_1$, this tripartite separable state
 can be expressed as:
\begin{align}\label{Fritzstate}
\rho^{\rm Fritz}_{ABC} &=\sum_{a_0,a_1,b_0,b_1,c_0,c_1} P^{\rm Fritz}_{A_0 A_1 B_0 B_1 C_0 C_1}(a_0 a_1 b_0 b_1 c_0 c_1)\nonumber\\
&\times  | a_0 a_1 \rangle_A \langle a_0 a_1 | \otimes  | b_0 b_1 \rangle_B \langle b_0 b_1 |\otimes | c_0 c_1 \rangle_C \langle c_0 c_1 |,
\end{align}
where $P^{\rm Fritz}_{A_0 A_1 B_0 B_1 C_0 C_1}$ is the tripartite distribution proposed by aseFritz~\cite{Fritz2012beyondBell}:
\begin{align}\label{Fritzdistn}
P^{\rm Fritz}_{A_0 A_1 B_0 B_1 C_0 C_1} = P_{B_1 C_1} P_{A_1 C_0} P_{A_0 B_0 |A_1 B_1},
\end{align}
with
\begin{align}
P_{B_1 C_1} =  (\tfrac{1}{2} \delta_{C_1,0} +\tfrac{1}{2} \delta_{C_1,1}) \delta_{B_1, C_1}\nonumber\\
P_{A_1 C_0} =  (\tfrac{1}{2} \delta_{C_0,0} +\tfrac{1}{2} \delta_{C_0,1}) \delta_{A_1, C_0}\nonumber
\end{align}
and where $P_{A_0 B_0 |A_1 B_1}$ describes a conditional probability distribution that provides the maximum quantum violation of the CHSH inequalities, termed a {\em Tsirelson box}.

By virtue of being separable, this state is realizable classically in the trine network.
 It suffices to use the 3-way source of shared randomness.\footnote{Of source, it is also realizable {\em quantumly} in the 
 trine network since a 3-way source of shared entanglement is strictly more powerful than a 3-way source of shared randomness.} 
 However, in the triangle network, it is {\em only} realizable quantumly and not classically,  That is, given only access to 2-way sources, this state can not be realized if the sources are classical, i.e., if each provides only shared randomness, but it {\em can} be realized if they are quantum, i.e., if they provide shared entanglement.\footnote{We are here restricting attention to the distinction between {\em all} sources being classical and its converse.
The state in question can in fact be realized if the 2-way source between $A$ and $B$ is quantum, while the other pair of 2-way sources are classical. 
}
   The latter claim follows from the results of Ref.~\cite{Fritz2012beyondBell}.  
(It suffices to note that if this separable state {\em could} be classically realized in the triangle network, then by local measurements in the bases in which it is diagonal, one could generate the tripartite probability distribution that is proven in Ref.~\cite{Fritz2012beyondBell} to be not classically realizable in the triangle network.) 
Consequently, for the triangle network, there are separable states, such as the one in Eq.~\eqref{Fritzstate}, that exhibit nonclassical correlational properties insofar as they cannot be realized if the sources in the network are restricted to be of the classical variety.  This proves that the boundary between classical and nonclassical states in the triangle network is {\em not} the boundary between separable and nonseparable states.\footnote{The result can also be proven using the machinery of Ref.~\cite{navascues2020genuine}.}

It is worth noting that the way in which the nonclassicality of the correlational properties of states are {\em quantified} will also differ between the trine and triangle network structures. The free operations in the trine network are LOSR, while the free operations in the triangle network are LO2WSR.
 Since LO2WSR is strictly contained within LOSR, there can be pairs of states that are strictly ordered in the trine network while they are incomparable in the triangle network.\footnote{Although we here advocate for referring to tripartite states that are achievable by LOSE but not by LOSR 
     (i.e., achievable by a circuit of the form of Fig.~\ref{fig:trine_lose}, but not by a circuit of the form of Fig.~\ref{fig:trine_losr})
     as {\em LOSR-entangled}, and the states that are achievable by LOSE but not by LOSR2WSE (i.e., achievable by a circuit of the form of Fig.~\ref{fig:trine_lose}, but not by a circuit of the form of Fig.~\ref{genuinely3way}) as {\em genuine 3-way entangled}, we  do {\em not} advocate for referring to states that are achievable by LO2WSE but not by LO2WSR  (i.e., achievable by a circuit of the form of Fig.~\ref{fig:triangle_lose}, but not by a circuit of the form of Fig.~\ref{fig:triangle_losr})
  as ``entangled''.
  Although it may at first glance seem natural to define a state to be entangled in a given network if and only if it is nonfree relative to having the communication channels and sources in that network being classical,
   such a terminology would conflict with the standard convention of referring to all separable states as unentangled.  As the latter convention is entrenched, it is not advisable to try and overturn it. Rather, it makes more sense to refer to states that are nonfree relative to LO2WSR
  simply as {\em quantumly correlated relative to LO2WSR}.  The sorts of networks for which all of the nonclassical states will be nonseparable are those for which it is possible, using classical sources, to realize {\em any} joint probability distribution over the parties' outputs. }

Similar considerations hold for the classical-nonclassical distinction for correlational properties of boxes.  In a tripartite network with a common source (i.e., a tripartite Bell scenario), the boundary between boxes having classical correlational properties and those having nonclassical correlational properties is the boundary between those that satisfy all the Bell inequalities and those that violate some Bell inequality.  In the triangle network, however, the distinction is picked out by a different set of inequality constraints.  A prescription for finding all of these 
 can be given in terms of the inflation technique for causal inference~\cite{Wolfe2016inflation,WolfeNavascues}. To determine the order over boxes, 
   one again uses LOSR in the trine network,
    but LO2WSR in the triangle network.\footnote{In a Bell scenario, i.e., a network with a common source shared by all the parties, boxes that can only be realized by quantum sources are conventionally termed ``nonlocal''.  In Footnote~\ref{footnote:nolocality}, we noted that this terminology is not particularly good for describing the nonclassicality of correlational properties of boxes.  It is even less suited to describing the nonclassicality of correlational properties of boxes in alternatives to the Bell scenario, such as the triangle network.  
    } 
    
\end{appendices}
\end{document}